\newtheorem{theorem}{Theorem}[section]
\newtheorem{proposition}[theorem]{Proposition}
\newtheorem{lemma}[theorem]{Lemma}
\newtheorem{corollary}[theorem]{Corollary}
\theoremstyle{remark}
\newtheorem{definition}[theorem]{Definition}
\newtheorem{example}[theorem]{Example}
\newtheorem{remark}[theorem]{Remark}
\newtheorem{examplec}{Example}
\def\mbf#1{\mathchoice{\hbox{\boldmath $\displaystyle #1$}}
        {\hbox{\boldmath $\textstyle #1$}}
        {\hbox{\boldmath $\scriptstyle #1$}}
        {\hbox{\boldmath $\scriptscriptstyle #1$}}}
\newcommand{\X}{{\mbf X}}
\newcommand{\Y}{{\mbf Y}}
\newcommand{\K}{{\mbf K}}
\newcommand{\sZb}{\boldsymbol{\mathcal Z}}
\newcommand{\R}{{\mathbb R}}
\newcommand{\HH}{{\mathbb H}}
\newcommand{\E}{{\mathbf E}}
\newcommand{\sA}{\mathcal{A}}
\newcommand{\sM}{\mathcal{M}}
\newcommand{\sZ}{\mathcal{Z}}
\newcommand{\gc}{{\mathbb G}}
\newcommand{\Lip}{\mathrm{Lip}}
\newcommand{\lipc}[1]{\mathsf{L}_{#1}}
\newcommand{\bplus}{+}
\renewcommand{\P}{\mathbf{P}}
\newcommand{\Prob}[1]{\P\{#1\}}
\newcommand{\one}{\mathbf{1}}
\newcommand{\eps}{\varepsilon}
\newcommand{\rhos}{\rho_{\mathrm{s}}}
\newcommand{\rhonot}{\rho_{\mathrm{s},0}}
\newcommand{\maxcor}{\Phi}
\newcommand{\salg}{\mathfrak{F}}
\newcommand{\sel}{L}
\newcommand{\dha}{{\mathfrak{d}}_{\mathrm H}}
\newcommand{\risk}{r}
\newcommand{\vecrisk}{\mathbf{r}}
\newcommand{\numrisk}{\mathsf{r}}
\newcommand{\rhogen}{\rho}
\newcommand{\rhosel}{\rho_{\sZb}}
\newcommand{\rhoselnot}{\rho_{\sZb_0}}
\DeclareMathOperator{\cl}{cl}
\DeclareMathOperator{\essinf}{essinf}
\DeclareMathOperator{\esssup}{esssup}
\DeclareMathOperator{\ES}{ES}
\DeclareMathOperator{\VaR}{VaR}
\newlength{\querylen}
\begin{document}

\begin{center}
  \large\bf MULTIVARIATE RISK MEASURES: A CONSTRUCTIVE APPROACH BASED ON
  SELECTIONS
  \footnote{The authors are grateful to Leonid Hanin for his advice on duality
results in Lipschitz spaces. IM acknowledges the hospitality of the
University Carlos III de Madrid. IM has benefited from discussions
with Qiyu Li, Michael Schmutz and Irina Sikharulidze at various stages
of this work. The second version of this preprint was greatly inspired
by insightful comments of Birgit Rudloff concerning her recent work on
multivariate risk measures. The authors are grateful to the Associate
Editor and the referees for thoughtful comments and encouragement that
led to a greatly improved paper.\\ This version corrects Lemma 7.1 and
Lemma 7.2 in Section 7.}
\end{center}

\begin{center}
  \large 
  \textsc{Ignacio Cascos}\\
  \textit{Universidad Carlos III de Madrid}
\end{center}

\begin{center}
  \large
  \textsc{Ilya Molchanov}\\
  \textit{University of Bern}
  \footnote{IC supported by the Spanish Ministry of
    Science and Innovation Grants No. MTM2011-22993 and
    ECO2011-25706. IM supported by the
    Chair of Excellence Programme of the Universidad Carlos III de
    Madrid and Banco Santander and the Swiss National Foundation Grant
    No. 200021-137527.}
  \footnote{Address correspondence to Ilya Molchanov, Institute of
    Mathematical Statistics and Actuarial Science, University of Bern,
    Sidlerstrasse 5, 3012 Bern, Switzerland; e-mail:
    ilya.molchanov@stat.unibe.ch.} 
\end{center}



\begin{quotation}
  \small
  Since risky positions in multivariate portfolios can be offset by
  various choices of capital requirements that depend on the exchange
  rules and related transaction costs, it is natural to assume that
  the risk measures of random vectors are set-valued. Furthermore, it
  is reasonable to include the exchange rules in the argument of the
  risk measure and so consider risk measures of set-valued
  portfolios. This situation includes the classical Kabanov's
  transaction costs model, where the set-valued portfolio is given by
  the sum of a random vector and an exchange cone, but also a number
  of further cases of additional liquidity constraints.

  We suggest a definition of the risk measure based on calling a
  set-valued portfolio acceptable if it possesses a selection with all
  individually acceptable marginals.  The obtained selection risk
  measure is coherent (or convex), law invariant and has values being
  upper convex closed sets. We describe the dual representation of the
  selection risk measure and suggest efficient ways of approximating
  it from below and from above. In the case of Kabanov's exchange cone
  model, it is shown how the selection risk measure relates to the
  set-valued risk measures considered by Kulikov (2008), Hamel and
  Heyde (2010), and Hamel, Heyde and Rudloff (2013).

  \textsc{Key Words:} exchange cone, random set, selection, set-valued
  portfolio, set-valued risk measure, transaction costs.
\end{quotation}

\section{Introduction}
\label{sec:introduction}

Since the seminal papers by \cite*{art:del:eber:99} and \cite{delb02},
most studies of risk measures deal with the univariate
case, where the gains or liabilities are expressed by a random
variable. We refer to \cite{foel:sch04} and \cite*{mcn:frey:emb05} for
a thorough treatment of univariate risk measures and to
\cite{acc:pen11} for a recent survey of the dynamic univariate
setting.  

Multiasset portfolios in practice are often represented by their total
monetary value in a fixed currency with the subsequent calculation of
univariate risk measures that can be used to determine the overall
capital requirements. The main emphasis is put on the dependency
structure of the various components of the portfolio, see
\cite{bur:rues06,emb:puc06}. Numerical risk measures for a
multivariate portfolio $X$ have been also studied in
\cite*{ekel:gal:hen12} and \cite{rues06}. The key idea is to consider
the expected scalar product of $(-X)$ with a random vector $Z$ and
take supremum over all random vectors that share the same distribution
with $X$ and possibly over a family of random vectors
$Z$. \cite*{far:koc:mun13} defined the scalar risk as the infimum of
the payoff associated with a random vector that being added to the
portfolio renders it acceptable. A vector-valued variant of the
value-at-risk has been suggested in \cite{cous:dib13}.

\bigskip
However, in many natural applications it is necessary to assess the
risk of a vector $X$ in $\R^d$ whose components represent different
currencies or gains from various business lines, where profits/losses
from one line or currency cannot be used directly to offset the
position in a different one. Even in the absence of transaction costs,
the exchange rates fluctuate and so may influence the overall risk
assessment. Also the regulatory requirements may be very different for
different lines (e.g. in the case of several states within the same
currency area), and moving assets may be subject to transaction costs,
taxes or other restrictions. For such cases, it is important to
determine the necessary reserves that should be allocated in each line
(currency or component) of $X$ in order to make the overall position
acceptable. The simplest solution would be to treat each component
separately and allocate reserves accordingly, which is not in the
interest of (financial) agents who might want to use profits from
one line to compensate for eventual losses in other ones.  Thus, in
addition of assessing the risk of the original vector $X$, one can
also evaluate the risk of any other portfolio that may be obtained
from $X$ by allowed transactions. In view of this, it is natural to
assume that the acceptability may be achieved by several (and not
directly comparable) choices of capital requirements that form a set
of possible values for the risk measure. This suggests the idea of
working with \emph{set-valued} risk measures.

Since the first work on multivariate risk measures
\citep*{jouin:med:touz04}, by now it is accepted that multiasset risk
measures can be naturally considered as taking values in the space of
sets, see
\cite*{bent:lep13,cas:mol07,ham:hey10,ham:hey:rud11,kul08}. The risk
measures of random vectors are mostly considered in relation to
Kabanov's transaction costs model, whose main ingredient is a cone
$\K$ of portfolios available at price zero at the chosen time horizon
(also called the \emph{exchange cone}), while the central symmetric
variant of $\K$ is a solvency cone. If $X$ is the terminal gain, then
each random vector with values in $X+\K$ is possible to obtain by
converting $X$ following the rules determined by $\K$. In other words,
instead of measuring the risk of $X$ we consider the whole family of
random vectors taking values in $X+\K$. In relation to this, note that
families of random vectors representing attainable gains are often
considered in the financial studies of transaction costs models, see
e.g. \cite{schach01}.

\bigskip
The set-valued portfolios framework can be related to the classical
setting by replacing a univariate random gain $X$ with half-line
$(-\infty,X]$ and measuring risks of all random variables dominated by
$X$. The monotonicity property of a chosen risk measure $\risk$
implies that all these risks build the set
$\rho(X)=[\risk(X),\infty)$. If $\risk$ is subadditive, then
$\rho(X+Y)\supset \rho(X)+\rho(Y)$. Furthermore, $\risk(X)\leq 0$ if
and only if $\rho(X)$ contains the origin. While in the univariate
case this construction leads to half-lines, in the multivariate
situation it naturally gives rise to so-called upper convex sets, see
\cite{ham:hey10,ham:hey:rud11,kul08}. A portfolio is acceptable if its
risk measure contains the origin and the value of the risk measure is
the set of all $a\in\R^d$ such that the portfolio becomes acceptable
if capital $a$ is added to it.

Note that in the setting of real-valued risk measures adapted in
\cite{ekel:gal:hen12}, the family of all $a\in\R^d$ that make $X+a$
acceptable is a half-space, which apparently only partially reflects
the nature of cone-based transaction costs
models. \cite{far:koc:mun13} establish relation between families of
real-valued risks and set-valued risks from \cite{ham:hey:rud11}. The
setting of Riesz spaces (partially ordered linear spaces), in
particular Fr\'echet lattices and Orlicz spaces, has become already
common in the theory of risk measures, see
\cite{biag:frit08,cher:li09}. However, these spaces are mostly used to
describe the arguments of risk measures whose values belong to the
(extended) real line. Furthermore, the space of sets is no longer a
Riesz space --- while the addition is well defined, the matching
subtraction does not exist. The recent study of risk preferences
\citep{drap:kup13} also concentrates on the case of vector spaces for
arguments of risk measures.

The dual representation for risk measures of random vectors in the
case of a deterministic exchange cone is obtained in \cite{ham:hey10}
and for the random case in \cite{ham:hey:rud11}, see also \cite{kul08}
who considers both deterministic and random exchange cones. However in
the case of a random exchange cone, it does not produce law-invariant risk
measures --- the risk measure in
\cite{ham:hey10,ham:hey:rud11,ham:rud:yan13,kul08} is defined as a
function of a random vector $X$ representing the gain, while
identically distributed gains might exhibit different properties in
relation to the random exchange cone.  Although the dual
representations from \cite{ham:hey10,ham:hey:rud11,kul08} are general,
they are rather difficult to use in order to calculate risks for given
portfolios, since they are given as intersections of half-spaces
determined by a rather rich family of random vectors from the dual
space. Recent advances in vector optimisation have led to a
substantial progress in computation of set-valued risk measures, see
\cite{ham:rud:yan13}.
However, the dual approximation also does not explicitly yield the
relevant trading (or exchange) strategy that determines transactions
suitable to compensate for risks. The construction of set-valued risk
measures from \cite{cas:mol07} is based on the concept of the
depth-trimmed region, and their values are easy to calculate
numerically or analytically, but it only applies for deterministic
exchange cones and often results in marginalised risks (so that the
risk measure is a translate of the solvency cone).

\bigskip
In order to come up with a law invariant risk measure and also cover
the case of random exchange cones, we assume that the argument of a
risk measure is a \emph{random closed set} that consists of all
attainable portfolios. This random set may be the sum $X+\K$ of a
random vector $X$ and the exchange cone $\K$ (which has been the most
important example so far) or may be defined otherwise. For instance,
if only a linear space of portfolios $M$ is available for
compensation, then the set of attainable portfolios is the
intersection of $X+\K$ with $M$. 
In any such case we speak about a set-valued portfolio $\X$. This
guiding idea makes it possible to work out the law-invariance property
of risk measures and naturally arrive at set-valued risks.

\bigskip
In this paper we suggest a rather simple and intuitive way to measure
risks for set-valued portfolios based on considering the family of all
terminal gains that may be attained after some exchanges are
performed. The crucial step is to consider all random vectors taking
values in a random set $\X$ (\emph{selections} of $\X$) as possible
gains and regard the random set acceptable if it possesses a selection
with all acceptable components. In view of this, we do not only
determine the necessary capital reserves, but also the way of
converting the terminal value of the portfolio into an acceptable
one. 

In the case of exchange cones, we relate our construction to the dual
representation from \cite{ham:hey10} and \cite{kul08}.  Throughout the
paper we concentrate on the coherent case and one-period setting, but
occasionally comment on non-coherent
generalisations. \cite{fein:rud14c} thoroughly analyse and compare
various approaches, including one from this paper, in view of defining
multiperiod set-valued risks, see also
\cite{fein:rud14fs}. 

\begin{example}
  \label{ex:intro}
  Let $X=(X_1,X_2)$ represent terminal gains on two business lines
  expressed in two different currencies. Assume that $X_1$ and $X_2$
  are i.i.d. normally distributed with mean $0.5$ and variance
  $1$. Assume that the exchanges between currencies are free from
  transaction costs with the initial exchange rate $\pi_0=1.5$ (number
  of units of the second currency to buy one unit of the first one),
  the terminal exchange rate $\pi$ is lognormal with mean $1.5$ and
  volatility $0.4$, and $\pi$ is independent of $X$. The set-valued
  portfolio $\X$ is a half-plane with the boundary passing through $X$
  and normal $(\pi,1)$.
  
  Assume that necessary capital reserves are determined using the
  expected shortfall $\ES_{0.05}$ at level $0.05$, see
  \cite{acer:tas02}.  If compensation between business lines is not
  allowed, the necessary capital reserves at time zero are given by
  $\ES_{0.05}(X_1)+\pi_0^{-1}\ES_{0.05}(X_2)\approx 2.6045$ (all
  numbers are given in the units of the first currency). If the
  terminal gains are transferred to one currency, then the needed
  reserves are given by $\ES_{0.05}(X_1+\pi^{-1}X_2)\approx 1.801$ and
  $\pi_0^{-1}\ES_{0.05}(\pi X_1+X_2)\approx 1.784$ respectively in the
  case of transfers to the first and the second currency.  These
  values correspond to evaluating the risks of selections of $\X$
  located at the points of intersection of the boundary of $\X$ with
  coordinate axes.

  However, it is possible to choose a selection that further reduces the
  required capital requirements. After transferring to the second
  asset $(X_1-\pi X_2)/(1+\pi^2)$ units of the first currency, we
  arrive at the selection of $\X$ given by
  \begin{equation}
    \label{eq:sel-p}
    \xi=(\xi_1,\xi_2)=\left(\frac{\pi(X_1\pi+X_2)}{1+\pi^2},
    \frac{X_1\pi+X_2}{1+\pi^2}\right)
  \end{equation}
  obtained by projecting the origin onto the boundary of $\X$. In this
  case the needed reserves are
  $\ES_{0.05}(\xi_1)+\pi_0^{-1}\ES_{0.05}(\xi_2)\approx 1.661$.
  The situation of random frictionless exchanges is also considered
  analytically in Example~\ref{ex:half-random} and numerically in
  Examples~\ref{ex:two-currencies} and \ref{ex:restricted-liquidity}. 

  Assume that now liquidity restrictions are imposed meaning that at
  most one unit of each currency may be obtained after conversion from
  the other. This framework corresponds to dealing with a non-conical
  set-valued portfolio $\Y=\X\cap (X+(1,1)+\R_-^2)$. A reasonable
  strategy would be to use selection $\xi$ from \eqref{eq:sel-p} if
  $(\pi X_2- X_1)/(1+\pi^2)\leq 1$ and $\pi(X_1-\pi
  X_2)/(1+\pi^2)\leq 1$ and otherwise choose the nearest point to
  $\xi$ from the extreme points of $\Y$. The corresponding selection
  is given by
  \begin{equation}
    \label{eq:eta-intro}
    \eta=(\eta_1,\eta_2)=
    \begin{cases}
      (X_1+1,X_2-\pi) & \textrm{if }\pi X_2-X_1>1+\pi^2\,, \\
      (X_1-\pi^{-1},X_2+1) & \textrm{if }\pi X_1-\pi^2X_2>1+\pi^2\,, \\
      \xi & \textrm{otherwise}\,,
    \end{cases}
  \end{equation}
  and the needed reserves are
  $\ES_{0.05}(\eta_1)+\pi_0^{-1}\ES_{0.05}(\eta_2)\approx 1.735$,
  which is higher than those corresponding to the choice of $\xi$ in
  view of imposed liquidity restrictions. 
\end{example}

\bigskip
Before describing the structure of the paper, we would like to point
out that our approach is \emph{constructive} in the sense that instead
of starting with an axiomatic definition of a set-valued risk measure
we explicitly construct one based on selections of a random portfolio
and univariate marginal risk measures. Then we show that the
constructed risk measure indeed satisfies the desired properties of
set-valued risk measures, in particular, the coherency and the Fatou
properties, instead of imposing them. We show how to approximate the
values of the risk measure from below (which is the aim of the market
regulator) and from above (as the agent would aim to do). The
suggested bounds provide a feasible alternative to exact calculations
of risk. 
Furthermore, the computational burden is passed to the agent who aims
to increase the family of selections in order to obtain a tighter
approximation from above and so reduce the capital requirements, quite
differently to the dual constructions of \cite{ham:hey:rud11} and
\cite{kul08}, where the market regulator faces the task of making the
acceptance criterion more stringent by approximating from below the
exact value of the risk measure. It should be noted our approach
constitutes just one possible way to construct multivariate risk
measures (and the corresponding acceptance sets) that satisfy the
axioms of set-valued coherent risk measures from \cite{ham:hey:rud11}.

\bigskip
Section~\ref{sec:acceptance-sets} introduces the concept of set-valued
portfolios and the definition of set-valued risk measures for
set-valued portfolios adapted from \cite{ham:hey:rud11},
where the conical setting was
considered. Section~\ref{sec:select-risk-meas} defines the
\emph{selection risk measure}, which relies on $d$ univariate risk
measures applied to the components of selections for a set-valued
portfolio. In particular, the coherency of the selection risk measure
is established in Theorem~\ref{thr:rhos}. While throughout the paper
we work with coherent risk measures defined on $L^p$ spaces with
$p\in[1,\infty]$, the construction can be also based on convex
non-coherent and non-convex univariate risk measures, so that it
yields their non-coherent set-valued analogues, such as the
value-at-risk.

Section~\ref{sec:bounds-set-valued} derives lower and upper bounds for
risk measures. Section~\ref{sec:rand-exch-cones} is devoted to the
setting of \emph{exchange cones} (or conical market models) that has
been in the centre of attention in all other works on multiasset
risks. It is shown that, for the exchange cones setting, the lower
bound corresponds to the dual representation of risk measures from
\cite{ham:hey10,ham:hey:rud11} and \cite{kul08}. For deterministic
exchange cones the bounds become even simpler and in the case of
comonotonic portfolios the risk measure admits an easy expression.

We briefly comment on scalarisation issues in
Section~\ref{sec:numer-risk-meas}, i.e. explain relationships to
univariate risk measures constructed for set-valued portfolios, which
in the case of a deterministic exchange cone are related to those
considered in \cite{ekel:gal:hen12,rues06}  and \cite{far:koc:mun13}.

Section~\ref{sec:cont-dual-repr} establishes the dual representation
of the selection risk measures. While the idea is to handle set-valued
portfolios through their support functions, the key difficulty
consists in dealing with possibly unbounded values of the support
functions. For this, we introduce the Lipschitz space of random sets
and specify the weak-star convergence in this space in order to come
up with a general dual representation for set-valued risk measures
with the Fatou property. Theorems~\ref{thr:fp-rhos},~\ref{thr:fp},
and~\ref{thr:fatou-det-K} establish the Fatou property of the selection
risk measure under some conditions 
and so yield the closedness of its values and the
validity of the dual representation. In the deterministic
exchange cone model and for random exchange cones with
$p\in(1,\infty)$, the selection risk measure has the same dual
representation as in \cite{ham:hey10,kul08}.

Section~\ref{sec:comp-appr-risk} presents several numerical examples
of set-valued risk measures covering the exchange cone setting, the
frictionless case, and liquidity restrictions. The algorithms used to
approximate risk measures are transparent and easy to implement in
comparison with a considerably more sophisticated set-optimisation
approach from \cite{loeh11} used in \cite{ham:rud:yan13} in order to
come up with exact values of set-valued risk measures in conical
models. A particular computational advantage is due to the use of the
primal representation of selection risk measures in order to compute upper
bounds, while utilising the dual representation to arrive at lower
bounds.

\section{Set-valued portfolios and risk measures}
\label{sec:acceptance-sets}

\subsection{Operations with sets}
\label{sec:operations-with-sets}

In order to handle set-valued portfolios, we need to define
several important operations with sets in $\R^d$. The closure
of a set $M$ is denoted by $\cl(M)$.  Further,
\begin{displaymath}
  \check{M}=\{-x:\; x\in M\}
\end{displaymath}
denotes the centrally symmetric set to $M$.  The sum $M+L$ of two
(deterministic) sets $M$ and $L$ in a linear space is defined as the
set $\{x+y:\; x\in M,\; y\in L\}$. If one of the summands is compact
and the other is closed, the set of pairwise sums is also closed. In
particular, the sum $x+M$ of a point and a set is given by $\{x+y:\;
y\in M\}$. For instance, $x+\R_-^d$ is the set of points dominated by
$x$, where $\R_-^d=(-\infty,0]^d$. Denote $\R_+^d=[0,\infty)^d$.

The \emph{norm} of a set $M$ is defined as $\|M\|=\sup\{\|x\|:\; x\in
M\}$, where $\|x\|$ is the Euclidean norm of $x\in\R^d$. A set $M$ is
said to be \emph{upper}, if $x\in M$ and $x\leq y$ imply that $y\in
M$, where all inequalities between vectors are understood
coordinatewisely. Inclusions of sets are always understood in the
non-strict sense, i.e. $M\subset L$ allows for the equality $M=L$.

The $\eps$-envelope $M^\eps$ of a closed set $M$ is defined as the set
of all points $x$ such that the distance between $x$ and the nearest
point of $M$ is at most $\eps$. The \emph{Hausdorff distance}
$\dha(M_1,M_2)$ between two closed sets $M_1$ and $M_2$ in $\R^d$ is
the smallest $\eps\geq 0$ such that $M_1\subset M_2^\eps$ and
$M_2\subset M_1^\eps$.
The Hausdorff distance metrises the family of compact sets, while it
can be infinite for unbounded sets.

The \emph{support function} (see \cite[Sec.~1.7]{schn}) of a set $M$
in $\R^d$ is defined as
\begin{displaymath}
  h_M(u)=\sup\{\langle u,x\rangle:\; x\in M\}\,,\qquad u\in\R^d\,,
\end{displaymath}
where $\langle u,x\rangle$ denotes the scalar product.  The support
function may take infinite values if $M$ is not bounded. Denote by
\begin{displaymath}
  M'=\{u:\; |h_M(u)|\neq\infty\}
\end{displaymath}
the \emph{effective domain} of the support function of $M$. The set
$M'$ is always a convex cone in $\R^d$. If $K$ is a cone in $\R^d$,
then $K'$ equals the \emph{dual cone} to $K$ defined as
\begin{equation}
  \label{eq:k-dual}
  K'=\{u\in\R^d:\,\langle u,x\rangle\leq 0\,
  \textrm{ for all }x\in K\}\,.
\end{equation}

\subsection{Set-valued portfolios}
\label{sec:set-valu-portf}

Let $\X$ be an almost surely non-empty \emph{random closed convex set}
in $\R^d$ (shortly called random set) that represents all feasible
terminal gains on $d$ assets expressed in physical units. The random
set $\X$ is called \emph{set-valued portfolio}. Assume that
$\X$ is defined on a complete non-atomic probability space
$(\Omega,\salg,\P)$. Any attainable terminal gain is a random vector
$\xi$ that almost surely takes values from $\X$, i.e.\ $\xi\in\X$
a.s., and such $\xi$ is called a \emph{selection} of $\X$.
We refer to \cite{mo1} for the modern mathematical theory of random
sets.

Since the free disposal of assets is allowed, with each point $x$, the
set $\X$ also contains all points dominated by $x$ coordinatewisely
and so $\X$ is a \emph{lower} set in $\R^d$. The \emph{efficient part}
of $\X$ is the set $\partial^+\X$ of all points $x\in\X$ such that no
other point of $\X$ dominates $x$ in the coordinatewise order. While
$\X$ itself is never bounded, $\X$ is called \emph{quasi-bounded} if
$\partial^+\X$ is a.s. bounded.

Fix $p\in[1,\infty]$ and consider the space $L^p(\R^d)$ of
$p$-integrable random vectors in $\R^d$ defined on
$(\Omega,\salg,\P)$.  The reciprocal $q$ is defined from
$p^{-1}+q^{-1}=1$. The $L^p$-norm of $\xi$ is denoted by
$\|\xi\|_p$. Furthermore, the family of $p$-integrable selections of
$\X$ is denoted by $\sel^p(\X)$, and $\sel^\infty(\X)$ is the family of
all essentially bounded selections.

In the following we assume that $\X$ is \emph{$p$-integrable},
i.e. $\X$ possesses at least one $p$-integrable selection.  A random
closed set is called $p$-integrably bounded if its norm if
$p$-integrable (a.s. bounded if $p=\infty$). In the case of
set-valued portfolios, this property is considered for $\partial^+\X$.

The closed sum $\X+\Y$ is defined as the random set being the closure
of $\X(\omega)+\Y(\omega)$ for $\omega\in\Omega$. It is shown in
\cite[Th.~1.4]{hia:ume77} (see also \cite[Prop.~2.1.4]{mo1}) that the
set of $p$-integrable selections of $\X+\Y$ coincides with the norm
closure of $\sel^p(\X)+\sel^p(\Y)$ if $p\in[1,\infty)$. If $p=\infty$,
then $L^\infty(\X+\Y)$ is a norm closed set that contains the closed
sum $\sel^\infty(\X)+\sel^\infty(\Y)$.

\subsection{Examples of set-valued portfolios}
\label{sec:examples-set-valued}


\begin{example}[Univariate portfolios]
  \label{ex:univ-port}
  If $d=1$, then $\X=(-\infty,X]$ is a half-line and the monotonicity
  of risks implies that it suffices to consider only its upper bound
  $X$ as in the classical theory of risk measures.
\end{example}

\begin{example}[Exchange cones]
  \label{ex:cone}
  Let $X\in L^p(\R^d)$ represent gains from $d$ assets. Furthermore,
  let $\K\supset\R^d_-$ be a convex (distinct from the whole space and
  possibly random) \emph{exchange cone} representing the family of
  portfolios available at price zero. Its symmetric variant
  $\check{\K}$ is the solvency cone, while the dual cone $\K'$
  contains all consistent price systems, see
  \cite{kab:saf09,schach01}. Formally, $\K$ is a random closed set
  with values being cones.
  Define $\X=X+\K$, so that selections of $\X$ correspond to
  portfolios that are possible to obtain from $X$ following the
  exchange rules determined by $\K$. 
  
  If $\K$ does not contain any line (and is called a proper cone),
  then the market has an \emph{efficient friction}. Otherwise, some
  exchanges are free from transaction costs. In this case different
  random vectors $X$ yield the same portfolio $X+\K$, which is also an
  argument in favour of working directly with set-valued
  portfolios. The cone $\K$ is a half-space if and only if all
  exchanges do not involve transaction costs. In difference to
  \cite{bent:lep13}, our setting does not require that the exchange
  cone is proper.

  If $\K$ is deterministic, then we denote it by $K$. If $K=\R_-^d$,
  no exchanges are allowed.
\end{example}

\begin{example}[Cones generated by bid-ask matrix]
  \label{ex:bm-cone}
  In the case of $d$ currencies, the cone $\K$ is usually generated by a
  \emph{bid-ask matrix}, as in Kabanov's transaction costs model, see
  \cite{kab99,schach01}. Let $\Pi=(\pi^{(ij)})$ be a (possibly
  random) matrix of exchange rates, so that $\pi^{(ij)}$ is the number
  of units of currency $i$ needed to buy one unit of currency $j$. It
  is assumed that the elements of $\Pi$ are positive, the diagonal
  elements are all one and $\pi^{(ij)}\leq\pi^{(ik)}\pi^{(kj)}$
  meaning that a direct exchange is always cheaper than a chain of
  exchanges.  The cone $\K$ describes the family of portfolios
  available at price zero, so that $\K$ is spanned by vectors $-e_i$
  and $e_j-\pi^{(ij)}e_i$ for $i,j=1,\dots,d$, where $e_1,\dots,e_d$
  are standard basis vectors in $\R^d$.
  If the gain $X$ contains derivatives drawn on the exchange rates,
  then we arrive at the situation when $X$ and the exchange cone $\K$
  are dependent.
\end{example}

\begin{example}[Conical setting with constraints]
  \label{ex:eligible}
  It is possible to modify the conical setting by requiring that all
  positions acquired after trading are subject to some linear or other
  constraints, see e.g. \cite{far:koc:mun13}. This amounts to
  considering the intersection of $\X=X+\K$ with a linear subspace or
  a more general subset of $\R^d$, that (if a.s. non-empty) results in
  a possibly non-conical set-valued portfolio and provides another
  motivation for working with set-valued portfolios.
\end{example}

\cite{pen:pen10} study in depth not necessarily conical transaction
costs models in view of the no-arbitrage property, see also
\cite{kav:mol05}. One of the most important examples is the model of
currency markets with liquidity costs or exchange constraints.

The following examples describe several non-conical models that yield
quasi-bounded set-valued portfolios. Despite the fact that some of
them are generated by random vectors, it is essential to treat these
portfolios as random sets, e.g. for possible diversification
effects. The latter means that a sum of such set-valued portfolios is
not necessarily equal to the set-valued portfolio generated by the sum
of the generating random vectors.

\begin{example}[Restricted liquidity]
  \label{ex:illiquid}
  Let $\X=X+K$, where $K=\{x:\; \sum x_i\leq 0, x_i\leq
  1,i=1,\dots,d\}$. 
  Then the exchanges up to the unit volume are at the unit rate free
  from transaction costs while other exchanges are not allowed. A
  similar example with transaction costs and a random exchange cone
  $\K$ can be constructed as $\X=X+(\K\cap (a+\R_-^d))$ for some
  $a\in\R_+^d$. A more general variant from \cite[Ex.~2.4]{pen:pen10}
  models liquidity costs depending on the transaction's volume.
\end{example}

\begin{example}
  \label{ex:port-poly}
  Let $X^{(1)},\dots,X^{(n)}$ be random vectors in $\R^d$ that
  represent terminal gains in $d$ lines (e.g. currencies) of $n$
  investments.  The random set $\X$ is defined as the set of all points
  in $\R^d$ dominated by at least one convex combination of the
  gains. 
  In other words, $\X$ is the sum of $\R_-^d$ and the convex hull of
  $X^{(1)},\dots,X^{(n)}$. 

  If $X^{(1)}=(X_1,X_2)$ and $X^{(2)}=(X_2,X_1)$ for $d=2$ and a
  bivariate random vector $(X_1,X_2)$, then $\X$ describes an
  arbitrary profit allocation between two different lines without
  transaction costs up to the amount $|X_1-X_2|$.
\end{example}

\begin{example}
  \label{eq:ball}
  Assume that $\X=X+B_R+\R_-^d$, where $B_R$ is the ball of
  fixed radius $R$ centred at the origin. This model corresponds to
  the case, when infinitesimally small transactions are free to
  exchange at the rate that depends on the balance between the
  portfolio components.
\end{example}

\begin{example}[Transactions maintaining solvency]
  \label{ex:short-sales}
  Let $\K$ be an exchange cone from Example~\ref{ex:cone} and let $X$
  be the value of a portfolio. Define $\X$ to be the set of points
  coordinatewisely dominated by a point from $(X+\K)\cap\check{\K}$ if
  $X$ belongs to the solvency cone $\check{\K}=\{-x:\; x\in \K\}$, and
  $\X=X+\R_-^d$ if $X\notin\check{\K}$.
  In this case no transactions are allowed in
  the non-solvent case and otherwise all transactions should maintain
  the solvency of the portfolio. 
\end{example}

\subsection{Set-valued risk measure}
\label{sec:set-valued-risk}

The following definition is adapted from
\cite{ham:hey10,ham:hey:rud11} and \cite{kul08}, where it appears in
the exchange cones setting.

\begin{definition}
  \label{def:svrm}
  A function $\rhogen(\X)$ defined on $p$-integrable set-valued
  portfolios is called a \emph{set-valued coherent risk measure} if it
  takes values being upper convex sets and satisfies the following
  conditions.
  \begin{enumerate}
  \item $\rhogen(\X+a)=\rhogen(\X)-a$ for all $a\in\R^d$ (cash
    invariance).
  \item If $\X\subset\Y$ a.s., then $\rhogen(\X)\subset\rhogen(\Y)$
    (monotonicity).
  \item $\rhogen(c\X)=c\rhogen(\X)$ for all $c>0$ (homogeneity).
  \item $\rhogen(\X\bplus\Y)\supset \rhogen(\X)+\rhogen(\Y)$
    (subadditivity).
  \end{enumerate}
  The risk measure $\rhogen$ is said to be \emph{closed-valued} if its
  values are closed sets.  Furthermore, $\rhogen$ is said to be a
  \emph{convex} set-valued risk measure if the homogeneity and
  subadditivity conditions are replaced by
  \begin{displaymath}
    \rhogen(t\X\bplus(1-t)\Y)\supset t\rhogen(\X)+(1-t)\rhogen(\Y)\,.
  \end{displaymath}
\end{definition}

The names for the subadditivity and convexity properties are justified
by the fact that sets can be ordered by the reverse inclusion; we
follow \cite{ham:hey10,ham:hey:rud11} in this respect.
Definition~\ref{def:svrm} appears in \cite{kul08} and
\cite{ham:hey10,ham:hey:rud11}, with the argument of $\rhogen$ being a
random vector $X$ and for a fixed exchange cone $\K$, which in our
formulation means that the argument of $\rhogen$ is the random set
$X+\K$.

The set-valued portfolio $\X$ is \emph{acceptable} if
$0\in\rhogen(\X)$. The subadditivity of $\rhogen$ means that the
acceptability of $\X$ and $\Y$ entails the acceptability of
$\X\bplus\Y$, as in the classical case of coherent risk measures.
In the univariate case, $\X=(-\infty,X]$ and
$\rhogen(\X)=[\risk(X),\infty)$ for a coherent risk measure $\risk$,
so that $\X$ is acceptable if and only if $\risk(X)\leq 0$.  If
$0\in\X$ a.s., then $\X$ is acceptable under any closed-valued
coherent risk measure. Indeed, then
$\rhogen(\X)\supset\rhogen(\R_-^d)$, while $\rhogen(\R_-^d)$ contains
the origin by the homogeneity and subadditivity properties and the
closedness of the values for $\rhogen$. For a non-coherent
$\rhogen$, it is sensible to extra impose the normalisation condition
$\rhogen(K)=\check{K}$ for each deterministic exchange cone $K$.

\begin{examplec}[Capital requirements in the exchange cones setting]
  \label{rem:cap-req}
  The value of the risk measure $\rhogen(\X)$ determines 
  capital amounts $a\in\R^d$ that make $\X+a$ acceptable.  The
  necessary capital should be allocated at time zero, when the
  exchange rules are determined by a non-random exchange cone
  $K_0$. Thus, the initial capital $x$ should be chosen so that
  $x+K_0$ intersects $\rhogen(X+\K)$, and
  \begin{displaymath}
    A_0=\rhogen(X+\K)+\check{K}_0
  \end{displaymath}
  is the family of all possible initial capital requirements.  Optimal
  capital requirements are given by the extremal points from $A_0$ in
  the order generated by the cone $K_0$.  If $\K=K_0$ is not random,
  $A_0=\rhogen(X+K_0)$. If $K_0$ is a half-space, meaning that the
  initial exchanges are free from transaction costs, then $A_0$ is a
  half-space too. In this case, the sensible initial capital is given
  by the tangent point to $\rhogen(X+\K)$ in direction of the normal
  to $K_0$, see Example~\ref{ex:intro}.

  If $A_0$ is the whole space, which might be the case, for instance,
  if $\K$ and $K_0$ are two different half-spaces, then
  it is possible to release an infinite capital from the position, and
  this situation should be excluded for the modelling purposes.
\end{examplec}

\section{Selection risk measure for set-valued portfolios}
\label{sec:select-risk-meas}

\subsection{Acceptability of set-valued portfolios}
\label{sec:accept-set-valu}

Below we explicitly construct set-valued risk measures based on
selections of $\X$. Let $\risk_1,\dots,\risk_d$ be law invariant
coherent risk measures defined on the space $L^p(\R)$ with values in
$\R$. Furthermore, assume that each $\risk_i$ satisfies the Fatou
property, which for $p=\infty$ follows from the law invariance
\citep{jouin:sch:touz06} and for $p\in[1,\infty)$ is always the case
if $\risk_i$ takes only finite values, see
\cite[Th.~3.1]{kain:rues09}. For a random vector
$\xi=(\xi_1,\dots,\xi_d)\in L^p(\R^d)$ write
\begin{displaymath}
  \vecrisk(\xi)=(\risk_1(\xi_1),\dots,\risk_d(\xi_d))\,.
\end{displaymath}
Random vector $\xi$ is said to be \emph{acceptable} if
$\vecrisk(\xi)\leq 0$, i.e. $\risk_i(\xi_i)\leq0$ for all
$i=1,\dots,d$. This is exactly the case if portfolio $\X=\xi+\R_-^d$
is acceptable with respect to the set-valued measure 
\begin{displaymath}
  \rhogen(\X)=\vecrisk(\xi)+\R_+^d=\times_{i=1}^d [\risk_i(\xi_i),\infty)\,.
\end{displaymath}
It is a special case of the \emph{regulator risk measure} considered
in \citep{ham:rud:yan13}. The following definition suggests a possible
acceptability criterion for set-valued portfolios that leads to a risk
measure satisfying the axioms from Definition~\ref{def:svrm}.  

\begin{definition}
  A $p$-integrable set-valued portfolio $\X$ is said to be
  \emph{selection acceptable} (in the following simply called
  acceptable) if $\vecrisk(\xi)\leq0$ for at least one selection
  $\xi\in\sel^p(\X)$.
\end{definition}

The monotonicity property of univariate risk measures
$\risk_1,\dots,\risk_d$ implies that $\X$ is acceptable if and
only if its efficient part $\partial^+\X$ admits an acceptable
selection. 

\begin{examplec}
  The acceptability of $\X=X+\K$ means that it is possible to transfer
  the assets given by the components of $X$ according to the exchange
  rules determined by $\K$, so that the resulting random vector
  $X+\eta$ with $\eta\in\sel^p(\K)$ has all acceptable components.
\end{examplec}

\begin{remark}[Generalisations]
  \label{rem:general}
  The acceptability of selections can be judged using any other
  multivariate coherent risk measure, e.g. considered in
  \cite{ham:hey:rud11}, that are not necessarily of point plus cone
  type, or numerical multivariate risk measures from
  \cite{bur:rues06,ekel:gal:hen12} and
  \cite{far:koc:mun13}. Furthermore, it is possible to consider
  acceptability of a general convex subset of $L^p(\R^d)$ that might
  not be interpreted as the family of selections of a set-valued
  portfolio. This may be of advantage in the dynamic setting
  \citep{fein:rud14c} or
  for considering uncertainty models as in \citep{bion:ker12}.
\end{remark}

\subsection{Coherency of selection risk measure}
\label{sec:coher-sel-risk}

\begin{definition}
  \label{def:rhonot}
  The \emph{selection risk measure} of $\X$ is defined as
  the set of deterministic portfolios $x$ that make $\X+x$ acceptable,
  i.e.
  \begin{equation}
    \label{eq:rhonot}
    \rhonot(\X)=\{x\in\R^d:\,\X+x\;\text{is acceptable}\}\,.
  \end{equation}
  Its closed-valued variant is $\rhos(\X)=\cl\rhonot(\X)$.
\end{definition}

\begin{theorem}
  \label{thr:rhos}
  The selection risk measure $\rhonot$ defined by \eqref{eq:rhonot}
  and its closed-valued variant $\rhos$ are law invariant set-valued
  coherent risk measures, and
  \begin{equation}
    \label{eq:union}
    \rhonot(\X)=\bigcup_{\xi\in\sel^p(\X)} (\vecrisk(\xi)+\R_+^d)\,.
  \end{equation}
\end{theorem}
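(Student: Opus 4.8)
The strategy is to establish the identity \eqref{eq:union} first, since all the risk-measure axioms for $\rhonot$ follow quite mechanically from it together with Lemma~\ref{lem:msum} and the coherency of the univariate $\risk_i$; the closed-valued case $\rhos$ then follows by taking closures and checking that the four properties survive. For \eqref{eq:union}, I would argue both inclusions directly from Definition~\ref{def:rhonot}. If $x\in\rhonot(\X)$, then $\X+x$ is acceptable, so there is some $\zeta\in\sel^p(\X+x)$ with $\vecrisk(\zeta)\le 0$; writing $\xi=\zeta-x\in\sel^p(\X)$ and using cash invariance of each $\risk_i$ coordinatewise gives $\vecrisk(\xi)=\vecrisk(\zeta)+x\le x$, hence $x\in\vecrisk(\xi)+\R_+^d$. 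Conversely, if $x\ge\vecrisk(\xi)$ for some $\xi\in\sel^p(\X)$, then $\xi+x\in\sel^p(\X+x)$ and $\vecrisk(\xi+x)=\vecrisk(\xi)-x\le 0$, so $\X+x$ is acceptable and $x\in\rhonot(\X)$. Note this also shows $\rhonot(\X)$ is an upper set, and each $\vecrisk(\xi)+\R_+^d$ is convex, though the union need not be — convexity of $\rhonot(\X)$ will come instead from subadditivity plus homogeneity.

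Next I would verify the four axioms for $\rhonot$. Cash invariance $\rhonot(\X+a)=\rhonot(\X)-a$ is immediate from \eqref{eq:rhonot}: $(\X+a)+x$ is acceptable iff $\X+(x+a)$ is. Monotonicity is equally immediate: if $\X\subset\Y$ a.s. then $\sel^p(\X)\subset\sel^p(\Y)$, so any selection witnessing acceptability of $\X+x$ also witnesses that of $\Y+x$. Homogeneity $\rhonot(c\X)=c\,\rhonot(\X)$ for $c>0$ follows from \eqref{eq:union} and the positive homogeneity of each $\risk_i$, since $\sel^p(c\X)=c\,\sel^p(\X)$ and $\vecrisk(c\xi)=c\,\vecrisk(\xi)$. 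Subadditivity is the one place where real content enters: given $x\in\rhonot(\X)$ and $y\in\rhonot(\Y)$ witnessed by selections $\xi\in\sel^p(\X)$, $\eta\in\sel^p(\Y)$ with $\vecrisk(\xi)\le x$, $\vecrisk(\eta)\le y$, the vector $\xi+\eta$ is a $p$-integrable selection of the (possibly non-closed) sum $\X+\Y$, hence of the closed sum $\X\bplus\Y$; then subadditivity of each $\risk_i$ gives $\vecrisk(\xi+\eta)\le\vecrisk(\xi)+\vecrisk(\eta)\le x+y$, so $x+y\in\rhonot(\X\bplus\Y)$, i.e. $\rhonot(\X)+\rhonot(\Y)\subset\rhonot(\X\bplus\Y)$. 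Combining subadditivity with homogeneity gives convexity of the value set $\rhonot(\X)$ (take $x,x'\in\rhonot(\X)$, write $tx+(1-t)x'$ using $\rhonot(t\X)+\rhonot((1-t)\X)\subset\rhonot(\X\bplus\X)$ and note $\X\bplus\X\supset\X$ when $0\in$ the difference — more cleanly, $t\X\bplus(1-t)\X=\X$ up to closure since $\X$ is convex), so $\rhonot$ indeed takes values in upper convex sets. Law invariance follows because the $\risk_i$ are law invariant and the property ``$\X+x$ admits an acceptable selection'' depends only on the distribution of $\X$; selections of equally distributed random sets can be matched in distribution on the non-atomic space.

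Finally, for $\rhos=\cl\rhonot$: cash invariance and homogeneity commute with closure; monotonicity is preserved since closure is monotone; and subadditivity $\rhos(\X)+\rhos(\Y)\subset\rhos(\X\bplus\Y)$ follows because $\cl A+\cl B\subset\cl(A+B)$ when one notes that the left side's closure is already contained in $\cl(\rhonot(\X)+\rhonot(\Y))\subset\cl\rhonot(\X\bplus\Y)$; one should be slightly careful here that the Minkowski sum of two closed upper sets need not be closed, but since both are upper sets containing translates of $\R_+^d$, their sum is again upper and its closure behaves well. The closed values are convex because the closure of a convex set is convex, and upper because the closure of an upper set is upper. The main obstacle I anticipate is not any single step but making the subadditivity argument fully rigorous, namely justifying that $\xi+\eta\in\sel^p(\X\bplus\Y)$ — this is exactly where Lemma~\ref{lem:msum} is needed, giving $\sel^p(\X\bplus\Y)=\cl(\sel^p(\X)+\sel^p(\Y))$, so that $\xi+\eta$ lies in (indeed the non-closed part of) the selection set — and in handling the $p=\infty$ case where ``closure'' means weak-star closure throughout, which requires checking that $\vecrisk$ and the ordering interact correctly with weak-star limits via the Fatou property assumed of each $\risk_i$.
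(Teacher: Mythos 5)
Everything in your proposal up to law invariance is sound and follows essentially the same route as the paper: the two inclusions giving \eqref{eq:union}, cash invariance and monotonicity from the definition, homogeneity from positive homogeneity of the $\risk_i$, subadditivity from $\xi+\eta\in\sel^p(\X\bplus\Y)$ together with subadditivity of the components, and the passage to closures for $\rhos$ (your observation that the closure of an upper set is upper is correct). Your derivation of convexity of the value set from homogeneity, subadditivity and $t\X+(1-t)\X=\X$ for convex $\X$ is a legitimate minor variant of the paper's direct argument with $\lambda\xi+(1-\lambda)\eta$.

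The genuine gap is law invariance. You assert that ``the property `$\X+x$ admits an acceptable selection' depends only on the distribution of $\X$'' because ``selections of equally distributed random sets can be matched in distribution on the non-atomic space.'' This is precisely the point that is \emph{not} true in general: identically distributed random closed sets (even on the same non-atomic probability space) can have genuinely different families of selection distributions, see \cite[p.~32]{mo1}. For instance, if $\Y=\{V,V+1\}$ with $V$ generating the whole $\sigma$-algebra, every selection of $\Y$ is a function of $V$, whereas an identically distributed $\X=\{U,U+1\}$ with $U$ generating a strictly smaller $\sigma$-algebra admits selections $U+\one_A$ with $A$ independent of $U$, whose distributions no selection of $\Y$ can realise. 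So the existence of an acceptable selection is not a priori a distributional invariant, and your argument breaks down here. The paper closes this gap with a real additional idea: by the dilatation monotonicity of law invariant coherent risk measures on a non-atomic space \citep{cher:grig07}, an acceptable selection $\xi$ can be replaced by $\E(\xi|\salg_{\X})$, which is still acceptable, is again a selection of $\X$ by convexity, and is measurable with respect to the $\sigma$-algebra $\salg_{\X}$ generated by $\X$; the families of $\salg_{\X}$-measurable selections of identically distributed random sets do coincide \cite[Prop.~1.2.18]{mo1}. Without this (or an equivalent) step, law invariance is unproved. Note also that this is the one place where the law invariance and coherence of the univariate $\risk_i$ and the non-atomicity of the probability space are actually used.
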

\begin{proof}
  We show first that $\rhonot(\X)$ is an upper set.  Let
  $x\in\rhonot(\X)$ and $y\geq x$. If $\xi+x$ is acceptable for some
  $\xi\in\sel^p(\X)$, then also $\xi+y$ is acceptable because of the
  monotonicity of the components of $\vecrisk$. Hence $y\in\rhos(\X)$.
  If $x\in\rhos(\X)$ and $y\geq x$, then $x_n\to x$ for a sequence
  $x_n\in\rhonot(\X)$. Consider any $y'>y$, then $x_n\leq y'$ for
  sufficiently large $n$, so that $y'\in\rhonot(\X)$. Letting $y'$
  decrease to $y$ yields that $y\in\rhos(\X)$, so that $\rhos(\X)$ is
  an upper set.

  In order to confirm the convexity of $\rhonot(\X)$, assume that
  $x,y\in\rhonot(\X)$ with $\vecrisk(\xi+x)\leq 0$ and
  $\vecrisk(\eta+y)\leq 0$ and take any $\lambda\in(0,1)$. The
  subadditivity of components of $\vecrisk$ implies that
  \begin{align*}
    \vecrisk(\lambda\xi+(1-\lambda)\eta+\lambda x+(1-\lambda)y)
    &=\vecrisk(\lambda(\xi+x)+(1-\lambda)(\eta+y))\\
    &\leq \lambda \vecrisk(\xi+x)+(1-\lambda)\vecrisk(\eta+y)\leq 0\,.
  \end{align*}
  It remains to note that $\lambda\xi+(1-\lambda)\eta$ is a selection
  of $\X$ in view of the convexity of $\X$. Then $\rhos(\X)$ is convex
  as the closure of a convex set.

  The law invariance property is not immediate, since identically
  distributed random closed sets might have rather different families
  of selections, see \cite[p.~32]{mo1}. Denote by $\salg_{\X}$ the
  $\sigma$-algebra generated by the random closed set $\X$, see
  \cite[Def.~1.2.4]{mo1}. If $\X$ is acceptable, then
  $\vecrisk(\xi)\leq0$ for some $\xi\in\sel^p(\X)$. The dilatation
  monotonicity of law invariant numerical coherent risk measures on a
  non-atomic probability space (see \cite{cher:grig07}) implies that
  \begin{displaymath}
    \vecrisk(\E(\xi|\salg_{\X}))\leq \vecrisk(\xi)\leq 0\,.
  \end{displaymath}
  Therefore, the conditional expectation $\eta=\E(\xi|\salg_{\X})$ is
  also acceptable. The convexity of $\X$ implies that $\eta$ is a
  $p$-integrable $\salg_{\X}$-measurable selection of $\X$. Therefore,
  $\X$ is acceptable if and only if it has an acceptable
  $\salg_{\X}$-measurable selection. It remains to note that two
  identically distributed random sets have the same families of
  selections which are measurable with respect to the minimal
  $\sigma$-algebras generated by these sets, see
  \cite[Prop.~1.2.18]{mo1}. In particular, the intersections of these
  families with $L^p(\R^d)$ are identical. Thus, $\rhonot$ and $\rhos$
  are law invariant.

  Representation \eqref{eq:union} follows from the fact that
  $\rhonot(\X)$ is the union of $\{x:\; \vecrisk(\xi+x)\leq
  0\}=\vecrisk(\xi)+\R_+^d$ for $\xi\in\sel^p(\X)$.

  The first two properties of coherent risk measures follow directly
  from the definition of $\rhonot$.  The homogeneity and subadditivity
  follow from the fact that all acceptable random sets build a
  cone. Indeed, if $\xi\in\sel^p(\X)$ is acceptable, then $c\xi$ is an
  acceptable selection of $c\X$ and so $c\X$ is acceptable. If
  $\xi\in\sel^p(\X)$ and $\eta\in\sel^p(\Y)$ are acceptable, then
  $\xi+\eta$ is acceptable because the components of $\vecrisk$ are
  coherent risk measures and so $\sel^p(\X\bplus\Y)$ contains an
  acceptable random vector. Thus, $\rhonot(\X\bplus\Y)\supset
  \rhonot(\X)+\rhonot(\Y)$ and by passing to the closure we arrive at
  the subadditivity property of $\rhos$.
\end{proof}

Representation \eqref{eq:union} was used in \cite{ham:rud:yan13} to
define the \emph{market extension} of a regulator risk measure in the
conical models setting.

\begin{remark}[Eligible portfolios]
  \label{ex:eligible}
  In order to simplify the presentation it is assumed throughout that
  all portfolios $a\in\R^d$ can be used to offset the risk.  Following
  the setting of \cite{ham:hey10,ham:hey:rud11,ham:rud:yan13}, it is
  possible to assume that the set of \emph{eligible portfolios} is a
  proper linear subspace $M$ of $\R^d$. The corresponding set-valued
  risk measure is $\rhos(\X)\cap M$, which equals the union of
  $(\vecrisk(\xi)+\R_+^d)\cap M$ for all $\xi\in\sel^p(\X)$.
\end{remark}

\subsection{Properties of selection risk measures}
\label{sec:prop-select-risk}

Conditions for closedness of set-valued risk measures in the exchange
cone setting were obtained in \cite{fein:rud14fs}.  The following result
establishes the closedness of $\rhonot(\X)$ for portfolios with
$p$-integrably bounded essential part.
Further results concerning closedness of $\rhonot$ are presented in
Corollary~\ref{cor:rhos-cases} and Corollary~\ref{cor:kul-rhos}.

\begin{theorem}
  \label{thm:closed}
  If $\partial^+\X$ is $p$-integrably bounded, then the selection risk
  measure $\rhonot(\X)$ is closed.
\end{theorem}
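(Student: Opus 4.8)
The plan is to show that $\rhonot(\X)$ is closed by taking a convergent sequence $x_n \to x$ with $x_n \in \rhonot(\X)$ and producing an acceptable selection of $\X + x$. By \eqref{eq:union}, for each $n$ there is a selection $\xi_n \in \sel^p(\X)$ with $\vecrisk(\xi_n + x_n) \le 0$. Because of the monotonicity property established in the proof of Theorem~\ref{thr:rhos}, I may replace $\xi_n$ by its efficient version and assume $\xi_n \in \partial^+\X$ a.s.; and by the law-invariance argument (dilatation monotonicity via conditioning on $\salg_{\X}$), I may further assume each $\xi_n$ is $\salg_{\X}$-measurable. The key consequence is that the sequence $(\xi_n)$ now lives in the family of selections of $\partial^+\X$, which is $p$-integrably bounded: $\|\xi_n\| \le \|\partial^+\X\|$ a.s., and $\|\partial^+\X\| \in L^p$.

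The heart of the argument is a compactness step to extract a limiting selection from $(\xi_n)$. First I would pass to a subsequence along which $\xi_n \to \xi$ a.s.; this is possible because the $\xi_n$ all take values in the a.s.\ compact set $\partial^+\X$ (more precisely, using a diagonal/selection argument one extracts a measurable a.s.\ limit point — e.g.\ via the fact that $\limsup$ of a sequence of selections of a random compact set is again a nonempty random compact set, cf.\ the theory in \cite{mo1}). The limit $\xi$ is then a selection of $\cl(\partial^+\X) \subset \X$, hence $\xi \in \sel^p(\X)$, and $p$-integrability of $\xi$ follows from the domination $\|\xi\| \le \|\partial^+\X\| \in L^p$ together with dominated convergence, which also upgrades the a.s.\ convergence $\xi_n \to \xi$ to convergence in $L^p$ (and, for $p = \infty$, to the weak-star convergence one needs). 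Since $x_n \to x$, we get $\xi_n + x_n \to \xi + x$ in $L^p$.

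Finally I invoke the Fatou property of the component risk measures $\risk_1,\dots,\risk_d$: along a sequence converging in $L^p$ (in particular dominated and converging in probability), $\risk_i(\xi_i + x_i) \le \liminf_n \risk_i(\xi_{n,i} + x_{n,i}) \le 0$. Hence $\vecrisk(\xi + x) \le 0$, so $\xi + x$ is an acceptable selection of $\X + x$, i.e.\ $x \in \rhonot(\X)$. This proves $\rhonot(\X)$ is closed.

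The main obstacle is the compactness/extraction step: turning the a.s.\ boundedness of the selections $\xi_n$ (values in the random compact set $\partial^+\X$) into a genuine convergent subsequence with a \emph{measurable} limit that is still a selection of $\X$. One must be careful that a pointwise-in-$\omega$ choice of subsequence need not be measurable, so the right tool is a measurable selection theorem applied to the random set of limit points, or an appeal to the $p$-integrable boundedness to get $L^p$-weak relative compactness and then a Komlós-type or Mazur-type argument to recover an a.s.-convergent sequence of convex combinations — the convexity of $\X$ makes convex combinations harmless since they remain selections. A secondary technical point is the case $p = \infty$, where "convergence in $L^p$" must be read as weak-star convergence and one checks the Fatou property is still applicable there (which holds by the law-invariance assumption on the $\risk_i$, as noted in the text).
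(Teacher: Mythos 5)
Your proposal is essentially correct, but it reaches the conclusion by a different route than the paper, and the step you flag as the main obstacle is precisely where the two diverge. For $p\in[1,\infty)$ the paper sidesteps almost-sure convergence entirely: it observes that $\sel^p(\partial^+\X)$ is weakly compact (by \cite[Th.~2.1.19]{mo1} for $p=1$, and by boundedness together with reflexivity of $L^p$ for $p\in(1,\infty)$), extracts a weakly convergent subsequence $\xi_n\rightharpoonup\xi$, and then uses that each coherent $\risk_i$ is a supremum of weakly continuous linear functionals $\eta\mapsto\E(-\eta\zeta)/\E\zeta$, hence weakly lower semicontinuous; this yields $\vecrisk(\xi+x)\leq\liminf\vecrisk(\xi_n+x_n)\leq 0$ directly, with no need for a pointwise limit or the Fatou property. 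Your route --- upgrading weak relative compactness to a.s.\ convergence of convex combinations via a Koml\'os or Mazur argument --- also works, but two points you leave implicit should be made explicit: (i) the naive extraction of an a.s.\ convergent subsequence from a.s.\ boundedness of the values alone fails, since the convergent subsequence would depend on $\omega$, so the Koml\'os/Mazur repair is not an optional refinement but the essential step; and (ii) besides convex combinations remaining selections (convexity of $\X$), you need that convex combinations of acceptable selections remain \emph{acceptable}, which holds by the convexity of the coherent components of $\vecrisk$. For $p=\infty$ the two arguments coincide in spirit: the paper uses uniform boundedness, convergence in distribution and a Skorokhod coupling to obtain an a.s.\ convergent realisation and then invokes the Fatou property of the $\risk_i$, essentially as you propose.
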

\begin{proof}
  Let $\xi_n+x_n$ be acceptable and $x_n\to x$. Note that
  $\sel^p(\partial^+\X)$ is weak compact for $p=1$ by
  \cite[Th.~2.1.19]{mo1} and for $p\in(1,\infty)$ by its boundedness
  in view of the reflexivity of $L^p(\R^d)$. By passing to
  subsequences we can assume that $\xi_n$ weakly converges to $\xi$ in
  $L^p(\R^d)$. The dual representation of coherent risk measures
  yields that for a random variable $\eta_n\in L^p(\R)$,
  $\risk_i(\eta_n)$ equals the supremum of $\E(\eta_n\zeta)$ over a
  family of random variables $\zeta$ in $L^q(\R)$. If $\eta_n$
  weakly converges to $\eta$ in $L^p(\R)$, then $\E(\eta_n
  \zeta)\to\E(\eta \zeta)$, so that
  $\risk_i(\eta)\leq\liminf\risk_i(\eta_n)$. Applying this argument to
  the components of $\xi_n$ we obtain that
  \begin{equation}
    \label{eq:in-r}
    \vecrisk(\xi+x)\leq \liminf \vecrisk(\xi_n+x_n)\leq 0\,,
  \end{equation}
  whence $x\in\rhonot(\partial^+\X)$.  If $p=\infty$, then
  $\{\xi_n\}$ are uniformly bounded and so have a subsequence that
  converges in distribution and so can be realised on the same
  probability space as an almost surely convergent sequence. The Fatou
  property of the components of $\vecrisk$ yields \eqref{eq:in-r}.
\end{proof}

\begin{theorem}[Lipschitz property]
  \label{thr:lip}
  Assume that all components of $\vecrisk$ take finite values on
  $L^p(\R)$.  Then there exists a constant $C>0$ such that
  $\dha(\rhos(\X),\rhos(\Y))\leq C \|\dha(\X,\Y)\|_p$ for all
  $p$-integrable set-valued portfolios $\X$ and $\Y$.
\end{theorem}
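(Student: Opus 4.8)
The plan is to reduce the Hausdorff bound on $\rhos(\X)$ and $\rhos(\Y)$ to a uniform Lipschitz estimate for each univariate risk measure $\risk_i$ and then transport selections of one portfolio to the other. First I would recall that a finite-valued law invariant coherent risk measure on $L^p(\R)$ is Lipschitz with respect to the $L^p$-norm: by the dual representation $\risk_i(\eta)=\sup_\zeta \E(\eta\zeta)$ over a set of densities $\zeta\in L^q(\R)$, and finiteness of $\risk_i$ forces this dual set to be norm-bounded in $L^q$ (otherwise $\risk_i$ would be $+\infty$ somewhere, e.g. by a uniform boundedness / closed-graph argument), say $\|\zeta\|_q\le c_i$. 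Hence $|\risk_i(\eta)-\risk_i(\eta')|\le c_i\|\eta-\eta'\|_p$, and taking $C=\max_i c_i$ (up to a dimensional factor $\sqrt d$ coming from passing between the coordinatewise order and the Euclidean norm on $\R^d$) gives $\|\vecrisk(\xi)-\vecrisk(\xi')\|\le C'\|\xi-\xi'\|_p$ for $\xi,\xi'\in L^p(\R^d)$.

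Next I would use representation \eqref{eq:union}: $\rhonot(\X)=\bigcup_{\xi\in\sel^p(\X)}(\vecrisk(\xi)+\R_+^d)$. Fix $\eps>0$ and write $\delta=\|\dha(\X,\Y)\|_p$; I must show $\rhos(\X)\subset\rhos(\Y)^{(C\delta+\eps)}$ and symmetrically. Take $x\in\rhonot(\X)$, so $\vecrisk(\xi+x)\le 0$ for some $\xi\in\sel^p(\X)$. Since $\dha(\X(\omega),\Y(\omega))\le d(\omega)$ pointwise with $\|d\|_p=\delta$, the random closed set $\{y\in\Y:\ \|y-\xi(\omega)\|\le d(\omega)\}$ is a.s.\ non-empty, hence admits a measurable selection $\eta$; then $\|\eta-\xi\|_p\le\|d\|_p=\delta$, and in particular $\eta\in L^p(\R^d)$ is a $p$-integrable selection of $\Y$. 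By the Lipschitz estimate of the first paragraph, $\vecrisk(\eta+x)\le\vecrisk(\xi+x)+C'\delta\mathbf{1}\le C'\delta\mathbf{1}$, so $\vecrisk(\eta+x+C'\delta\mathbf{1})\le 0$ by cash invariance of the $\risk_i$, i.e.\ $x+C'\delta\mathbf{1}\in\rhonot(\Y)$. Thus every point of $\rhonot(\X)$ is within Euclidean distance $C'\sqrt d\,\delta$ of a point of $\rhonot(\Y)$; passing to closures and exchanging the roles of $\X$ and $\Y$ gives $\dha(\rhos(\X),\rhos(\Y))\le C\delta$ with $C=C'\sqrt d$.

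The main obstacle is the first step --- justifying that finiteness of $\risk_i$ on all of $L^p(\R)$ forces the dual set of densities to be $L^q$-bounded, equivalently that $\risk_i$ is genuinely Lipschitz and not merely finite and convex. For $p=\infty$ this is standard (a finite convex monotone cash-invariant functional on $L^\infty$ is $\|\cdot\|_\infty$-Lipschitz with constant $1$ directly from cash invariance and monotonicity, since $\eta-\eta'\le\|\eta-\eta'\|_\infty$ a.s.). For $p\in[1,\infty)$ one invokes that a finite-valued convex functional on a Banach space is locally Lipschitz, combined with positive homogeneity to globalise the Lipschitz constant; alternatively one cites \cite[Th.~3.1]{kain:rues09}, already used in the excerpt, which gives exactly the finiteness/Fatou/continuity package for law invariant coherent risk measures on $L^p$. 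Once that local-to-global Lipschitz fact is in hand, the measurable-selection step (non-emptiness of the $d(\omega)$-ball intersected with $\Y(\omega)$, then Fundamental Selection Theorem, cf.\ \cite{mo1}) and the cash-invariance manipulation are routine, and the symmetry of the argument closes the proof.
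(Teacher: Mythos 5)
Your proposal is correct and follows essentially the same route as the paper's own (much terser) proof: transport each selection $\xi\in\sel^p(\X)$ to a nearby selection $\eta\in\sel^p(\Y)$ with $\|\xi-\eta\|_p$ controlled by $\|\dha(\X,\Y)\|_p$, apply the $L^p$-Lipschitz property of the finite-valued coherent components $\risk_i$ (which the paper likewise obtains from F\"ollmer--Schied for $p=\infty$ and Kaina--R\"uschendorf for $p<\infty$), and conclude via \eqref{eq:union} and cash invariance. The extra detail you supply --- the measurable-selection construction of $\eta$ and the boundedness of the dual set --- just fills in steps the paper delegates to citations.
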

\begin{proof}
  For each $\xi\in\sel^p(\X)$ and $\eps>\|\dha(\X,\Y)\|_p$, there
  exists $\eta\in\sel^p(\Y)$ such that $\|\xi-\eta\|_p\leq \eps$. The
  Lipschitz property of $L^p$-risk measures (see
  \cite[Lemma~4.3]{foel:sch04} for $p=\infty$ and \cite{kain:rues09}
  for $p\in[1,\infty)$) implies that
  $\|\vecrisk(\xi)-\vecrisk(\eta)\|\leq C\eps$ for a constant $C$. By
  \eqref{eq:union}, the Hausdorff distance between $\rhos(\X)$ and
  $\rhos(\Y)$ is bounded by $C\eps$.
\end{proof}

\begin{example}[Selection expectation]
  \label{ex:expect}
  If $\vecrisk(\xi)=\E(-\xi)$ is the expectation of $-\xi$, then
  \begin{displaymath}
    \rhos(\X)=\E \check{\X}+\R_+^d\,,
  \end{displaymath}
  where $\E\check{\X}$ is the \emph{selection expectation} of
  $\check{\X}$, i.e. the closure of the set of expectations $\E\xi$
  for all integrable selections $\xi\in\sel^1(\X)$, see
  \cite[Sec.~2.1]{mo1}.  Thus, the selection risk measure yields a
  subadditive generalisation of the selection expectation.
\end{example}

\begin{example}
  \label{ex:essinf}
  Assume that $p=\infty$ and all components of $\vecrisk$ are given by
  $\risk_i(\xi_i)=-\,\mathrm{essinf}\, \xi_i$. Then $\X$ is acceptable
  if and only if $\X\cap\R_+^d$ is almost surely non-empty, and
  $\rhos(\X)$ is the set of points that belong to $\check{\X}$ with
  probability one.
\end{example}

\begin{example}
  \label{ex:only-cone}
  If $\X=\K$ is an exchange cone, then $\rhos(\K)$ is a deterministic
  convex cone that contains $\R_+^d$. If $\K=K$ is deterministic, then
  $\rhos(K)=\check{K}$. 
\end{example}

\begin{remark}
  \label{rem:prop}
  Selection risk measures have a number of further properties.

  \noindent
  \textbf{I}. Assume that $\vecrisk=(\risk,\dots,\risk)$ has all
  identical components.  If $\X$ is acceptable, then its orthogonal
  projection on the linear subspace $\HH$ of $\R^d$ generated by any
  $u_1,\dots,u_k\in\R_+^d$ is also acceptable, and so the risk of the
  projected $\X$ contains the projection of $\rhos(\X)$. 

  \noindent
  \textbf{II}. The conditional expectation of random set $\X$ with
  respect to a $\sigma$-algebra $\mathfrak{B}$ is defined as the
  closure of the set of conditional expectations for all its
  integrable selections, see \cite[Sec.~2.1.6]{mo1}. The dilatation
  monotonicity property of components of $\vecrisk$ implies that if
  $\xi$ is acceptable, then $\E(\xi|\mathfrak{B})$ is
  acceptable. Therefore, $\rhos$ is also \emph{dilatation monotone}
  meaning that
  \begin{displaymath}
    \rhos(\E(\X|\mathfrak{B}))\supset \rhos(\X)\,.
  \end{displaymath}
  In particular, $\rhos(\X)\subset \rhos(\E
  \X)=\E\check{\X}$. Therefore, the integrability of the support
  function $h_\X(u)$ for at least one $u$ provides an easy condition
  that guarantees that $\rhos(\X)$ is not equal to the whole space.
\end{remark}

\begin{remark}
  \label{ex:kul-joint}
  In the setting of Example~\ref{ex:cone}, $\rhos(X+\K)$ written as a
  function of $X$ only becomes a centrally symmetric variant of the
  coherent utility function considered in \cite[Def.~2.1]{kul08}. It
  should be noted that the utility function from \cite{kul08} and risk
  measures from \cite{ham:hey:rud11} depend on both $X$ and $\K$ and
  on the dependency structure between them and so are \emph{not law
    invariant} as function of $X$ only, if $\K$ is random.
\end{remark}

If the components of $\vecrisk$ are \emph{convex} risk measures, so that the
homogeneity assumption is dropped, then $\rhos(\X)$ is a convex
set-valued risk measure, which is not necessarily homogeneous. If the
components of $\vecrisk$ are not law invariant, then $\rhos$ is a
possibly not law invariant set-valued risk measure.  The following two
examples mention \emph{non-convex} risk measures, which are also
defined using selections.

\begin{example}
  \label{ex:var}
  Assume that the components of $\vecrisk$ are general cash invariant
  risk measures without imposing any convexity properties, e.g.
  values-at-risk at the level $\alpha$, bearing in mind that the
  resulting selection risk measure is no longer coherent and not
  necessarily law invariant. Then $\X$ is acceptable if and only if
  there exists a selection $\xi$ of $\X$ such that $\Prob{\xi_i\geq
    0}\geq \alpha$ for all $i$.
\end{example}

\begin{example}
  \label{ex:var-cone}
  Let $K_0$ be a deterministic exchange cone and fix some acceptance
  level $\alpha$. Call random vector $\xi$ acceptable if
  $\Prob{\xi\in\check{K}_0}\geq \alpha$ and note that this condition
  differs from requiring that $\Prob{\xi_i\geq 0}\geq \alpha$ for all
  $i$. Then a set-valued portfolio $\X$ is acceptable if and only if
  $\Prob{\X\cap\check{K}_0\neq\emptyset}\geq \alpha$.  If
  $\X=X+\R_-^d$ and $K_0=\R_-^d$, then $\{x:\;\Prob{X\geq
    -x}\geq\alpha\}$ is sometimes termed a multivariate quantile or
  the value-at-risk of $X$, see \cite{emb:puc06} and \cite{ham:hey10}.
\end{example}

\section{Bounds for selection risk measures}
\label{sec:bounds-set-valued}

\subsection{Upper bound}
\label{sec:upper-bound}

The family of selections of a random set is typically very rich. An
\emph{upper bound} for $\rhos(\X)$ can be obtained by restricting the
choice of possible selections. The convexity property of $\rhos(\X)$
implies that it contains the convex hull of the union of
$\vecrisk(\xi)+\R_+^d$ for the chosen selections $\xi$. This convex
hull corresponds to the case of a higher risk than $\rhos(\X)$. Making
the upper bound tighter by considering a larger family of selections
is in the interest of the agent in order to reduce the
capital requirements.

At first, it is possible to consider deterministic selections, also
called fixed points of $\X$, i.e. the points which belong to $\X$
with probability one. If $a\in \X$ a.s., then $\rhos(\X)\supset
-a+\R_+^d$.  However, this set of fixed points is typically rather
poor to reflect essential features related to the variability of $\X$.

Another possibility would be to consider selections of $\X$ of the
form $\xi+a$ for a fixed random vector $\xi$ and a deterministic
$a$. If $\X\supset \xi+M$ for a deterministic set $M$ (which always
can be chosen to be convex in view of the convexity of $\X$), then
\begin{equation}
  \label{eq:rhol-gen}
  \rhos(\X)\supset \vecrisk(\xi)+\R_+^d+\check{M}\,.
\end{equation}
It is possible to tighten the bound by taking the convex hull for the
union of the right-hand side for several $\xi$.  The inclusion in
\eqref{eq:rhol-gen} can be strict even if $\X=\xi+M$, since taking
\emph{random} selections of $M$ makes it possible to offset the risks
as the following example shows.

\begin{example}
  \label{ex:unit-ball}
  Let $\X=\xi+M$, where $M$ is the unit ball and $\xi=(\xi_1,\xi_2)$ is the
  standard bivariate normal vector. Consider the risk measure $\vecrisk$ with
  two identical components being expected shortfalls at level
  $0.05$. Then $\vecrisk(\xi)+M$ is the upper
  set generated by the ball of radius one centred at
  $\vecrisk(\xi)=(2.063,2.063)$. Consider the selection of $M$ given by
  $\eta=(\one_{X_1<X_2},\one_{X_1>X_2})$. By numerical calculation of
  the risks, it is easily seen that $\vecrisk(\xi+\eta)=(1.22,1.22)$,
  which does not belong to $\vecrisk(\xi)+M$.
\end{example}

\subsection{Lower bound}
\label{sec:lower-bound}

Below we describe a \emph{lower bound} for $\rhos(\X)$, which is a
superset of $\rhos(\X)$ and is also a set-valued coherent risk measure
itself.  For $x,y\in\R^d$, $xy$ (resp. $x/y$) denote the vectors
composed of pairwise products (resp. ratios) of the coordinates of $x$
and $y$. If $M$ is a set in $\R^d$, then $My=\{xy:\; x\in M\}$. By
agreement, let $\{0\}/0=\R$ and $0/0=-\infty$.

Let $\sZb\subset L^q(\R^d)$ be a non-empty family of non-negative
$q$-integrable random vectors $Z=(\zeta_1,\dots,\zeta_d)$ in $\R^d$.
Recall that $\E(\check{\X}Z)$ denotes the selection expectation of
$\check{\X}$ with the coordinates scaled according to the components
of $Z$, see Example~\ref{ex:expect}.  It exists, since $\X$ is assumed
to possess at least one $p$-integrable selection and so
$\sel^1(\check{\X}Z)\neq\emptyset$. Define the set-valued risk measure
\begin{equation}
  \label{eq:rhosel}
  \rhosel(\X)=\bigcap_{Z\in\sZb} \frac{\E(\check{\X}Z)}{\E Z}\,,
\end{equation}
which is similar to the classical dual representation of coherent risk
measures, see \cite{delb02} and also corresponds to the
$(Q,w)$-representation of risk measure in conical models from
\cite[Th.~4.2]{ham:hey:rud11} that is similar to
\eqref{eq:dual-rhosel} from the following theorem.

\begin{theorem}
  \label{thr:rhoz-rm}
  Assume that $\sZb$ is a non-empty family of non-negative
  $q$-integrable random vectors.
  The functional $\rhosel(X)$ is a closed-valued coherent risk
  measure, and
  \begin{equation}
    \label{eq:dual-rhosel}
    \rhosel(\X)=\bigcap_{Z\in\sZb,u\in\R_+^d}\{x:\;
    \E \langle x,uZ\rangle\geq -\E h_{\X}(uZ)\}\,.
  \end{equation}
\end{theorem}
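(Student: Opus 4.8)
The plan is to treat the two assertions --- that $\rhosel$ is a closed-valued coherent risk measure and that it admits the halfspace representation \eqref{eq:dual-rhosel} --- essentially in parallel, deriving the second from a direct computation of the selection expectation $\E(\check{\X}Z)$ via its support function, and reading off the coherence properties from the representation and from the structure of \eqref{eq:rhosel} as an intersection of translates/dilates of selection expectations. First I would recall (from Example~\ref{ex:expect} and \cite[Sec.~2.1]{mo1}) that for a $p$-integrable random set the selection expectation $\E\check{\X}Z$ is a closed convex set whose support function is $h_{\E(\check{\X}Z)}(u)=\E h_{\check{\X}Z}(u)=\E h_{\X}(-uZ)$, where the interchange of expectation and support function is the defining property of the selection expectation for integrable random sets. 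Hence $x\in\E(\check{\X}Z)/\E Z$ iff $x\,\E Z\in\E(\check{\X}Z)$ iff $\langle u, x\,\E Z\rangle\le \E h_{\X}(-uZ)$ for all $u\in\R^d$; replacing $u$ by $-u$ and rewriting $\langle -u, x\,\E Z\rangle = \E\langle x, uZ\rangle$ when applied coordinatewise (here $\langle x, uZ\rangle$ is shorthand for $\sum_i x_i u_i \zeta_i$ and $\E\langle x,uZ\rangle=\langle x\,\E Z, u\rangle$ since $x,u$ are deterministic) gives that membership in $\E(\check{\X}Z)/\E Z$ is equivalent to $\E\langle x,uZ\rangle\ge -\E h_{\X}(uZ)$ for all $u$. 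Since $h_{\X}$ is already an upper (lower-set) support function, it suffices to range $u$ over $\R_+^d$, because $h_{\X}(w)=+\infty$ unless $w\in(\R_-^d)'=\R_+^d$, so the only nontrivial constraints come from $uZ$ with $u\in\R_+^d$; the constraints for other $u$ are vacuous (the right-hand side is $-\infty$). Intersecting over $Z\in\sZb$ then yields exactly \eqref{eq:dual-rhosel}.

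Given the representation \eqref{eq:dual-rhosel}, closedness and convexity of $\rhosel(\X)$ are immediate: it is an intersection of closed halfspaces $\{x:\E\langle x,uZ\rangle\ge -\E h_{\X}(uZ)\}$ (with the convention that when $\E h_{\X}(uZ)=+\infty$ the halfspace is all of $\R^d$, and when the direction $uZ$ is degenerate one uses the stated conventions $\{0\}/0=\R$, $0/0=-\infty$). For cash invariance, note $h_{\X+a}(w)=h_{\X}(w)+\langle a,w\rangle$, so $\E h_{\X+a}(uZ)=\E h_{\X}(uZ)+\E\langle a,uZ\rangle$, and each defining inequality becomes $\E\langle x,uZ\rangle\ge -\E h_{\X}(uZ)-\E\langle a,uZ\rangle$, i.e. $\E\langle x+a,uZ\rangle\ge -\E h_{\X}(uZ)$; hence $x\in\rhosel(\X+a)$ iff $x+a\in\rhosel(\X)$, which is $\rhosel(\X+a)=\rhosel(\X)-a$. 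Monotonicity: if $\X\subset\Y$ then $h_{\X}\le h_{\Y}$ pointwise, so each defining inequality for $\Y$ is weaker than that for $\X$, giving $\rhosel(\X)\subset\rhosel(\Y)$. Homogeneity follows from $h_{c\X}=c\,h_{\X}$ for $c>0$. Subadditivity: since $h_{\cl(\X+\Y)}=h_{\X}+h_{\Y}$, the defining inequality for $\X\bplus\Y$ in direction $uZ$ is the sum of those for $\X$ and $\Y$, so if $x\in\rhosel(\X)$ and $y\in\rhosel(\Y)$ then $x+y$ satisfies it; hence $\rhosel(\X)+\rhosel(\Y)\subset\rhosel(\X\bplus\Y)$. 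Finally the values are upper sets because each halfspace normal $\E(uZ)$ lies in $\R_+^d$ (as $u\in\R_+^d$, $Z\ge0$), so adding a vector from $\R_+^d$ to a feasible $x$ only increases the left-hand sides.

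The step I expect to require the most care is the interchange $\E h_{\X}(uZ)=h_{\E(\check{\X}Z)}(-u)$ and the handling of the possibly infinite and degenerate cases: $\X$ is only $p$-integrable (it has \emph{some} $p$-integrable selection, not integrably bounded $\partial^+\X$), so $\E h_{\X}(uZ)$ may be $+\infty$ for some $u,Z$, and $\E Z$ may have zero coordinates, which is precisely why the conventions $\{0\}/0=\R$, $0/0=-\infty$ were introduced before the theorem. I would argue that the selection-expectation identity for the support function still holds with values in $(-\infty,+\infty]$ --- one inequality ($\E h_{\X}(uZ)\ge h_{\E(\check{\X}Z)}(-u)$) is clear by taking expectations of $\langle -u,\xi Z\rangle\le h_{\check{\X}Z}(-u)$ over integrable selections $\xi$, and the reverse uses the existence of a measurable selection attaining (or approximating, for unbounded sets) the support function together with the truncation argument of Lemma~\ref{lem:msum} to produce $p$-integrable selections whose scaled expectations approximate $h_{\E(\check{\X}Z)}(-u)$. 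Once that identity is in place, the algebraic manipulation reducing the quotient-set membership to the stated linear inequalities, and the restriction of $u$ to $\R_+^d$ using $h_{\X}=+\infty$ off $\R_+^d$ (because $\X\supset\R_-^d$ is a lower set), is routine.
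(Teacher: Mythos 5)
Your proposal is correct and follows essentially the same route as the paper: both rest on the identity $h_{\E(\check{\X}Z)}(u)=\E h_{\check{\X}Z}(u)=\E h_{\X}(-uZ)$, the observation that the relevant sets are upper so only directions with normals in $\R_+^d$ give nontrivial constraints, and the resulting representation as an intersection of half-spaces. The only (cosmetic) difference is that you read all four coherence axioms off the half-space representation \eqref{eq:dual-rhosel}, whereas the paper proves subadditivity directly at the level of set inclusions of the selection expectations before passing to support functions; your added care with the $+\infty$ and degenerate-$\E Z$ cases is a welcome refinement of what the paper leaves implicit.
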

\begin{proof}
  The closedness and convexity of $\rhosel(X)$ follow from the fact
  that it is intersection of half-spaces; it is an upper set since the
  normals to these half-spaces belong to $\R_+^d$. It is evident that
  $\rhosel(X)$ is monotonic, cash invariant and homogeneous. In order
  to check the subadditivity, note that
  \begin{align*}
    \rhosel(\X\bplus\Y) &=\bigcap_{Z\in\sZb}
    \frac{\E[(\check{\X}\bplus\check{\Y})Z]}{\E Z}
    \supset \bigcap_{Z\in\sZb} \frac{\E(\check{\X}Z)}{\E Z}
    +\frac{\E(\check{\Y}Z)}{\E Z}\\
    &\supset \bigcap_{Z\in\sZb} \frac{\E(\check{\X}Z)}{\E Z}
    +\bigcap_{Z\in\sZb} \frac{\E(\check{\Y}Z)}{\E Z}\,.
  \end{align*}
  Recall that the support function of the expectation of a set equals
  the expected value of the support function. Since
  $\E(\check{\X}Z)$ is an upper set,
  \begin{align*}
    \rhosel(\X)&=\bigcap_{Z\in\sZb}
    \frac{\E(\check{\X}Z)}{\E Z}\\
    &=\bigcap_{Z\in\sZb}\bigcap_{u\in\R^d_-}
    \{x:\,\E h_{\check{\X}Z}(u)\geq\langle x,u\E Z\rangle\}\\
    &=\bigcap_{Z\in\sZb}\bigcap_{u\in\R^d_-}
    \{x:\,\E h_{\X}(-Zu)\geq\E \langle x,uZ\rangle\}\,,
  \end{align*}
  and we arrive at \eqref{eq:dual-rhosel} by replacing $u$ with $-u$. 
\end{proof}

The components of $\vecrisk=(\risk_1,\dots,\risk_d)$ admit the dual
representations
\begin{equation}
  \label{eq:dual-ri}
  \risk_i(\xi_i)=\sup_{\zeta_i\in\sZ_i}
  \frac{\E(-\xi_i\zeta_i)}{\E \zeta_i}\,,
  \quad i=1,\dots,d\,,
\end{equation}
where, for each $i=1,\dots,d$, $\sZ_i$ is the dual cone to the family
of random variables $\xi\in L^p(\R)$ such that $\risk_i(\xi)\leq 0$,
i.e. $\sZ_i$ consists of non-negative $q$-integrable random variables
$\zeta$ such that $\E(\zeta\xi)\geq 0$ for all $\xi$ with
$\risk_i(\xi)\leq 0$, see \cite{delb02,foel:sch04}. Note that
$\sZ_1,\dots\sZ_d$ are the \emph{maximal} families that provide the
dual representation \eqref{eq:dual-ri}. Despite $\sZ_i$ contains
a.s. vanishing random variables, letting $0/0=-\infty$ ensures the
validity of \eqref{eq:dual-ri}.

\begin{theorem}
  \label{thr:domin-exp}
  Assume that the components of $\vecrisk=(\risk_1,\dots,\risk_d)$
  admit the dual representations \eqref{eq:dual-ri}. Then
  $\rhos(\X)\subset \rhosel(\X)$ for any family $\sZb$ of
  $q$-integrable random vectors $Z=(\zeta_1,\dots,\zeta_d)$ such
  that $\zeta_i\in\sZ_i$, $i=1,\dots,d$.
\end{theorem}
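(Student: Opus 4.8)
The plan is to show the inclusion $\rhonot(\X) \subset \rhosel(\X)$ directly from the definitions, and then pass to closures, using the fact that $\rhosel(\X)$ is already closed by Theorem~\ref{thr:rhoz-rm}. Take any $x \in \rhonot(\X)$. By the union representation \eqref{eq:union}, there is a selection $\xi \in \sel^p(\X)$ with $\vecrisk(\xi) \leq 0$, so that $x \in \vecrisk(\xi) + \R_+^d$; it thus suffices to show $\vecrisk(\xi) + \R_+^d \subset \rhosel(\X)$ for every $\xi \in \sel^p(\X)$, i.e.\ (since $\rhosel(\X)$ is an upper set) that $\vecrisk(\xi) \in \rhosel(\X)$. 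Fix $Z = (\zeta_1,\dots,\zeta_d) \in \sZb$ with $\zeta_i \in \sZ_i$ for each $i$. I want to verify that $\vecrisk(\xi)$ lies in $\E(\check{\X}Z)/\E Z$, coordinate by coordinate in the support-function sense.

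The key computation is the following. Since $\xi \in \sel^p(\X)$, the random vector $-\xi Z = (-\xi_1\zeta_1,\dots,-\xi_d\zeta_d)$ is an integrable selection of $\check{\X}Z$, so $\E(-\xi Z) \in \E(\check{\X}Z)$, and because $\E(\check{\X}Z)$ is an upper set, any vector dominated coordinatewise by $\E(-\xi Z)/\E Z$ lies in $\E(\check{\X}Z)/\E Z$ as well (using the conventions $\{0\}/0 = \R$, $0/0 = -\infty$ to handle coordinates where $\E\zeta_i = 0$). Now the dual representation \eqref{eq:dual-ri} and the membership $\zeta_i \in \sZ_i$ give
\begin{displaymath}
  \risk_i(\xi_i) \geq \frac{\E(-\xi_i\zeta_i)}{\E\zeta_i}\,, \qquad i = 1,\dots,d\,,
\end{displaymath}
that is, $\vecrisk(\xi) \geq \E(-\xi Z)/\E Z$ coordinatewise. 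Hence $\vecrisk(\xi) \in \E(\check{\X}Z)/\E Z$. As $Z \in \sZb$ was arbitrary, $\vecrisk(\xi) \in \rhosel(\X)$, which completes the argument that $\rhonot(\X) \subset \rhosel(\X)$. Taking closures and invoking that $\rhosel(\X) = \cl\rhosel(\X)$ yields $\rhos(\X) = \cl\rhonot(\X) \subset \rhosel(\X)$.

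I expect the only delicate points to be bookkeeping rather than substance: making sure the degenerate coordinates (where $\E\zeta_i = 0$, so $\zeta_i = 0$ a.s.) are handled consistently with the stated conventions so that the inequality $\vecrisk(\xi) \geq \E(-\xi Z)/\E Z$ and the membership in the scaled selection expectation still make sense; and checking that $-\xi Z$ is genuinely an \emph{integrable} selection of $\check{\X}Z$, which follows from $\xi \in L^p(\R^d)$, $Z \in L^q(\R^d)$ and Hölder's inequality, so that $\E(\check{\X}Z)$ is well defined as used in \eqref{eq:rhosel}. No compactness or measurable-selection machinery is needed here; the proof is essentially the observation that each deterministic vector in $\rhonot(\X)$ is witnessed by a single selection, and the dual inequality for the univariate risk measures transfers that witness into every half-space defining $\rhosel(\X)$.
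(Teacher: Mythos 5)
Your proof is correct and follows essentially the same route as the paper's: the pointwise inequality $\risk_i(\xi_i)\geq \E(-\xi_i\zeta_i)/\E\zeta_i$ from the dual representation \eqref{eq:dual-ri}, the observation that $-\xi Z$ (suitably shifted) is an integrable selection of $\check{\X}Z$, the upper-set property of the scaled selection expectation, and the closedness of $\rhosel(\X)$ to pass to closures. Two wording slips worth fixing: \eqref{eq:union} gives a selection with $x\in\vecrisk(\xi)+\R_+^d$ (equivalently $\vecrisk(\xi+x)\leq 0$), not $\vecrisk(\xi)\leq 0$; and since $\E(\check{\X}Z)$ is an \emph{upper} set it contains vectors dominating (not dominated by) $\E(-\xi Z)$, which is in fact the direction you use when concluding $\vecrisk(\xi)\in\E(\check{\X}Z)/\E Z$ from $\vecrisk(\xi)\geq\E(-\xi Z)/\E Z$.
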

\begin{proof}
  In view of \eqref{eq:dual-ri},
  \begin{displaymath}
    [\risk_i(\xi_i),\infty)=\bigcap_{\zeta_i\in\sZ_i}
    [\frac{\E(-\xi_i\zeta_i)}{\E\zeta_i},\infty)\,,
  \end{displaymath}
  so that
  \begin{displaymath}
    \vecrisk(\xi)+\R_+^d\subset \bigcap_{Z\in\sZb}
    (\frac{\E(-\xi Z)}{\E Z}+\R_+^d)\,.
  \end{displaymath}
  By \eqref{eq:union},
  \begin{align*}
    \rhos(\X)&\subset \cl\bigcup_{\xi\in\sel^p(\X)} \bigcap_{Z\in\sZb}
    (\frac{\E(-\xi Z)}{\E Z}+\R_+^d)\\
    & \subset \cl\bigcap_{Z\in\sZb} \bigcup_{\xi\in\sel^p(\X)}
    (\frac{\E(-\xi Z)}{\E Z}+\R_+^d)
    \subset \bigcap_{Z\in\sZb} \frac{\E(\check{\X}Z)}{\E Z}\,.
  \end{align*}
  Note that for each $\xi\in\sel^p(\X)$ and $a\in\R_+^d$, the random
  vector $(-\xi+a)Z$ is an integrable selection of $\check{\X}Z$. The
  closure is omitted, since the selection expectation is already
  closed by definition.
\end{proof}

\begin{corollary}
  \label{cor:proper}
  The selection risk measure $\rhos(\X)$ is not equal to the whole
  space if $h_{\X}(Zu)$ is integrable for some $u\in\R_+^d$ and
  $Z=(\zeta_1,\dots,\zeta_d)$ with $\zeta_i\in\sZ_i$, $i=1,\dots,d$.
\end{corollary}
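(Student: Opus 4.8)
The plan is to read the hypothesis as furnishing an admissible singleton family for the lower bound of Theorem~\ref{thr:domin-exp} and then to extract a single defining half-space from its dual representation. Concretely, I would take $\sZb=\{Z\}$. Since $\zeta_i\in\sZ_i$ for every $i=1,\dots,d$, this $\sZb$ satisfies the assumption of Theorem~\ref{thr:domin-exp}, and hence $\rhos(\X)\subset\rhosel(\X)$ for this choice of $\sZb$.

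Next I would apply the dual representation \eqref{eq:dual-rhosel} of Theorem~\ref{thr:rhoz-rm}. Keeping in the intersection only the single term corresponding to $Z$ itself and the prescribed $u\in\R_+^d$, and using that $x$ is deterministic so that $\E\langle x,uZ\rangle=\langle x,\E(uZ)\rangle$, one obtains
\begin{displaymath}
  \rhos(\X)\subset\rhosel(\X)\subset
  \bigl\{x\in\R^d:\ \langle x,\E(uZ)\rangle\ge -\E h_{\X}(uZ)\bigr\}\,.
\end{displaymath}
The integrability assumption on $h_{\X}(Zu)$ guarantees that $\E h_{\X}(uZ)$ is a finite real number, so the right-hand side is a closed half-space with inner normal $\E(uZ)\in\R_+^d$; in particular it is a proper subset of $\R^d$ as soon as $\E(uZ)\ne0$, which yields the claim.

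The only point requiring a word of care — and the closest thing to an obstacle in this otherwise routine specialisation — is this non-degeneracy: since $uZ\ge0$, the vector $\E(uZ)$ vanishes exactly when $uZ=0$ almost surely, in which case the hypothesis carries no information (it holds already, e.g., for $u=0$), so the statement is to be understood with $uZ$ not almost surely zero. Apart from this bookkeeping there is nothing further to do; alternatively one could phrase the same argument by noting that $\E(\check{\X}Z)/\E Z$ already has a finite support value in the direction $-u$ under the integrability assumption, so that $\rhosel(\X)$, and a fortiori $\rhos(\X)$, cannot be all of $\R^d$.
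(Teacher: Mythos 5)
Your proof is correct and follows exactly the route the paper intends: the corollary is stated as an immediate consequence of Theorem~\ref{thr:domin-exp} combined with the dual representation \eqref{eq:dual-rhosel} applied to the singleton family $\sZb=\{Z\}$, which traps $\rhos(\X)$ inside a single proper half-space once $\E h_{\X}(uZ)$ is finite. Your side remark that the statement must implicitly exclude the degenerate case $uZ=0$ a.s.\ (where the half-space collapses to all of $\R^d$) is a fair and accurate observation about the hypothesis.
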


It should be noted that the acceptability of $\X$ under the risk
measure $\rhosel(\X)$ does not necessarily imply the existence of a
selection with all acceptable marginals.

\begin{remark}
  \label{rem:li-rosel}
  The risk measure $\rhosel$ is not law invariant in general. It is
  possible to construct a law invariant (and also tighter) lower bound
  for the selection risk measure by extending $\sZb$ to
  $\tilde{\sZb}$, so that, with each $Z$, the family $\tilde{\sZb}$
  contains all random vectors $\tilde{Z}$ that share the distribution
  with $Z$.
\end{remark}

\begin{proposition}
  \label{prop:ident-comp}
  Let all components of $\vecrisk=(\risk,\dots,\risk)$ be identical
  univariate risk measures whose dual representation
  \eqref{eq:dual-ri} involves the same family $\sZ=\sZ_i$ of
  a.s. non-negative random variables. Consider the family $\sZb_0$
  that consists of all $Z=(\zeta,\dots,\zeta)/\E\zeta$ for all
  $\zeta\in\sZ$. Then
  \begin{equation}
    \label{eq:rhoselnot-def}
    \rhoselnot(\X)=\bigcap_{u\in\R_+^d} \{x:\; \langle
    x,u\rangle\geq
    \risk(h_\X(u))\}\,,
  \end{equation}
  where $\risk(h_\X(u))=-\infty$ if $h_\X(u)=\infty$ with positive
  probability.
\end{proposition}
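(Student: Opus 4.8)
The plan is to read \eqref{eq:rhoselnot-def} off the dual representation \eqref{eq:dual-rhosel} of Theorem~\ref{thr:rhoz-rm}, specialised to the family $\sZb_0$. Since $\rhoselnot=\rho_{\sZb_0}$, that theorem gives
\begin{displaymath}
  \rhoselnot(\X)=\bigcap_{Z\in\sZb_0,\,u\in\R_+^d}
  \{x:\;\E\langle x,uZ\rangle\geq-\E h_{\X}(uZ)\}\,.
\end{displaymath}
Every $Z\in\sZb_0$ has the form $Z=(\zeta,\dots,\zeta)/\E\zeta$ with $\zeta\in\sZ$ and $\E\zeta>0$ (the degenerate $\zeta\equiv0$ being disposed of by the conventions $\{0\}/0=\R$, $0/0=-\infty$), so $Z$ has equal coordinates with $\E Z=\one$, the vector of ones. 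For a deterministic $u\in\R_+^d$ this makes the random vector $uZ$ equal to the scalar random variable $\zeta/\E\zeta$ times $u$; hence $\E\langle x,uZ\rangle=\langle x,u\rangle$, and, by positive homogeneity of the support function (with the convention $0\cdot(+\infty)=0$ on the exceptional set $\{\zeta=0\}$), $h_{\X}(uZ)=(\zeta/\E\zeta)\,h_{\X}(u)$, so that $\E h_{\X}(uZ)=\E(\zeta\,h_{\X}(u))/\E\zeta$.

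Substituting these identities, the half-space attached to a pair $(Z,u)$ becomes $\{x:\;\langle x,u\rangle\geq\E(-\zeta\,h_{\X}(u))/\E\zeta\}$. I would then regroup the intersection over all pairs as an intersection first over $u\in\R_+^d$ and then over $\zeta\in\sZ$ (the map $\zeta\mapsto(\zeta,\dots,\zeta)/\E\zeta$ being onto $\sZb_0$), and use that an intersection of parallel half-spaces $\{x:\langle x,u\rangle\geq c_\zeta\}$ equals $\{x:\langle x,u\rangle\geq\sup_\zeta c_\zeta\}$, to get
\begin{displaymath}
  \rhoselnot(\X)=\bigcap_{u\in\R_+^d}
  \Bigl\{x:\;\langle x,u\rangle\geq
  \sup_{\zeta\in\sZ}\frac{\E(-\zeta\,h_{\X}(u))}{\E\zeta}\Bigr\}\,.
\end{displaymath}
It then remains to recognise the inner supremum: for fixed $u$ with $h_{\X}(u)\in L^p(\R)$ it is exactly $\risk(h_{\X}(u))$ by the dual representation \eqref{eq:dual-ri} of the common univariate risk measure $\risk$, where one uses crucially that $\sZ$ is the \emph{maximal} family yielding \eqref{eq:dual-ri}, so that the supremum attains $\risk(h_\X(u))$ and does not merely dominate it. This gives \eqref{eq:rhoselnot-def}.

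The step that needs care — and the one I expect to be the main obstacle — is the treatment of directions $u\in\R_+^d$ for which $h_{\X}(u)=\infty$ with positive probability, i.e.\ $u$ outside the (random) efficient domain $\X'$; note $h_\X(u)$ is always finite from below since $\X$ admits a $p$-integrable selection, so $h_\X(u)\in(-\infty,+\infty]$. For such $u$ the convention $\risk(h_\X(u))=-\infty$ renders the corresponding constraint on the right of \eqref{eq:rhoselnot-def} vacuous, and one must check that the same happens on the dual side, i.e.\ that these directions drop out of (or are subsumed by) the intersection in \eqref{eq:dual-rhosel}; this reconciliation, together with the measurability of $\omega\mapsto h_{\X(\omega)}(u)$ which is standard for random closed sets, is the only genuinely delicate part, the remainder being a direct rewriting of Theorem~\ref{thr:rhoz-rm} under the identities above.
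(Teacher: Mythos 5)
Your proposal is correct and follows essentially the same route as the paper: specialise the dual representation \eqref{eq:dual-rhosel} of Theorem~\ref{thr:rhoz-rm} to the family $\sZb_0$ of equal-coordinate vectors, collapse the resulting parallel half-spaces for fixed $u$ into a single one with threshold $\sup_{\zeta\in\sZ}\E(-\zeta h_\X(u))/\E\zeta$, and identify that supremum as $\risk(h_\X(u))$ via the maximal dual representation \eqref{eq:dual-ri}. You are in fact more explicit than the paper about the homogeneity step $h_\X(uZ)=(\zeta/\E\zeta)h_\X(u)$ and about the directions where $h_\X(u)=\infty$ with positive probability, which the paper's two-line proof passes over in silence.
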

\begin{proof}
  By \eqref{eq:dual-rhosel},
  \begin{displaymath}
    \rhoselnot(\X)=\bigcap_{\zeta\in\sZ,u\in\R_+^d}\{x:\;
    \langle x,u\rangle\E\zeta\geq -\E (h_{\X}(u)\zeta)\}
    =\bigcap_{u\in\R_+^d}\{x:\;
    \langle x,u\rangle\geq \sup_{\zeta\in\sZ}\frac{\E(-h_{\X}(u)\zeta)}{\E\zeta}\}\,,
  \end{displaymath}
  so it remains to identify the supremum as the dual representation
  for $\risk(h_\X(u))$.
\end{proof}

The bounds for selection measures for set-valued portfolios determined
by exchange cones are considered in the subsequent sections. Below we
illustrate Proposition~\ref{prop:ident-comp} on two examples of
quasi-bounded portfolios.

\begin{example}
  Consider portfolio $\X$ with $\partial^+\X$ being the segment in the
  plane with end-points $X^{(1)}$ and $X^{(2)}$. Then
  \begin{displaymath}
    \rhoselnot(\X)=\bigcap_{u\in\R_+^2}
    \{x:\; \langle x,u\rangle\geq \risk(\max(\langle X^{(1)},u\rangle,
    \langle X^{(2)},u\rangle))\}\,.
  \end{displaymath}
\end{example}

\begin{example}
  Consider portfolio $\X$ from Example~\ref{eq:ball}. 
  If all components of $\vecrisk$ are identical, then
  \begin{displaymath}
    \rhoselnot(\X)=\bigcap_{u\in\R_+^d}\{x:\;
    \langle x,u\rangle\geq \risk(\langle X,u\rangle)-R\|u\|\}\,.
  \end{displaymath}
\end{example}

\section{Conical market models}
\label{sec:rand-exch-cones}

\subsection{Random exchange cones}
\label{sec:rand-exch}

Let $\X=X+\K$ for $X\in L^p(\R^d)$ and a (possibly random) exchange
cone $\K$, see Example~\ref{ex:cone}. Then
\begin{equation}
  \label{eq:bound-k-random}
  \vecrisk(X)+\rhos(\K)\subset\rhos(X+\K)\subset \rhos(\E(X|\K)+\K)\,,
\end{equation}
where the first inclusion relation is due to the subadditivity of
$\rhos((X+\R_-^d)+\K)$ and the second one follows from the dilatation
monotonicity of law invariant risk measures. If $X$ and $\K$ are
independent, the lower bound becomes $-\E X+\rhos(\K)$.

Since $\E h_\X(uZ)$ is infinite unless $uZ$ almost surely belongs to
the dual cone $\K'$ and $h_\X(uZ)=\langle X,uZ\rangle$ for $uZ\in\K'$,
the lower bound \eqref{eq:dual-rhosel} turns into
\begin{equation}
  \label{eq:xkr}
  \rhosel(X+\K)=\bigcap_{u\in\R_+^d,\, Z\in\sZb,\,
    uZ\in\K'\,\text{a.s.}}
  \{x:\; \E \langle x,uZ\rangle\geq -\E \langle X,uZ\rangle\}\,.
\end{equation}
The right-hand side of \eqref{eq:xkr} corresponds to the dual
representation for set-valued risk measures from \cite{kul08} and
\cite{ham:hey:rud11}, where it is written as function of $X$ only.

Consider now the frictionless case, where $\K$ is a random half-space,
so that $\K'=\{t\upsilon:\; t\geq0\}$ for a random direction
$\upsilon\in\R_+^d$. Then
\begin{displaymath}
  \rhosel(X+\K)=\bigcap_{\zeta\in\sZ}
  \{x:\; \E(\langle x,\upsilon\rangle \zeta)\geq
  -\E(\langle X,\upsilon\rangle \zeta)\}\,,
\end{displaymath}
where $\sZ$ is a family of non-negative $q$-integrable random
variables $\zeta$ such that $\upsilon_i\zeta\in\sZ_i$ for all $i=1,\dots,d$.

\begin{example}[Bivariate frictionless random exchanges]
  \label{ex:half-random}
  Consider two currencies exchangeable at random rate $\pi=\pi^{(21)}$
  without transaction costs, see Example~\ref{ex:bm-cone}, so that the
  exchange cone $\K$ is the half-plane with normal $(\pi,1)$. Assume
  that $\vecrisk=(\risk,\risk)$ for two identical $L^p$-risk
  measures. The selection risk measure of $\X=X+\K$ is the closure of
  the set of $(\risk(X_1+\eta),\risk(X_2-\eta\pi))$ for $\eta\in
  L^p(\R)$ and $\eta\pi\in L^p(\R^d)$, see
  Example~\ref{ex:two-currencies} for a numerical illustration.

  The value of $\rhos(\K)$ is useful to bound the selection risk
  measure of $\X=X+\K$, see \eqref{eq:bound-k-random}.  Assume that
  both $\pi$ and $\pi^{-1}$ are $p$-integrable. For the purpose of
  computation of $\rhos(\K)$ it suffices to consider selections
  of the form $(\eta,-\eta\pi)$, where $\eta,\eta\pi\in
  L^p(\R)$. Furthermore, it suffices to consider separately almost
  surely positive and almost surely negative $\eta$. Then $\rhos(\K)$
  is a cone in $\R^2$ bounded by the two half-lines with slopes
  \begin{displaymath}
    \gamma_1=
    \sup_{\eta,\eta\pi\in L^p(\R_+)}
    \frac{\risk(-\eta\pi)}{\risk(\eta)}\,,
    \qquad
    \gamma_2=
    \inf_{\eta,\eta\pi\in L^p(\R_+)} \frac{\risk(\eta\pi)}{\risk(-\eta)}\,.
  \end{displaymath}
  The canonical choices $\eta=1$ and $\eta=1/\pi$ yield that
  \begin{displaymath}
    \frac{1}{\risk(1/\pi)}
    =\max(-\risk(-\pi),\frac{1}{\risk(1/\pi)})\leq \gamma_1
    \leq \gamma_2\leq \min(\risk(\pi),\frac{-1}{\risk(-1/\pi)})
    =\risk(\pi)\,,
  \end{displaymath}
  where the maximal and minimal elements above are obtained after
  applying the Jensen inequality to the dual representation of a
  univariate risk measure (observe that $f(x)=1/x$ is convex on the
  positive half-line). Therefore, $\rhos(\K)$ contains the cone with
  slopes given by $(\risk(1/\pi))^{-1}$ and $\risk(\pi)$.

  A lower bound for $\rhos(\K)$ relies on a lower bound for
  $\gamma_2$ and an upper bound for $\gamma_1$. Using the dual
  representation of $\risk$ as $\risk(\xi)=\sup_{\zeta\in\sZ}
  \E(-\zeta\xi)/\E\zeta$ for a family $\sZ\subset L^q(\R_+)$ of random
  variables with positive expectation, for (positive) $\eta,\pi$ we
  have
  \begin{displaymath}
    \frac{\risk(\eta\pi)}{r(-\eta)}
    =\frac{\sup_{\zeta_1\in\sZ}\E(-\eta\pi\zeta_1/\E\zeta_1)}{\sup_{\zeta_2\in\sZ}\E(\eta\zeta_2/\E\zeta_2)}
    =-\inf_{\zeta_1,\zeta_2\in\sZ}\frac{\E(\eta\pi\zeta_1/\E\zeta_1)}{\E(\eta\zeta_2/\E\zeta_2)}
    \geq-\inf_{\zeta\in\sZ}\frac{\E(\eta\pi\zeta)}{\E(\eta\zeta)}\,,
  \end{displaymath}
  where the inequality follows from setting $\zeta_1=\zeta_2$. 
  Then
  \begin{displaymath}
    \gamma_2\geq\inf_{\eta,\eta\pi\in L^p(\R_+)}
    \left\{-\inf_{\zeta\in\sZ}\frac{\E(\eta\pi\zeta)}{\E(\eta\zeta)}\right\}
    \geq-\inf_{\zeta\in\sZ}\sup_{\eta,\eta\pi\in L^p(\R_+)}
    \frac{\E(\eta\pi\zeta)}{\E(\eta\zeta)}\,.
  \end{displaymath}
  
  If $\risk$ is the expected shortfall $\ES_\alpha$, so that $\sZ$
  consists of all random variables taking value $1$ with probability
  $\alpha$ and $0$ otherwise, and if $\pi$ has continuous distribution
  with a convex support, then
  \begin{align*}
    \gamma_2&\geq 
    -\inf_{\zeta\in\sZ}\sup_{\eta,\eta\pi\in L^p(\R_+)}
    \frac{\E(\pi\zeta)(\eta\zeta)}{\E(\eta\zeta)}\\
    &\geq-\inf_{\zeta\in\sZ}\sup_{\eta,\eta\pi\in L^p(\R_+)}
    \frac{\E(\pi\zeta\eta)}{\E(\eta)}
    =-\inf_{\zeta\in\sZ}\esssup\left(\pi\zeta\right)
    =\VaR_\alpha(\pi)\,,
  \end{align*}
  where $\VaR_\alpha$ denotes the value-at-risk. With a similar argument,
  \begin{align*}
    \gamma_1\leq -\sup_{\zeta\in\sZ}\inf_{\eta,\eta\pi\in L^p(\R_+)}
    \frac{\E(\eta\pi\zeta)}{\E(\eta\zeta)}
    \leq \VaR_{1-\alpha}(\pi)\,,
  \end{align*} 
  see Figure~\ref{fig:pure-half} for the corresponding lower and upper
  bounds.
  For any $\alpha\leq 1/2$ the upper bound for $\gamma_1$ is not
  greater than the lower bound for $\gamma_2$ meaning that $\rhos(\K)$
  is distinct from the whole space.
\end{example}

The arguments presented in Example~\ref{ex:half-random} and
\eqref{eq:bound-k-random} yield the following result.

\begin{proposition}
  \label{prop:half-plane}
  Let $d=2$. Consider the selection measure $\rhos$ generated by
  $\vecrisk$ with all components being expected shortfalls at level
  $\alpha\leq 1/2$. If $\X\subset X+\K$, where $\K$ is a half-plane
  with normal $(\pi,1)$ such that $\pi,\pi^{-1}\in L^p(\R_+)$, $\pi$
  has convex support, and $X$ and $\pi$ are independent, then
  $\rhos(\X)$ is distinct from the whole plane.
\end{proposition}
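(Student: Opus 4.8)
The plan is to reduce the statement, by monotonicity of $\rhos$ and the bound \eqref{eq:bound-k-random}, to the assertion that $\rhos(\K)$ is not the whole plane, and then to extract the latter from the slope estimates of Example~\ref{ex:half-random}. Concretely: since $\X\subset X+\K$ almost surely, monotonicity of $\rhos$ (Theorem~\ref{thr:rhos}) gives $\rhos(\X)\subset\rhos(X+\K)$, and the second inclusion in \eqref{eq:bound-k-random} gives $\rhos(X+\K)\subset\rhos(\E(X|\K)+\K)$. The random set $\K$ is a measurable function of $\pi$, so the $\sigma$-algebra it generates is contained in $\sigma(\pi)$; since $X$ and $\pi$ are independent, $\E(X|\K)=\E X=:m$, a deterministic vector (finite because $X\in L^p(\R^2)\subset L^1(\R^2)$). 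Cash invariance then yields $\rhos(\E X+\K)=\rhos(\K)-m$, so $\rhos(\X)\subset\rhos(\K)-m$; as a translate of $\R^2$ is again $\R^2$, it remains only to show $\rhos(\K)\neq\R^2$.

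Next I would invoke the structure of $\rhos(\K)$ recorded in Examples~\ref{ex:only-cone} and~\ref{ex:half-random}: $\rhos(\K)$ is a deterministic convex cone containing $\R_+^2$, namely the cone bounded by the two half-lines through the origin of slopes $\gamma_1=\sup\{\risk(-\eta\pi)/\risk(\eta)\}$ and $\gamma_2=\inf\{\risk(\eta\pi)/\risk(-\eta)\}$, where the supremum and infimum run over $\eta$ with $\eta,\eta\pi\in L^p(\R_+)$. Example~\ref{ex:half-random} also supplies the finite estimates $\gamma_1\le\VaR_{1-\alpha}(\pi)<0$ and $\gamma_2\ge\VaR_\alpha(\pi)$; here the convexity of the support of $\pi$, together with $\pi>0$ a.s.\ (which follows from $\pi,\pi^{-1}\in L^p(\R_+)$), is what lets one identify the relevant tail quantiles. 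Because $\alpha\le 1/2$, monotonicity of quantiles gives $\VaR_{1-\alpha}(\pi)\le\VaR_\alpha(\pi)$, so $\gamma_1\le\gamma_2<0$.

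Finally I would observe that $\gamma_1\le\gamma_2<0$ confines $\rhos(\K)$ to the closed half-plane $H=\{x\in\R^2:\,x_2\ge\gamma_1 x_1\}$, which is a convex cone and is proper (it omits $(0,-1)$). The bounding ray of slope $\gamma_1$ lies on $\partial H$; the bounding ray of slope $\gamma_2$ lies in $H$ because $\gamma_2\ge\gamma_1$; every vector of $\R_+^2$ lies in $H$ because $\gamma_1<0$; and $H$, being a convex cone, then contains the cone generated by these. Hence $\rhos(\K)\subset H\subsetneq\R^2$, and consequently $\rhos(\X)\subset H-m\subsetneq\R^2$.

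The hard part will be the middle step, i.e.\ pinning down $\rhos(\K)$ as exactly the cone bounded by the half-lines of slopes $\gamma_1$ and $\gamma_2$ — which rests on reducing the family of selections to sign-definite $\eta$ via subadditivity of the components of $\vecrisk$ — and checking that the quantile estimates of Example~\ref{ex:half-random} survive under mere convexity (rather than full continuity) of the support of $\pi$. Once $\gamma_1\le\gamma_2$ is secured, the geometric conclusion is routine.
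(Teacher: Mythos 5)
Your proposal is correct and follows essentially the same route as the paper, which proves the proposition precisely by combining the second inclusion of \eqref{eq:bound-k-random} (with $\E(X|\K)=\E X$ by independence and cash invariance) with the slope bounds $\gamma_1\le\VaR_{1-\alpha}(\pi)\le\VaR_\alpha(\pi)\le\gamma_2$ from Example~\ref{ex:half-random} for $\alpha\le 1/2$. You have also correctly isolated the two points the paper itself leaves implicit in that example (the reduction to sign-definite $\eta$ needed to confine $\rhos(\K)$ to the cone with boundary slopes $\gamma_1,\gamma_2$, and the continuity assumption on $\pi$ used in the $\VaR$ estimates), so nothing essential is missing relative to the paper's own argument.
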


\subsection{Deterministic exchange cones}
\label{sec:determ-exch-cone}

Assume that $\X=X+K$ for a deterministic exchange cone $K$ and a
$p$-integrable random vector $X$. Since $\rhos(K)=\check{K}$,
\eqref{eq:bound-k-random} yields that
\begin{displaymath}
  \vecrisk(X)+\check{K}\subset\rhos(X+K)\subset -\E X+\check{K}\,.
\end{displaymath}

\begin{proposition}
  \label{prop:ident-comp-not}
  Assume that $\vecrisk$ has all identical components being $\risk$.
  Then
  \begin{equation}
    \label{eq:bound-detK}
    \rhoselnot(X+K)=\bigcap_{u\in K'}\{x:\,\risk(\langle X,u\rangle)
    \leq\langle x,u\rangle\}\,.
  \end{equation}
\end{proposition}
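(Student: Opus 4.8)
The plan is to start from the general lower bound identified in Proposition~\ref{prop:ident-comp}, namely
\begin{displaymath}
  \rhoselnot(\X)=\bigcap_{u\in\R_+^d}\{x:\;\langle x,u\rangle\geq\risk(h_\X(u))\}\,,
\end{displaymath}
and specialise it to $\X=X+K$ for a deterministic exchange cone $K$. The key observation is that the support function of $X+K$ decouples: $h_{X+K}(u)=\langle X,u\rangle+h_K(u)$, and $h_K(u)$ equals $0$ when $u\in K'$ and $+\infty$ otherwise, by the definition of the efficient domain and the characterisation \eqref{eq:k-dual} of $K'$ for a cone. Since $h_K$ is here deterministic, when $u\notin K'$ we have $h_\X(u)=+\infty$ almost surely, and the convention $\risk(h_\X(u))=-\infty$ makes the corresponding constraint $\langle x,u\rangle\geq-\infty$ vacuous; hence only the directions $u\in K'$ contribute. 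For $u\in K'$ we get $h_\X(u)=\langle X,u\rangle$, so the constraint becomes $\langle x,u\rangle\geq\risk(\langle X,u\rangle)$, which is exactly the right-hand side of \eqref{eq:bound-detK}.

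More precisely, I would carry this out in three short steps. First, recall that $\X=X+K$ is $p$-integrable (one needs $\langle X,u\rangle\in L^p(\R)$ for $u\in K'$, which follows from $X\in L^p(\R^d)$, so $\risk(\langle X,u\rangle)$ is well defined and finite-valued on $K'$, or at worst the usual $\{+\infty\}$-valued extension). Second, invoke Proposition~\ref{prop:ident-comp} verbatim — its hypotheses (all components of $\vecrisk$ equal to $\risk$, common dual family $\sZ$ of a.s.\ non-negative random variables, and $\sZb_0$ built from $(\zeta,\dots,\zeta)/\E\zeta$) are precisely those assumed here — to obtain the displayed intersection over $u\in\R_+^d$. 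Third, split this intersection into $u\in\R_+^d\cap K'$ and $u\in\R_+^d\setminus K'$, dropping the vacuous constraints from the second set and substituting $h_\X(u)=\langle X,u\rangle$ into the first. Finally, note that since $K\supset\R_-^d$ the dual cone satisfies $K'\subset\R_+^d$, so intersecting over $u\in\R_+^d\cap K'$ is the same as intersecting over $u\in K'$, giving exactly \eqref{eq:bound-detK}.

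The only point requiring a little care — and the main (minor) obstacle — is the bookkeeping around infinite values: one must check that the convention $\risk(h_\X(u))=-\infty$ when $h_\X(u)=\infty$ with positive probability is consistent with the convention $\{0\}/0=\R$, $0/0=-\infty$ used in the dual representation, so that directions outside $K'$ genuinely impose no constraint and are not, say, silently excluded from $\sZb_0$ in a way that changes the intersection. Once that is settled, the argument is a direct substitution with no real analytic content, since Proposition~\ref{prop:ident-comp} has already done the work of converting the scaled-selection-expectation intersection \eqref{eq:rhosel} into a support-function form. One may also remark that \eqref{eq:bound-detK} recovers, in the deterministic-cone case, the dual representation of \cite{ham:hey10}, which is the intended takeaway.
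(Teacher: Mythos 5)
Your proposal is correct and follows essentially the same route as the paper: the paper's own proof is exactly the one-line observation that Proposition~\ref{prop:ident-comp} applies and that $h_{X+K}(u)=\langle X,u\rangle$ for $u\in K'$ while the support function is infinite otherwise, so that the convention $\risk(h_\X(u))=-\infty$ renders the constraints outside $K'$ vacuous. Your additional bookkeeping (the decomposition $h_{X+K}=\langle X,\cdot\rangle+h_K$, the inclusion $K'\subset\R_+^d$ from $K\supset\R_-^d$, and the consistency of the infinite-value conventions) is just a more explicit write-up of the same argument.
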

\begin{proof}
  The result follows from Proposition~\ref{prop:ident-comp} and the
  fact that $h_{X+K}(u)=\langle X,u\rangle$ if $u\in K'$ and
  otherwise the support function is infinite.
\end{proof}

A risk measure of $X+K$ for a deterministic cone $K$ is said to
\emph{marginalise} if its values are translates of $\check{K}$ for all
$X\in L^p(\R^d)$. The coherency of $\risk$ implies that the
intersection in \eqref{eq:bound-detK} can be taken over all $u$ being
extreme elements of $K'$, which in dimension 2, yields that
$\rhoselnot(X+K)$ is a translate of cone $\check{K}$. In general
dimension, if $K$ is a \emph{Riesz cone} (i.e. $\R^d$ with the order
generated by $K$ is a Riesz space), then $\rhoselnot(X+K)=a+\check{K}$
with $a\in\R^d$ being the supremum of $\E(-X\zeta)/\E\zeta$ in the
order generated by $K$.  Similar risk measures were proposed in
\cite[Ex.~6.6]{cas:mol07}, where instead of
$\{\E(-X\zeta)/\E\zeta:\,\zeta\in\sZ\}$ a depth-trimmed region was
considered.

\begin{example}
  \label{ex:tce}
  Let all components of $\vecrisk$ be univariate expected
  shortfalls $\mathrm{ES}_\alpha$ at level $\alpha$, so that $\sZ$ is
  the family of indicator random variables $\zeta=\one_A$ for all
  measurable $A\subset\Omega$ with $\P(A)>\alpha$ for some fixed
  $\alpha\in(0,1)$. Then $\rhoselnot(X+K)$ becomes the vector-valued
  worst conditional expectation (WCE) of $X$ introduced in
  \cite[Ex.~2.5]{jouin:med:touz04}. Furthermore, \eqref{eq:bound-detK}
  yields the set-valued expected shortfall (ES) defined as
  \begin{displaymath}
    \mathrm{ES}_\alpha(X+K)
    =\bigcap_{u\in K'}\{x:\,\mathrm{ES}_\alpha(\langle X,u\rangle)
    \leq\langle x,u\rangle\}\,.
  \end{displaymath}
  Its explicit expression if $K$ is a Riesz cone is given in
  \cite[eq.~(7.1)]{cas:mol07} --- in that case
  $\mathrm{ES}_\alpha(X+K)$ is a translate of $\check{K}$. Since the
  univariate ES and WCE coincide for non-atomic random variables,
  their multivariate versions coincide by
  Proposition~\ref{prop:ident-comp} if $X$ has a non-atomic
  distribution. It should be noted that $\mathrm{ES}_\alpha(X+K)$ is
  only a lower bound for the selection risk measure defined by
  applying $\mathrm{ES}_\alpha$ to the individual components of
  selections of $X+K$. In particular, the acceptability of $X+K$ under
  $\mathrm{ES}_\alpha(X+K)$ does not necessarily imply the existence
  of an acceptable selection of $X+K$.
\end{example}

The following result deals with the \emph{comonotonic} case, see
\cite{foel:sch04} for the definition of comonotonicity and
comonotically additive risk measures.

\begin{theorem}
  \label{thr:comonotonic-detK}
  Assume that $\vecrisk$ has all identical comonotonic additive
  components $\risk$.  If the components of $X$ are comonotonic and
  $K$ is a deterministic exchange cone, then
  $\rhos(X+K)=\vecrisk(X)+\check{K}$.
\end{theorem}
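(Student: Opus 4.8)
The plan is to squeeze $\rhos(X+K)$ between $\vecrisk(X)+\check{K}$ and a lower bound that, under the comonotonicity hypothesis, collapses to the same set. The inclusion $\vecrisk(X)+\check{K}\subset\rhos(X+K)$ is already in hand: it is the left half of the chain of inclusions displayed just before Proposition~\ref{prop:ident-comp-not}, which comes from \eqref{eq:bound-k-random} together with $\rhos(K)=\check{K}$ (Example~\ref{ex:only-cone}). So the whole content is the reverse inclusion $\rhos(X+K)\subset\vecrisk(X)+\check{K}$, which as a by-product shows that the value is closed, being a translate of the closed cone $\check{K}$.

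For the reverse inclusion I would pass to the lower bound $\rhoselnot$. Since all components of $\vecrisk$ coincide with $\risk$, they share a common maximal dual family $\sZ$, so Theorem~\ref{thr:domin-exp} applied to the family $\sZb_0=\{(\zeta,\dots,\zeta)/\E\zeta:\ \zeta\in\sZ\}$ gives $\rhos(X+K)\subset\rhoselnot(X+K)$, and Proposition~\ref{prop:ident-comp-not} identifies
\begin{displaymath}
  \rhoselnot(X+K)=\bigcap_{u\in K'}\{x:\ \risk(\langle X,u\rangle)\le\langle x,u\rangle\}\,.
\end{displaymath}

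The crucial computation is $\risk(\langle X,u\rangle)=\langle u,\vecrisk(X)\rangle$ for $u\in K'$. Because $\R_-^d\subset K$, the dual cone satisfies $K'\subset\R_+^d$, so every coordinate of $u\in K'$ is non-negative; scaling a comonotonic family by non-negative constants preserves comonotonicity, so $u_1X_1,\dots,u_dX_d$ are comonotonic, and comonotonic additivity (extended to finite sums) together with positive homogeneity of $\risk$ yields $\risk(\langle X,u\rangle)=\sum_{i=1}^d u_i\risk(X_i)=\langle u,\vecrisk(X)\rangle$. Hence $\rhoselnot(X+K)=\{x:\ \langle x-\vecrisk(X),u\rangle\ge0\ \text{for all}\ u\in K'\}$, i.e.\ $x$ lies in this set iff $-(x-\vecrisk(X))$ belongs to the bidual $(K')'=\cl\conv K=K$ (as $K$ is a closed convex cone), i.e.\ $x-\vecrisk(X)\in\check{K}$. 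This is exactly $\rhoselnot(X+K)\subset\vecrisk(X)+\check{K}$, and combining the two inclusions gives $\rhos(X+K)=\vecrisk(X)+\check{K}$.

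I expect the only nuisance to be bookkeeping around infinite values: if $\risk(X_i)=\infty$ for some $i$, the identity $\risk(\langle X,u\rangle)=\langle u,\vecrisk(X)\rangle$ must be read in $\R\cup\{\infty\}$ and the corresponding half-space above degenerates; one should check that both sides of the asserted equality then reduce consistently (or note that the statement is vacuous in that coordinate). Beyond this, the argument is a short concatenation of already-established results plus the comonotonic-additivity identity, so I anticipate no genuine obstacle.
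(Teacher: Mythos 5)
Your proposal is correct and follows essentially the same route as the paper's proof: both squeeze $\rhos(X+K)$ between the lower inclusion $\vecrisk(X)+\check{K}\subset\rhos(X+K)$ from \eqref{eq:bound-k-random} and the upper bound $\rhoselnot(X+K)$ of Proposition~\ref{prop:ident-comp-not}, and then use comonotonic additivity to show $\risk(\langle X,u\rangle)=\langle\vecrisk(X),u\rangle$ for $u\in K'$, which identifies $\rhoselnot(X+K)$ with $\vecrisk(X)+\check{K}$ via the half-space representation of a translated cone. Your explicit bipolar argument and the remark about infinite values are just slightly more detailed versions of what the paper leaves implicit.
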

\begin{proof}
  Since $\risk$ is comonotonic additive and $X$ is comonotonic,
  $\risk(\langle X,a\rangle)=\langle \vecrisk(X),a\rangle$ for any
  $a\in K'\subset\R^d_+$.  Then
  \begin{displaymath}
    \vecrisk(X)+\check{K}=\bigcap_{a\in K'}\{x:\,
    \langle \vecrisk(X),a\rangle\leq\langle x,a\rangle\}\,,
  \end{displaymath}
  and consequently $\rhos(X+K)=\vecrisk(X)+\check{K}=\rhoselnot(X+K)$.
\end{proof}

Notice that $\vecrisk(X)$ is solely determined by the marginal
distributions of $X$. Consequently, if $\tilde{X}$ is a comonotonic
rearrangement of $X$ (i.e. a random vector with the same marginal
distributions and comonotonic coordinates), then
\begin{displaymath}
  \rhos(X+K)\supset\vecrisk(X)+\check{K}=\vecrisk(\tilde{X})+\check{K}
  =\rhos(\tilde{X}+K)\,.
\end{displaymath}

The following result shows that $\rhos(X+K)$ does not change if
instead of all selections of $X+K$ one uses only those being
deterministic functions of $X$.

\begin{theorem}
  \label{thr:dil-mon}
  Set-valued portfolio $\X=X+K$ is acceptable if and only if
  $\vecrisk(X+\eta(X))\leq 0$ for a selection $\eta(X)\in K$, which is
  a deterministic function of $X$.
\end{theorem}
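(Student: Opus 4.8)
The plan is to reuse the dilatation–monotonicity device that already appears in the proof of Theorem~\ref{thr:rhos}, but to condition on the $\sigma$-algebra $\salg_X$ generated by $X$ itself rather than on $\salg_{\X}$. One implication is immediate and requires no work: if $\eta\colon\R^d\to\R^d$ is measurable with $\eta(X)\in K$ a.s., then $X+\eta(X)$ is a selection of $\X=X+K$, so $\vecrisk(X+\eta(X))\le 0$ says precisely that $\X$ is acceptable. The content is in the converse direction.

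So suppose $\X=X+K$ is acceptable and pick $\zeta\in\sel^p(\X)$ with $\vecrisk(\zeta)\le 0$. Since $K$ is deterministic, $\zeta-X$ is a $K$-valued random vector; set $\zeta':=\E(\zeta\mid\salg_X)$, which is $p$-integrable (conditional expectation is an $L^p$-contraction) and $\salg_X$-measurable. The first key step is to check that $\zeta'-X=\E(\zeta-X\mid\salg_X)\in K$ a.s. Writing $K=\{x:\langle x,u\rangle\le 0\text{ for all }u\in K'\}$ and using that $K'$ admits a countable dense subset, this follows from $\langle\E(\zeta-X\mid\salg_X),u\rangle=\E(\langle\zeta-X,u\rangle\mid\salg_X)\le 0$ a.s.\ for each $u\in K'$. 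Hence $\zeta'$ is an $\salg_X$-measurable selection of $\X$, so by the Doob--Dynkin lemma $\zeta'=X+\eta(X)$ for a measurable $\eta\colon\R^d\to\R^d$ with $\eta(X)\in K$ a.s. Finally, dilatation monotonicity of the law invariant coherent components of $\vecrisk$ on the non-atomic probability space (\cite{cher:grig07}) yields $\vecrisk(X+\eta(X))=\vecrisk(\zeta')\le\vecrisk(\zeta)\le 0$, so acceptability is witnessed by a deterministic function of $X$.

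The only genuinely delicate point is the claim that conditioning keeps $\zeta-X$ inside $K$ — i.e.\ that the conditional expectation of a selection of a deterministic closed convex set is again a selection; this is exactly the type of fact already invoked (via convexity of $\X$) in the proof of Theorem~\ref{thr:rhos}, and here it is clinched by the support-function inequality above. Everything else — the selection property of $\zeta'$, its $p$-integrability, and the invocation of dilatation monotonicity — is routine, and the argument in fact gives a slight sharpening of the $\salg_{\X}$-measurable reduction of Theorem~\ref{thr:rhos}, since $\salg_{\X}\subseteq\salg_X$ when $K$ is deterministic.
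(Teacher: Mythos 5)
Your proposal is correct and follows essentially the same route as the paper: write the acceptable selection as $X+\eta$ with $\eta\in\sel^p(K)$, condition on $X$, use dilatation monotonicity of the law invariant coherent components of $\vecrisk$, and observe that $\E(\eta\mid X)$ remains in $K$ because $K$ is a deterministic closed convex cone. You merely supply extra detail the paper leaves implicit (the easy direction, the support-function check that conditioning preserves $K$-valuedness, and Doob--Dynkin), all of which is sound.
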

\begin{proof}
  If $\vecrisk(X+\eta(X))\leq 0$, then $X$ is acceptable. For the
  reverse implication, the dilatation monotonicity of the components of
  $\vecrisk$ yields that
  \begin{displaymath}
    \vecrisk(X+\E(\eta|X))\leq \vecrisk(X+\eta)\leq 0\,.
  \end{displaymath}
  It remains to note that the conditional expectation $\E(\eta|X)$ is
  a function of $X$ taking values from $K$, since $K$ is a
  deterministic convex cone.
\end{proof}

\begin{example}[Frictionless market with deterministic exchange rates]
  Consider a vector $X\in L^\infty(\R^d)$ of terminal gains on $d$ currencies that can
  be exchanged without transaction costs, so that $K$ is a half-space
  with normal $u=(1,\pi^{(21)},\dots,\pi^{(d1)})$, where $\pi^{(j1)}$
  is the number of units of currency $j$ needed to by one unit of
  currency one, see Example~\ref{ex:bm-cone}.
  
  Let $\vecrisk=(\risk_1,\dots,\risk_d)$ with possibly different
  components whose maximal dual representations involve families of
  $q$-integrable random variables $\sZ_1,\dots,\sZ_d$ from
  \eqref{eq:dual-ri}. It will be shown later on in
  Corollary~\ref{cor:kul-rhos} that
  $\rhosel(X+K)=\rhos(X+K)$ for some family $\sZb$ satisfying the
  condition of Theorem~\ref{thr:domin-exp}.

  Then $Z=(\zeta_1,\dots,\zeta_d)\in K'=\{\lambda u:\lambda\geq 0\}$
  if and only if $\zeta_i=u_i\zeta$, $i=1,\dots,d$, for a random
  variable $\zeta$ from $\sZ_*=\cap_i \sZ_i$. The family $\sZ_*$
  determines the coherent risk measure $\risk_*$ called the convex
  convolution of $\risk_1,\dots,\risk_d$, see \cite{delb12}. Then
  $X$ is acceptable under $\rhos$ if and only if $\risk_*(\langle
  X,u\rangle)\leq 0$ and
  \begin{displaymath}
    \rhos(X+K)=\rhosel(X+K)
    =\{x:\; \langle x,u\rangle \geq \risk_*(\langle X,u\rangle)\}\,.
  \end{displaymath}
  If all components of $X$ represent the same currency, but the
  regulator requirements applied to each component are different,
  e.g. because they represent gains in different countries, then
  $u=(1,\dots,1)$ and $X$ is acceptable if and only if
  $\risk_*(X_1+\cdots+X_d)\leq 0$.
\end{example}

While the above examples and Theorem~\ref{thr:comonotonic-detK}
provide $\rhos(X+K)$ of the marginalised form $\vecrisk(X)+\check{K}$,
this is not always the case, as Example~\ref{ex:non-margin} confirms.

The calculation of $\rhos$ for $d=2$ and $p=\infty$ can be facilitated
by using the following result. It shows that $\rhos(X+K)$ coincides
with its upper bound $\rhoselnot(X+K)$ sufficiently far away from the
origin.

\begin{proposition}
  \label{prop:meetk}
  Assume that $p=\infty$ and $\vecrisk$ has all identical components
  $\risk$.  If $\X=X+K$ for a deterministic cone $K$ and an
  essentially bounded random vector $X$ in dimension $d=2$, then for
  all $x$ with sufficiently large norm, $x\in\rhoselnot(X+K)$ implies
  that $x\in\rhos(X+K)$.
\end{proposition}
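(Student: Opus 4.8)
The plan is to upgrade, far from the origin, the inclusion $\rhos(X+K)\subset\rhoselnot(X+K)$ of Theorem~\ref{thr:domin-exp} (applied to the family $\sZb_0$ of Proposition~\ref{prop:ident-comp}) to an equality. I would use the explicit shape of the upper bound: by Proposition~\ref{prop:ident-comp-not}, the coherency of $\risk$ lets one restrict the defining intersection to the extreme rays of the dual cone $K'\subset\R_+^2$, so in $d=2$ one gets $\rhoselnot(X+K)=a+\check{K}$, where $a$ is the point with $\langle a,u_+\rangle=\risk(\langle X,u_+\rangle)$ and $\langle a,u_-\rangle=\risk(\langle X,u_-\rangle)$ for the (at most two) extreme rays $u_+,u_-$ of $K'$. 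I would also record that, by the lower bound $\vecrisk(X)+\check{K}\subset\rhos(X+K)$ noted before Proposition~\ref{prop:ident-comp-not}, the closed convex set $\rhos(X+K)$ has recession cone exactly $\check{K}$. The degenerate case in which $K$ is a half-plane (so $K'$ is a single ray) I would treat first and separately, since then the selections built below may be run with an unbounded parameter and their marginal risks fill out the whole boundary line of $\rhoselnot(X+K)$; combined with $\vecrisk(X)+\check{K}\subset\rhos(X+K)$ and convexity this already gives $\rhos(X+K)=\rhoselnot(X+K)$. So assume $K$ is pointed, generated by extreme rays $w_+,w_-$ with $\langle u_+,w_+\rangle=\langle u_-,w_-\rangle=0$, so that $\check{K}$ is generated by $v:=-w_+$ and $v':=-w_-$.

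The main step is to produce, for each boundary ray of $a+\check{K}$, selections of $X+K$ whose marginal risk vectors run out to infinity along that ray. For the ray $a+\R_+v$, writing $u_+=(u_{+,1},u_{+,2})$ and assuming $u_{+,2}>0$, I would take for each constant $c\ge\esssup X_1$ the selection $\xi^{(c)}:=X+\tfrac{c-X_1}{u_{+,2}}\,w_+$; it lies in $X+K$ because the coefficient of $w_+$ is nonnegative a.s.\ and $X$ is bounded, and its first component is the constant $c$. Since a constant is comonotonic with any random variable, comonotonic additivity of $\risk$ forces $\langle\vecrisk(\xi^{(c)}),u_+\rangle=\risk(\langle u_+,\xi^{(c)}\rangle)=\risk(\langle u_+,X\rangle)=\langle a,u_+\rangle$ (using $\langle u_+,w_+\rangle=0$), and a short computation with cash invariance and positive homogeneity gives $\vecrisk(\xi^{(c)})=a+\tau(c)v$ with $\tau(c)\to\infty$ as $c\to\infty$. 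Each $\vecrisk(\xi^{(c)})$ lies in $\rhonot(X+K)\subset a+\check{K}$ and on the line through $a$ in direction $v$, which meets $a+\check{K}$ only along $a+\R_+v$; hence $\tau(c)\ge0$, and letting $c$ vary I obtain a finite $\tau_+\ge0$ with $\{a+tv:\,t\ge\tau_+\}\subset\rhos(X+K)$. If instead $u_+$ is parallel to a coordinate axis (as for $K=\R_-^2$) the formula degenerates, but then $\langle u_+,\xi\rangle$ is a multiple of the unchanged first component and the same conclusion already follows from the deterministic selections $X+sw_+$, $s\ge0$. The mirror-image construction with $u_-,w_-,v'$ yields a finite $\tau_-\ge0$ with $\{a+tv':\,t\ge\tau_-\}\subset\rhos(X+K)$.

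To conclude I would invoke convexity: $\rhos(X+K)$ is closed and convex with recession cone $\check{K}=\R_+v+\R_+v'$ and contains the two tails just obtained, hence also $\{a+tv:\,t\ge\tau_+\}+\check{K}$ and $\{a+tv':\,t\ge\tau_-\}+\check{K}$. Writing a given $x\in a+\check{K}$ as $x=a+\alpha v+\beta v'$ with $\alpha,\beta\ge0$: if $\alpha\ge\tau_+$ then $x=(a+\alpha v)+\beta v'\in\rhos(X+K)$, and if $\beta\ge\tau_-$ then symmetrically $x\in\rhos(X+K)$. Because $v,v'$ are linearly independent, $\alpha+\beta$ is comparable to $\|x-a\|$, and so to $\|x\|$ once $\|x\|$ is large; thus as soon as $\alpha+\beta>\tau_++\tau_-$ one of the two alternatives holds and $x\in\rhos(X+K)$, which is the assertion.

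The step I expect to fight is the middle one. Deterministic selections $X+sw_+$ only reach the ray $\vecrisk(X)+\R_+v$, which is parallel to the boundary ray of $\rhoselnot(X+K)$ but shifted by the comonotonicity gap $\langle\vecrisk(X),u_+\rangle-\risk(\langle u_+,X\rangle)\ge0$; closing this gap is exactly what the randomized selection $\xi^{(c)}$ with a frozen marginal achieves, and it is comonotonic additivity that then pins $\vecrisk(\xi^{(c)})$ onto the precise supporting line $\{\langle\cdot,u_+\rangle=\risk(\langle u_+,X\rangle)\}$. The rest is bookkeeping: the half-plane case and the case of an extreme ray of $K'$ on a coordinate axis both need the selection formula adjusted, but in each the conclusion only gets easier.
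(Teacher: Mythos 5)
Your proposal is correct and follows essentially the same route as the paper: the key device in both is a selection of $X+K$ with one component frozen to a constant (the paper uses $\essinf X_i$, you use arbitrary large $c$), so that comonotonic additivity places its marginal-risk vector exactly on a supporting line of $\rhoselnot(X+K)$, after which convexity and the invariance $\rhos(X+K)=\rhos(X+K)+\check{K}$ cover everything far from the origin. The only differences are cosmetic — the paper finishes with a single boundary point on each ray and a triangle/ball argument, while you produce whole tails and split coordinates in the $(v,v')$ basis — and your explicit treatment of the half-plane and coordinate-axis degeneracies is, if anything, slightly more careful than the paper's.
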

\begin{proof}
  In dimension 2, we can always assume that the cone $K$ is generated
  by a bid-ask matrix (see Example~\ref{ex:bm-cone}), so that $K$ and
  its dual are given by
  \begin{displaymath}
    K=\{\lambda b_1+\delta b_2:\,\lambda,\delta\geq 0\}\,,\quad
    \quad K'=\{\lambda a_1+\delta a_2:\,\lambda,\delta\geq 0\}\,,
  \end{displaymath}
  where
  \begin{align*}
    b_1=(1,-\pi^{(21)})\,,\quad\quad & b_2=(-\pi^{(12)},1)\\
    a_1=(\pi^{(21)},1)\,,\quad\quad & a_2=(1,\pi^{(12)})\,.
  \end{align*}
  By Proposition~\ref{prop:ident-comp-not} and by the coherence of
  $\risk$,
  \begin{equation}\label{eq:rhoselnotbidim}
    \rhoselnot(X+K)
    =\bigcap_{i=1,2}\{x:\,\risk(\langle X,a_i\rangle)
    \leq\langle x,a_i\rangle\}\,.
  \end{equation}
  In order to obtain a point lying on the boundaries of both
  $\rhos(X+K)$ and $\rhoselnot(X+K)$, consider the positive random
  variable
  \begin{displaymath}
    \zeta=(X_1-\essinf X_1)/\pi^{(12)}\,.
  \end{displaymath}
  Then
  \begin{displaymath}
    X+\zeta b_2=\left(\essinf X_1,X_2+(X_1-\essinf
      X_1)/\pi^{(12)}\right)
  \end{displaymath}
  has the first a.s. deterministic coordinate. Since $\zeta$ is
  a.s. non-negative, $X+\zeta b_2$ is a selection of $X+K$.  Define
  \begin{displaymath}
    x_1=\vecrisk(X+\zeta b_2)
    =\left(-\essinf X_1,\frac{\risk(\langle X,a_2\rangle)
        +\essinf X_1}{\pi^{(12)}}\right)\,.
  \end{displaymath}
  The fact that $X+\zeta b_2\in X+K$ guarantees that
  $x_1\in\rhos(X+K)\subset\rhoselnot(X+K)$.  Since $a_2\in\R^2_+$ and
  the first component of $X+\zeta b_2$ is constant,
  \begin{equation}
    \label{eq:b2}
    \langle x_1,a_2\rangle
    =\risk(\langle X+\zeta b_2,a_2\rangle)
    =\risk(\langle X,a_2\rangle)\,,
  \end{equation}
  where the last equality holds because $a_2$ and $b_2$ are
  orthogonal.  Because of (\ref{eq:rhoselnotbidim}) and (\ref{eq:b2}),
  $x_1$ lies on the supporting line of $\rhoselnot(X+K)$ which is
  normal to $a_2$, whence it lies on the boundaries of $\rhos(X+K)$
  and $\rhoselnot(X+K)$.

  By a similar argument,
  \begin{displaymath}
    x_2=\left(\frac{\risk(\langle X,a_1\rangle)
        +\essinf X_2}{\pi^{(21)}},-\essinf X_2\right)
  \end{displaymath}
  lies on the supporting line of $\rhoselnot(X+K)$ which is normal to
  $a_1$, so on the boundaries of $\rhos(X+K)$ and $\rhoselnot(X+K)$.

  Let $b$ be the radius of a closed ball centred at the origin and
  containing the triangle with vertices at $x_1$, $x_2$, and the vertex
  of cone $\rhoselnot(X+K)$. If $x\in\rhoselnot(X+K)$ with $\|x\|\geq
  b$, then clearly $x\in \lambda x_1+(1-\lambda)x_2+\check{K}$ for
  some $0\leq\lambda\leq 1$, which by the convexity of $\rhos(X+K)$
  and the fact that $\rhos(X+K)=\rhos(X+K)+\check{K}$ guarantees that
  $x\in\rhos(X+K)$.
\end{proof}

\section{Numerical risk measures for set-valued portfolios}
\label{sec:numer-risk-meas}

The set-valued risk measure $\rhos$ gives rise to several \emph{numerical}
coherent risk measures, i.e. functionals $\numrisk(\X)$ with values in
$\R\cup\{\infty\}$ that satisfy the following properties.
\begin{enumerate}
\item There exists $u\in\R^d$ such that $\numrisk(\X+a)=\numrisk(\X)-\langle
  a,u\rangle$ for all $a\in\R^d$.
\item If $\X\subset\Y$, then $\numrisk(\X)\geq \numrisk(\Y)$.
\item $\numrisk(c\X)=c\numrisk(\X)$ for all $c>0$.
\item $\numrisk(\X+\Y)\leq \numrisk(\X)+\numrisk(\Y)$.
\end{enumerate}

The canonical scalarisation construction relies on the support
function of $\rhos(\X)$.  Namely, 
\begin{displaymath}
  \numrisk_u(\X)=-h_{\rhos(\X)}(-u)=\inf\{\langle u,x\rangle:\;
  x\in \rhos(\X)\}
\end{displaymath}
is a law invariant coherent risk measure for each
$u\in\R_+^d$. Furthermore, $\X$ is acceptable under $\rhos$, i.e.
$0\in\rhos(\X)$, if and only if $\X$ is acceptable under $\numrisk_u$
for all $u\in\R_+^d$.
If the exchange cone is trivial, i.e. $\X=X+\R_-^d$, then
$\numrisk_u(\X)=\langle \vecrisk(X),u\rangle$ is given by a linear
combination of the marginal risks.

Let $Z\in L^q(\R^d)$. The \emph{max-correlation} risk measure of
$p$-integrable random vector $X$ is defined as
\begin{displaymath}
  \maxcor_Z(X)=\sup_{\tilde{Z}\sim Z} \frac{\E\langle
    -X,\tilde{Z}\rangle}{\E Z}\,,
\end{displaymath}
where the supremum is taken over all random vectors $\tilde{Z}$
distributed as $Z$, see \cite{bur:rues06,rues06}. A general coherent
numerical risk measure of $X$ can be represented as the supremum of
$\maxcor_Z(X)$ over a family $Z\in\sZb\subset L^q(\R^d)$. Then the set
of $x\in\R^d$ that make $X$ acceptable is given by
\begin{displaymath}
  \{x\in\R^d:\; \E\langle x,Z\rangle\geq
  -\E\langle X,\tilde{Z}\rangle,\, \tilde{Z}\sim Z,\, Z\in\sZb\}\,,
\end{displaymath}
which is $\rhosel$ given by \eqref{eq:xkr}.

In the spirit of \citep{far:koc:mun13}, it is possible to define
the numerical risk of $\X$ as infimum of the payoffs $f(\eta)$ for
eligible portfolios $\eta$, such that $\X+\eta$ is acceptable,
i.e. contains a selection with all acceptable components. In the
conical market model and selection risk measures setting, this can be
rephrased as infimum of $f(\eta)$ for $\eta$ from
$\sA+\sel^p(\check{\X})$, where $\sA$ is the subset of $L^p(\R^d)$
that consists of random vectors with all individually acceptable
components.

\section{Dual representation}
\label{sec:cont-dual-repr}

The representation of the selection risk measure by \eqref{eq:union}
can be regarded as its primal representation. In
this section we arrive at the dual representations of the selection
risk measure and also general set-valued coherent risk measures of
set-valued portfolios.

\subsection{Lipschitz space of random sets}
\label{sec:lipsch-space-rand}

In general, the support function $h_\X(u)$ at a deterministic $u$ may
be infinite with a positive probability. In order to handle this
situation, we identify a random set-valued portfolio with its support
function evaluated at random directions and consider some special
families of set-valued portfolios.

Fix a (possibly random) closed convex cone $\gc\subset\R_+^d$, define
$\gc_1=\{x\in\gc:\;\|x\|=1\}$. Since $\gc_1$ is compact, all its
selections are bounded. Note that $\sel^\infty(\gc_1)$ is endowed with
the $L^\infty$-metric. Let $\Lip^p(\gc)$ be the family of all
functions $h:\sel^\infty(\gc_1)\mapsto L^p(\R)$ such that $h$ is
uniformly $p$-integrable on $\sel^\infty(\gc_1)$ and is Lipschitz with
$p$-integrable Lipschitz constant $\lipc{h}$. The \emph{Lipschitz
  space} is $\Lip^p(\gc)$ with the norm $\|h\|_L$ defined as the
maximum of the $L^p$-norm of $\lipc{h}$ and the $L^p$-norm of the
supremum of $|h(Y)|$ over all $Y\in\sel^\infty(\gc_1)$.

A set-valued portfolio $\X$ is said to belong to the space
$\Lip^p(\gc)$ if the effective domain of its support function
$h_\X(\cdot)$ is $\gc$, and $h_\X$ belongs to the space
$\Lip^p(\gc)$. Since 
\begin{displaymath}
  h_{\X\bplus\Y}(Y)=h_{\X}(Y)+h_{\Y}(Y)\,,
\end{displaymath}
$h_\X$ provides a linear embedding of such set-valued portfolios into
$\Lip^p(\gc)$.

\begin{lemma}
  \label{lemma:lip-space}
  Assume that $\X=\K+\Y$, where $\K$ is a convex cone and $\Y$ is a
  random closed set with the $p$-integrable norm
  $\|\Y\|=\sup\{\|y\|:\; y\in \Y\}$. Then $\X$ belongs to
  $\Lip^p(\gc)$, where $\gc=\K'$ is the dual cone to $\K$ and $\|h_\X\|_L$
  is at most the $L^p$-norm of $\|\Y\|$.
\end{lemma}
\begin{proof}
  The domain of the support function of $\X$ is $\gc$, and
  $h_\X(u)=h_\Y(u)$ for all $u\in\gc$. It is known that the support
  function of a compact set is Lipschitz with the Lipschitz constant
  being the norm of this set, see \citep[Th.~F.1]{mo1}. Thus, both the
  Lipschitz constant of $h_\X$ and the supremum of $|h_\X(u)|$ for
  $u\in\gc_1$ are bounded by $\|\Y\|$. 
\end{proof}

A quasi-bounded portfolio $\X$ belongs to $\Lip^p(\R_+^d)$ if
$\partial^+\X$ is $p$-integrably bounded. If
$M=\{(x_1,x_2)\in\R_-^2:\; x_1x_2\geq 1\}$, then $\X=X+M$ does not
belong to $\Lip^p(\R_+^2)$, no matter what $X$ is.

\begin{examplec}
  The random set $\X=X+\K$ for an exchange cone $\K$ with an efficient
  friction belongs to $\Lip^p(\K')$ if $X$ is $p$-integrable, and
  $\|h_\X\|_L$ is bounded by the $L^p$-norm of $\|X\|$. In the case of some
  frictionless exchanges the choice of $X$ is not unique, and the above
  statement holds with $X$ chosen from the intersection of $\X$ and
  the linear hull of $\K'$. 

  Considering portfolios of the form $X+\K$ from the space
  $\Lip^p(\K')$ means that all of them share the same exchange cone
  $\K$. This is reasonable, since the diversification effects affect
  only the portfolio components, while the (possibly random) exchange
  cone remains the same for all portfolios. This setting also
  corresponds to one adapted in \cite{ham:hey:rud11,kul08} by assuming
  that the conical models share the same exchange cone.
\end{examplec}

Consider linear functionals acting on $\Lip^p(\gc)$ as
\begin{equation}
  \label{eq:l-func}
  \langle \mu,h\rangle=\E \int h(u)\mu(du)
  =\E \sum_{i=1}^n \eta_i h(Y_i)\,,
\end{equation}
where $\mu$ is a random signed measure with $q$-integrable weights
$\eta_1,\dots,\eta_n$ assigned to its atoms
$Y_1,\dots,Y_n\in\sel^\infty(\gc_1)$ and any $n\geq 1$.  Note that
\eqref{eq:l-func} does not change if $\mu$ is a signed measure
attaching weights $1$ or $-1$ to $Z_i=\eta_iY_i\in\sel^q(\gc)$,
$i=1,\dots,n$.
These functionals form a linear space and build a complete family, in
particular, the values of $\langle \mu,h_\X\rangle$ uniquely identify
the distribution of $\X$. Indeed, by the Cram\'er--Wold device, it
suffices to take $\mu$ with atoms located at selections of $\gc$
scaled by $q$-integrable random variables in order to determine the
joint distribution of the values of $h_\X$ at these selections. 

\subsection{Weak-star convergence and dual representation}
\label{sec:weak-star-conv}

It is known \citep[Th.~4.3]{joh70} that a sequence of functions from a
Lipschitz space with values in a Banach space $E$ weak-star converges
if and only if the norms of the functions are uniformly bounded and
their values at each given argument weak-star converge in $E$. Note
that the weak convergence in Lipschitz spaces has not yet been
characterised, see \cite{han97}. In our case $E=L^p(\R)$. Thus, the
sequence $\X_n$ weak-star converges to $\X$ if and only if \\
i) $\|h_{\X_n}\|_L$ are uniformly bounded in $n$;\\
ii) for each $Y\in\sel^\infty(\gc_1)$, $h_{\X_n}(Y)$ converges to
$h_\X(Y)$ weakly in $L^p(\R)$ if $p\in[1,\infty)$ and in probability
if $p=\infty$.

\begin{lemma}
  \label{lemma:w-star}
  Let $\X$ and $\X_n$, $n\geq1$, belong to the Lipschitz space
  $\Lip^\infty(\gc)$.  If $\X_n$ weak-star converges to $\X$ in
  $\Lip^\infty(\gc)$, then the Hausdorff distance $\dha(\X_n,\X)$
  converges to zero in probability and $\X_n\subset M+\gc'$ a.s. for
  a deterministic compact set $M$ and all $n$.
\end{lemma}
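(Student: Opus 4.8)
The plan is to move everything to the level of support functions and reduce the assertion to the weak-star criterion recalled just before the statement. The first step is the identity
\begin{displaymath}
  \dha(\X_n,\X)=\sup_{u\in\gc_1}|h_{\X_n}(u)-h_\X(u)|
  =\esssup_{Y\in\sel^\infty(\gc_1)}|h_{\X_n}(Y)-h_\X(Y)|\qquad\text{a.s.},
\end{displaymath}
the first equality being Hörmander's embedding for closed convex sets whose barrier cone is $\gc$ (the computation already used in the proof of Lemma~\ref{lemma:lip-space}), and the second because the pointwise supremum over $\gc_1$ is attained along a measurable, hence essentially bounded, selection of $\gc_1$. Applying Lemma~\ref{lemma:lip-space} to the element $h_{\X_n}-h_\X$ of $\Lip^p(\gc)$ gives $\|h_{\X_n}-h_\X\|_L=\|\dha(\X_n,\X)\|_p$, so condition~(i) of the weak-star criterion reads exactly $\sup_n\|\dha(\X_n,\X)\|_p<\infty$; when $p=\infty$, a uniform a.s.\ bound on $\sup_{u\in\gc_1}h_{\X_n}(u)$ corresponds, since the barrier cone of each $\X_n$ is $\gc$ and $\gc\subset\R^d_+$ (so that $\gc'\supset\R^d_-$ controls the recession directions), to the inclusion $\X_n\subset M+\gc'$ for a common deterministic compact $M$. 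Condition~(ii) says that for each fixed $Y\in\sel^\infty(\gc_1)$ one has $h_{\X_n}(Y)\to h_\X(Y)$, weakly in $L^p$ if $p<\infty$ and in probability if $p=\infty$.

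Granting this dictionary, the implication from the Hausdorff condition to weak-star convergence is routine. Condition~(i) holds because a weakly $L^p$-convergent sequence is norm bounded (for $p=\infty$ it is the inclusion in $M+\gc'$ with $M$ compact that bounds $\sup_{u\in\gc_1}|h_{\X_n}(u)|$ uniformly). For~(ii) use the pointwise domination $|h_{\X_n}(Y)-h_\X(Y)|\le\dha(\X_n,\X)$: when $p<\infty$, weak convergence of the nonnegative sequence $\dha(\X_n,\X)$ to $0$ forces $\E\dha(\X_n,\X)\to0$ (test against the constant $1$), hence $h_{\X_n}(Y)\to h_\X(Y)$ in $L^1$ and, being $L^p$-bounded, weakly in $L^p$; when $p=\infty$ the domination transfers convergence in probability directly.

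The substantial direction is from~(i)+(ii) to the Hausdorff convergence. For $p=\infty$ this works cleanly: by the Lipschitz part of Lemma~\ref{lemma:lip-space} each map $u\mapsto h_{\X_n}(u)-h_\X(u)$ is Lipschitz on the compact set $\gc_1$ with a constant bounded by $2R$ a.s.\ uniformly in $n$, so the family is equi-Lipschitz. Fixing $\eps>0$ and choosing finitely many selections $Y_1,\dots,Y_m$ of $\gc_1$ forming an a.s.\ $\eps$-net of $\gc_1$, one gets outside a null set
\begin{displaymath}
  \dha(\X_n,\X)\le\max_{1\le j\le m}|h_{\X_n}(Y_j)-h_\X(Y_j)|+2R\eps\,.
\end{displaymath}
Since~(ii) makes each of the finitely many terms on the right tend to $0$ in probability, so does the maximum, and letting $\eps\downarrow0$ yields $\dha(\X_n,\X)\to0$ in probability; together with the inclusion $\X_n\subset M+\gc'$ coming from~(i) this is exactly the stated condition.

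The case $p\in[1,\infty)$ is where I expect the real difficulty. The same discretisation reduces matters to finitely many directions, but now~(ii) only provides weak $L^p$-convergence of each $h_{\X_n}(Y_j)-h_\X(Y_j)$, and the maximum of weakly null sequences need not be weakly null; since $\dha(\X_n,\X)\ge0$ is $L^p$-bounded by~(i), proving $\dha(\X_n,\X)\rightharpoonup0$ in $L^p$ is equivalent to proving $\E\dha(\X_n,\X)\to0$, i.e.\ $L^1$-convergence, which is strictly stronger than what the discretised data give direction by direction. The plan here is to exploit the uniform $p$-integrability built into the definition of $\Lip^p(\gc)$ together with the equi-Lipschitz estimate to reduce to $\E\max_j|h_{\X_n}(Y_j)-h_\X(Y_j)|\to0$, and then to use that these are values of the \emph{single} weak-star null element $h_{\X_n}-h_\X$ tested against the functionals~\eqref{eq:l-func}, rather than $m$ unrelated weakly null sequences; controlling this supremum over directions of a weak-star null family is, I expect, the crux of the argument.
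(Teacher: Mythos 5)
Your sufficiency direction and your necessity argument for $p=\infty$ are correct and essentially the paper's own: the paper likewise pushes the pointwise bound $|h_{\X_n}(Y)-h_\X(Y)|\le\dha(\X_n,\X)$ through the pairing with $Z\in L^q(\R)$ for one implication, and for the other uses a finite $\eps$-net $Y_1,\dots,Y_k\subset\sel^\infty(\gc_1)$ together with the uniform Lipschitz bound
\begin{displaymath}
  \dha(\X_n,\X)\le\sum_{i=1}^k|h_{\X_n}(Y_i)-h_\X(Y_i)|
  +\eps\bigl(\|h_{\X_n}\|_L+\|h_\X\|_L\bigr)\,.
\end{displaymath}

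The gap is exactly where you place it, and it is not cosmetic: for $p\in[1,\infty)$ you never prove that weak-star convergence of $\X_n$ forces $\E\,\dha(\X_n,\X)Z\to 0$ for $Z\in L^q(\R_+)$; you only announce a plan and concede that ``controlling this supremum over directions'' remains open. That implication is the only nontrivial content of the lemma in this range of $p$, so the proposal as written does not prove the statement. For comparison, the paper finishes at this point by pairing the displayed $\eps$-net bound with $Z\in L^q(\R_+)$ and asserting that each term $\E|h_{\X_n}(Y_i)Z-h_\X(Y_i)Z|$ tends to zero. As you correctly observe, the hypothesis only yields $\E\bigl(h_{\X_n}(Y_i)-h_\X(Y_i)\bigr)Z'\to 0$ for every $Z'\in L^q(\R)$, and convergence of $\E|f_n|Z$ does not follow from $f_n\rightharpoonup 0$ with bounded norms (a Rademacher-type sequence already defeats it). So your diagnosis of where the crux lies is accurate, and filling it requires a genuinely new ingredient beyond what you (or, for that matter, the displayed step of the paper's proof) supply; but a correct diagnosis of a missing step is not a proof of it.
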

\begin{proof}
  The weak-star convergence yields the uniform boundedness for the
  $L^\infty$-norms of $h_{\X_n}$, so that $|h_{\X_n}(u)|$ is bounded
  by a constant $c$ for all $u\in\gc_1$, meaning that $\X_n$ is a subset
  of $M+\gc'$ for a deterministic set $M$.  Furthermore, the Lipschitz
  constant of $h_{\X_n}$ is bounded by $c$. Let $Y_1,\dots,Y_k$ be
  a (random) $\eps$-net in $\gc_1$. Then
  \begin{align*}
    \dha(\X_n,\X)=\sup_{u\in\gc_1}|h_{\X_n}(u)-h_\X(u)|
    \leq \sum_{i=1}^k |h_{\X_n}(Y_i)-h_\X(Y_i)|
    +\eps c\,.
  \end{align*}
  It suffices to note that $h_{\X_n}(Y_i)$ converges in probability to
  $h_\X(Y_i)$ for each $i$.
\end{proof}

Consider a general set-valued coherent risk measure $\rhogen$ from
Definition~\ref{def:svrm}.  If the family 
\begin{displaymath}
  \sA_\rhogen=\{\X\in\Lip^p(\gc):\; \rhogen(\X)\ni 0\}
\end{displaymath}
of acceptable set-valued portfolios is weak-star closed, the risk
measure $\rhogen$ is said to satisfy the \emph{Fatou property}. In
view of the cash invariance property, this formulation of the Fatou
property is equivalent to $\rhogen(\X)\supset\limsup \rhogen(\X_n)$ if
$\X_n$ weak-star converges to $\X$ in $\Lip^p(\gc)$, cf. \cite{kul08}.
Recall that the \emph{upper limit} of $\rhogen(\X_n)$ is the set of
limits for all convergent subsequences of $\{x_n\}$, where $x_n\in
\rhogen(\X_n)$, $n\geq1$, see \cite[Def.~B.4]{mo1}.

\begin{theorem}[Dual representation]
  \label{thr:dual-sets}
  A function $\rhogen$ on random sets from $\Lip^p(\gc)$ with values
  being convex closed upper sets is a set-valued coherent risk measure
  with the Fatou property if and only if $\rhogen(\X)=\rhosel(\X)$
  given by \eqref{eq:dual-rhosel} for a certain family
  $\sZb\subset\sel^q(\gc)$.
\end{theorem}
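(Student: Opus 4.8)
I would prove both implications by passing to acceptance sets. By cash invariance $\rhogen(\X)=\{x\in\R^d:\X\bplus x\in\sA_\rhogen\}$ with $\sA_\rhogen=\{\X\in\Lip^p(\gc):0\in\rhogen(\X)\}$, and likewise $\rhosel(\X)=\{x:\X\bplus x\in\sA_{\rhosel}\}$, where, since $0\in\rhosel(\X)$ is equivalent to $\E h_\X(uZ)\geq0$ for all $Z\in\sZb$ and $u\in\R_+^d$, one has $\sA_{\rhosel}=\bigcap_{Z\in\sZb,\,u\in\R_+^d}\{\X:\E h_\X(uZ)\geq0\}$. So it is enough to match the two acceptance sets.

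\textbf{Sufficiency.} If $\rhogen=\rhosel$, then coherency and closed-valuedness are exactly Theorem~\ref{thr:rhoz-rm}, so only the Fatou property, i.e.\ the weak-star closedness of $\sA_{\rhosel}$, needs checking; being an intersection, it suffices to treat each factor $\{\X:\E h_\X(uZ)\geq0\}$. If $uZ\notin\gc$ with positive probability the factor is all of $\Lip^p(\gc)$; otherwise, if $\X_n\to\X$ weak-star in $\Lip^p(\gc)$, the characterisation of weak-star convergence in Section~\ref{sec:weak-star-conv}, item~(ii), gives $h_{\X_n}(uZ/\|uZ\|)\to h_\X(uZ/\|uZ\|)$ weakly in $L^p(\R)$ when $p<\infty$ and in probability when $p=\infty$, and multiplying by $\|uZ\|\in L^q(\R)$ (respectively using the uniform bound on $\|h_{\X_n}\|_L$ together with dominated convergence when $p=\infty$) yields $\E h_{\X_n}(uZ)\to\E h_\X(uZ)$, so the inequality $\geq0$ survives in the limit.

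\textbf{Necessity.} Let $\rhogen$ be a set-valued coherent risk measure with the Fatou property. Then $\sA_\rhogen$ is (a) a convex cone, from homogeneity and subadditivity; (b) weak-star closed, by the Fatou property; (c) monotone, indeed stable under $\X\mapsto\X\bplus G$ for every set-valued portfolio $G$ with $0\in G$ a.s., since such $G$ is itself acceptable (the remark after Definition~\ref{def:svrm}) and hence $\rhogen(\X\bplus G)\supset\rhogen(\X)+\rhogen(G)\ni0$. Because the weak-star continuous functionals on $\Lip^p(\gc)$ are precisely those of the form \eqref{eq:l-func}, the Hahn--Banach theorem applied to the weak-star closed convex cone $\sA_\rhogen$ gives, for each $\X_0\notin\sA_\rhogen$, such a functional $\mu$ with $\langle\mu,h_{\X_0}\rangle<0\leq\langle\mu,h_\Y\rangle$ for all $\Y\in\sA_\rhogen$. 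Property~(c) is then used to trim $\mu$: adding to an acceptable $\Y$ the scaled portfolio $t(\{a\}+\R_-^d)$, $a\in\R_+^d$, $t>0$, keeps it acceptable and changes $h_\Y$ by $t\langle a,\cdot\rangle$, so letting $t\to\infty$ forces $\mu$ to be nonnegative on $u\mapsto\langle a,u\rangle$ for every $a\in\R_+^d$; testing more generally against support functions of lower sets containing the origin forces $\mu$ to act through $q$-integrable random vectors valued in $\gc$ a.s., which are collected into the family $\sZb$. Running this over all $\X_0\notin\sA_\rhogen$ gives $\sA_\rhogen=\sA_{\rhosel}$, the $(Z,u)$-splitting in \eqref{eq:dual-rhosel} being mere bookkeeping, and cash invariance then upgrades the equality of acceptance sets to $\rhogen(\X)=\rhosel(\X)$.

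\textbf{Main obstacle.} The hard part is the final step of necessity: converting a generic weak-star continuous separating functional \eqref{eq:l-func} --- a finite signed combination of evaluations at several random directions --- into the form appearing in \eqref{eq:dual-rhosel}, i.e.\ showing that monotonicity of $\sA_\rhogen$ lets one discard the negative part and realise the remainder through genuine $q$-integrable selections of $\gc$ without weakening the constraint on $\sA_\rhogen$. The compactness of $\gc_1$ (ensuring all its selections are bounded and that the weak-star convergence characterisation applies) and the integrability bookkeeping for the products $uZ$ are the delicate technical ingredients; everything else is routine Fenchel--Moreau/bipolar duality.
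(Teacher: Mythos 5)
Your proposal is correct and follows essentially the same route as the paper: sufficiency via Theorem~\ref{thr:rhoz-rm} plus weak-star closedness of the acceptance set (dominated convergence and the uniform norm bound), and necessity via separation of the weak-star closed convex cone $\sA_\rhogen$ by functionals of the form \eqref{eq:l-func}, which is exactly the paper's bipolar-theorem argument, with non-negativity of the dual measures extracted from the acceptability of portfolios of the form $\Y+\gc'$ containing the origin. The ``main obstacle'' you flag --- realising the separating functionals through non-negative $q$-integrable atoms in $\gc$ --- is precisely the step the paper also treats tersely, so nothing is missing relative to the published argument.
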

\begin{proof}
  \textsl{Necessity.}  Note first that \eqref{eq:dual-rhosel} can be
  equivalently written as
  \begin{equation}
    \label{eq:dual-rep-mu}
    \rhosel(\X)=\{x:\; \langle x,\E \int u\mu(du)\rangle
    \geq -\E \int h_\X(u)\mu(du)\,,\; \mu\in\sM\}
  \end{equation}
  for $\sM$ being the family of counting measures with atoms
  from $\sZb$.

  The dual cone $\sM$ to $\sA_\rhogen$ is the family of signed
  measures $\mu$ on $\gc$ with $q$-integrable total variation such
  that $\langle \mu,h_\X\rangle \geq 0$ for all
  $\X\in\sA_\rhogen$. Since $\X=\Y+\gc'$ is acceptable for each
  random compact convex set $\Y$ containing the origin, each
  $\mu\in\sM$ is non-negative.  By the bipolar theorem, $\sA_\rhogen$
  is the dual cone to $\sM$.

  \textsl{Sufficiency.}  By Theorem~\ref{thr:rhoz-rm}, $\rhosel$ is a
  set-valued coherent risk measure. By an application of the dominated
  convergence theorem and noticing that the weak-star convergence
  implies the uniform boundedness of the norms, its acceptance set is
  weak-star closed.
\end{proof}

\begin{examplec}
  If $\X=X+\K$ for a random exchange cone $\K$ and $p$-integrable
  random vector $X$, then $\X\in\Lip^p(\gc)$ with $\gc=\K'$ and
  $h_\X(Y)=\langle X,Y\rangle$ for all $Y\in L^\infty(\gc_1)$. If
  $\X_n=X_n+\K$, then the weak-star convergence of $X_n$ to $X$ implies
  that $h_{\X_n}(Y)$ weakly converges to $h_\X(Y)$ in $L^p$ for
  $p\in[1,\infty)$ and in probability if $p=\infty$ for all
  $Y\in\Lip^p(\gc_1)$. Furthermore, the weak convergence of $X_n$
  implies the boundedness of their $L^p$-norms and so the uniform
  boundedness of $\|h_{\X_n}\|_L$. In case $p=\infty$, the uniform
  boundedness of the norms follows from the weak-star convergence of
  $X_n$ to $X$. Thus, $\X_n$ weak-star converges to $\X$.  In the case
  of (partially) frictionless models the random vectors $X_n$ and $X$
  should be chosen appropriately from the intersections of the
  corresponding set-valued portfolios with the linear hull of $\K'$.
  By Theorem~\ref{thr:dual-sets}, set-valued coherent risk measures
  defined on portfolios $\X=X+\K$ and satisfying the Fatou property
  can be represented by \eqref{eq:dual-rhosel},
  which is exactly the dual representation \cite{kul08} for $p=\infty$
  and from \cite{ham:hey:rud11} for a general $p$. 
\end{examplec}

\subsection{Fatou property of selection risk measure}
\label{sec:fatou-prop-select}

Since the selection risk measure $\rhos$ is a special case of a
general set-valued coherent risk measure, $\rhos=\rhosel$ for a
suitable (and possibly non-unique) family $\sZb$ provided $\rhos$
satisfies the Fatou property. Note that the Fatou property of $\rhos$
is weaker than the Fatou property of $\rhonot$, which is established
in the following theorems. Even for the simplest selection risk
measure being the selection expectation (see Example~\ref{ex:expect}),
the Fatou property is a rather delicate result proved in
\cite{bal:hes95}.

\begin{theorem}
  \label{thr:fp-rhos}
  Assume that $\xi\in\gc'$ a.s. and $\vecrisk(\xi)\leq 0$ imply that
  $\xi=0$ a.s.  If $p\in(1,\infty)$, then the selection risk measure
  $\rhonot$ satisfies the Fatou property on random sets from
  $\Lip^p(\gc)$.
\end{theorem}
\begin{proof}
  By the cash invariance property, it suffices to assume that $x_n\to
  0$ for $x_n\in\rhonot(\X_n)$, with the aim to show that
  $0\in\rhonot(\X)$.  Without loss of generality assume that
  $x_n=\vecrisk(\xi_n)$ for $\xi_n\in L^p(\X_n)$, $n\geq1$. 

  Assume first that $\sup_n \|\xi_n\|_p<\infty$. Then there is a
  subsequence $\xi_{n_k}$ that weakly converges to $\xi$ in $L^p$.
  For each $Y\in L^\infty(\gc_1)$ and $Z\in L^q(\R_+)$,
  \begin{equation}
    \label{eq:xi-n-bound}
    \langle \xi_n,Y\rangle Z \leq h_{\X_n}(Y)Z. 
  \end{equation}
  In view of the weak-star convergence of $\X_n$ to $\X$, and passing
  to the limits as $n\to\infty$, we have
  \begin{displaymath}
    \E \langle \xi,Y\rangle Z \leq \E h_{\X}(Y)Z. 
  \end{displaymath}
  Thus, $\langle \xi,Y\rangle \leq h_{\X}(Y)$ a.s. meaning that $\xi$
  is a selection of $\X$. 

  The $L^p$-weak lower semicontinuity of the risk
  measure (see \cite[Th.~3.1]{kain:rues09}) yields that
  \begin{displaymath}
    \vecrisk(\xi)\leq\liminf \vecrisk(\xi_n)=0,
  \end{displaymath}
  where the lower limit is understood coordinatewisely, meaning that
  $\X$ is acceptable.

  Now assume that $\|\xi_n\|_p\to \infty$. The normalised sequence
  $\xi'_n=\xi_n/\|\xi_n\|_p$ is $L^p$-bounded and so admits a weakly
  convergent subsequence and so we assume without loss of generality
  that $\xi'_n$ weakly converges in $L^p$ to $\xi'$. Then $\xi'$ is
  acceptable since
  \begin{displaymath}
    \vecrisk(\xi')\leq\liminf \vecrisk(\xi'_n)=0. 
  \end{displaymath}
  By \eqref{eq:xi-n-bound}, $\E \langle \xi',Y\rangle Z\leq 0$ for all
  $Y\in L^\infty(\gc_1)$ and $Z\in L^q(\R_+)$.
  Thus, $\langle \xi',Y\rangle\leq 0$ a.s., so that $\xi'\in
  L^p(\gc')$ and $\vecrisk(\xi')\leq0$. This contradicts the imposed
  condition, since $\|\xi'\|_p=1$ and so $\xi'$ is not zero with a positive
  probability.
\end{proof}

The condition of Theorem~\ref{thr:fp-rhos} holds if the interior of
$\gc$ almost surely contains a deterministic point $u$ and $\vecrisk$
has all identical components $\risk$ such that $\risk(\eta)>0$ for any
non-trivial non-positive $\eta$. Indeed, then
$\eta=\langle\xi,u\rangle$ is strictly negative for
$\xi\in\gc'\setminus\{0\}$ contrary to
$\risk(\langle\xi,u\rangle)\leq\langle\vecrisk(\xi),u\rangle\leq 0$.

\begin{theorem}
  \label{thr:fp}
  For $p=\infty$, the selection risk measure $\rhonot$
  satisfies the \emph{Fatou property} on quasi-bounded set-valued
  portfolios.
\end{theorem}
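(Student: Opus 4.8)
The plan is to mimic the structure of the proof of Theorem~\ref{thr:fp-rhos}, but to replace weak-$L^p$ arguments with almost-sure ones obtained through Skorohod coupling, since for $p=\infty$ weak-star convergence in the Lipschitz space amounts (by Lemma~\ref{lemma:w-star}) to convergence of $\dha(\X_n,\X)$ in probability together with a uniform containment $\X_n\subset M+\gc'$ a.s. for a deterministic compact set $M$. Suppose $\X_n$ weak-star converges to $\X$ in $\Lip^\infty(\gc)$ with all $\X_n$ (hence $\X$) quasi-bounded, and let $\xi_n\in\sel^\infty(\X_n)$ be acceptable, i.e. $\vecrisk(\xi_n)\leq 0$. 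First I would restrict attention to the efficient parts: by the monotonicity remark following the definition of acceptability, each $\X_n$ has an acceptable selection taking values in $\partial^+\X_n$, and the uniform containment $\X_n\subset M+\gc'$ together with the lower-set property forces these efficient selections to lie in a fixed deterministic bounded set, so the $\xi_n$ may be taken uniformly essentially bounded.

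Next I would pass to a subsequence along which the joint law of $(\X_{n(k)},\xi_{n(k)})$ converges --- tightness is free since the $\xi_{n(k)}$ live in a fixed compact set and the $\X_{n(k)}$ can be encoded through the values of their support functions on a countable dense subset of $\sel^\infty(\gc_1)$, all uniformly bounded by $h_M$. By the extended Skorohod coupling (\cite[Cor.~6.12]{kalle}), realise $(\X_{n(k)},\xi_{n(k)})$ and a limit $(\tilde\X,\tilde\xi)$ on a common probability space with almost sure convergence; here $\tilde\X$ has the same distribution as $\X$ and, by law invariance of $\rhonot$ established in Theorem~\ref{thr:rhos}, it suffices to show $\tilde\X$ acceptable. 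Since $\xi_{n(k)}\to\tilde\xi$ a.s. and $\xi_{n(k)}\in\X_{n(k)}$ a.s. while $\X_{n(k)}\to\tilde\X$ in the Hausdorff sense on the common space (this is where one uses that the almost sure limit of the coupled support functions is the support function of a closed set containing the limit points of selections), one gets $\tilde\xi\in\tilde\X$ a.s., so $\tilde\xi$ is an essentially bounded selection of $\tilde\X$. Finally, the Fatou property of each univariate risk measure $\risk_i$ on $L^\infty(\R)$ --- which holds by law invariance via \cite{jouin:sch:touz06} --- gives $\risk_i(\tilde\xi_i)\leq\liminf_k\risk_i(\xi_{n(k),i})\leq 0$, so $\vecrisk(\tilde\xi)\leq 0$ and $\tilde\X$ is acceptable.

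The main obstacle I anticipate is the step showing that the almost sure limit $\tilde\xi$ is genuinely a selection of $\tilde\X$: one must be careful that Hausdorff-type convergence of the (lower, unbounded) random sets $\X_{n(k)}$ is correctly extracted from the a.s. convergence of the coupled support functions restricted to $\gc_1$, and that no mass of the selections escapes to infinity. Quasi-boundedness and the uniform deterministic bound $\X_n\subset M+\gc'$ are precisely what prevent this escape, confining the efficient parts to a compact set and making the limiting selection well-behaved; I would phrase this via the convergence $\partial^+\X_{n(k)}\to\partial^+\tilde\X$ in the Hausdorff metric on the common space, which is legitimate because all these efficient parts sit in a single deterministic compact set on which the Hausdorff distance is finite and metrises convergence. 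The remaining verifications --- tightness, the dominated convergence estimate controlling $\sup_{u\in\gc_1}|h_{\X_{n(k)}}(u)-h_{\tilde\X}(u)|$, and the final $\liminf$ inequality --- are routine given the tools already assembled in the excerpt.
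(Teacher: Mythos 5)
Your proposal is correct and follows essentially the same route as the paper's proof: confine the efficient parts and the acceptable selections to a fixed compact set, extract a subsequence of the pairs $(\partial^+\X_n,\xi_n)$ converging in distribution, apply the Skorohod coupling to obtain almost sure convergence, identify the limit as a selection of (a copy of) $\X$, and conclude via the Fatou property of the univariate components. Your explicit appeal to the law invariance of $\rhonot$ to justify replacing $\X$ by the coupled copy $\tilde\X$ is a point the paper leaves implicit.
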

\begin{proof}
  If necessary by passing to a subsequence, assume that
  $x_n\in\rhonot(\X_n)$ and $x_n\to 0$. Then, for all $n$ there exists
  $\xi_n\in\sel^\infty(\partial^+\X_n)$ such that
  $\vecrisk(\xi_n+x_n)\leq 0$.

  The weak-star convergence of $\X_n$ in
  $\Lip^\infty(\R_+^d)$ implies that $|h_{\X_n}(u)|\leq c$ for a
  constant $c$ and all $u\in\R_-^d$, $n\geq1$. Thus, $\partial^+\X_n$,
  $n\geq1$, are all subsets of a fixed compact set $M$, and the
  Hausdorff distance $\dha(\X_n,\X)$ converges to zero
  in probability as $n\to\infty$.
  By \cite[Th.~1.6.21]{mo1}, $\X_n$ converges to $\X$ in probability
  as random closed sets. The convergence in probability implies the
  weak convergence of random closed sets, see \cite[Cor.~1.6.22]{mo1},
  so that $\partial^+\X_n$ weakly converges to $\partial^+\X$ because
  the map $\partial^+$ is continuous on quasi-bounded
  portfolios. Since the family of convex subsets of a compact set is
  compact in the Hausdorff metric and $\xi_n\in M$, it is possible to
  find a subsequence of $(\partial^+\X_n,\xi_n)$ that weakly converges
  to $(\partial^+\X,\xi)$. Pass to the chosen subsequence and realise
  the pairs $(\partial^+\X_n,\xi_n)$ on the same probability space, so
  that $\partial^+\X_n\to\partial^+\X$ in the Hausdorff metric and
  $\xi_n\to\xi$ a.s. with $\xi\in\partial^+\X$ a.s.

  Since the components of $\vecrisk$ satisfy the Fatou property and
  $\{\xi_n\}$ are uniformly bounded,
  \begin{displaymath}
    \vecrisk(\xi)\leq \liminf\vecrisk(\xi_n+x_n)\leq 0\,.
  \end{displaymath}
  Thus, $0\in\rhonot(\X)$.
\end{proof}

Recall that $\sZ_1,\dots,\sZ_d$ denote the subsets of $L^q(\R_+)$ that
yield the maximal dual representations of the components of
$\vecrisk=(\risk_1,\dots,\risk_d)$, see \eqref{eq:dual-ri}. 

\begin{corollary}
  \label{cor:rhos-cases}
  The selection risk measure $\rhos(\X)$ on set-valued portfolios $\X$
  from $\Lip^p(\gc)$ for $p\in(1,\infty)$ (under conditions of
  Theorem~\ref{thr:fp-rhos}) and on quasi-bounded
  portfolios for $p=\infty$ equals $\rhonot(\X)$ and admits
  representation as $\rhosel(\X)$ for a family $\sZb\subset
  L^q(\R^d)$ such that each $Z=(\zeta_1,\dots,\zeta_d)\in\sZb$
  satisfies $\zeta_i\in\sZ_i$ for all $i=1,\dots,d$.
\end{corollary}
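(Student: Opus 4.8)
The plan is to combine the Fatou property of $\rhonot$, established in Theorems~\ref{thr:fp-rhos} and~\ref{thr:fp}, with the general dual representation in Theorem~\ref{thr:dual-sets}, and then to verify that the dominating family $\sZb$ produced by the dual representation can be taken with components in $\sZ_1,\dots,\sZ_d$. The first step is to observe that on the relevant class of portfolios ($\Lip^p(\gc)$ for $p<\infty$, quasi-bounded for $p=\infty$), the Fatou property of $\rhonot$ forces $\rhonot(\X)$ to be closed for each such $\X$: a weak-star convergent sequence with the \emph{constant} portfolio $\X$ (as in the cash-invariance reformulation of the Fatou property) shows that $\rhonot(\X)$ contains all limits of points $x_n\in\rhonot(\X)$, i.e.\ $\rhonot(\X)=\cl\rhonot(\X)=\rhos(\X)$. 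Thus the first assertion, $\rhos(\X)=\rhonot(\X)$, is immediate from the preceding two theorems.

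\textbf{Dual representation.} Having $\rhos=\rhonot$ closed-valued with the Fatou property on the space $\Lip^p(\gc)$, Theorem~\ref{thr:dual-sets} applies verbatim and yields a family $\sZb\subset\sel^q(\gc)$ with $\rhos(\X)=\rhosel(\X)$ given by~\eqref{eq:dual-rhosel}. At this point $\sZb$ is only the dual cone to the acceptance set $\sA_{\rhonot}$ inside $\Lip^p(\gc)$, and nothing yet links its coordinates to the individual $\sZ_i$. The next step is to refine this: by Theorem~\ref{thr:domin-exp}, for \emph{any} family $\sZb_0$ of $q$-integrable vectors with $\zeta_i\in\sZ_i$ we already have $\rhos(\X)\subset\rhosel[\sZb_0](\X)$. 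So it suffices to produce \emph{one} such $\sZb_0$ for which the reverse inclusion holds, i.e.\ for which $\rhosel[\sZb_0]$ is \emph{exact}. The natural candidate is the maximal such family, namely $\sZb_0=\{Z\in L^q(\R^d_+):\zeta_i\in\sZ_i,\ i=1,\dots,d\}$ (or, to obtain law invariance as in Remark~\ref{rem:li-rosel}, its closure under equidistribution). One then has to check that the abstract family $\sZb$ delivered by Theorem~\ref{thr:dual-sets} is contained in $\sZb_0$: a signed (in fact non-negative) measure $\mu$ lies in the dual cone $\sM$ of $\sA_{\rhonot}$ iff $\langle\mu,h_\X\rangle\geq0$ for every acceptable $\X$; testing against the half-space portfolios $\X=\xi+\R^d_-$ with $\risk_i(\xi_i)\le 0$ (these are acceptable by definition of acceptability of selections) forces the $i$-th coordinate marginal of each atom of $\mu$ into the dual cone $\sZ_i$. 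Hence every atom $Z$ of every $\mu\in\sM$ satisfies $\zeta_i\in\sZ_i$, i.e.\ $\sZb\subset\sZb_0$, and therefore $\rhos=\rhosel[\sZb]$ with $\sZb$ of the required form.

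\textbf{Main obstacle.} The routine part is the identification $\rhos=\rhonot$ and the bare application of Theorem~\ref{thr:dual-sets}. The delicate point is the coordinatewise localisation of the dual atoms: one must argue that the acceptance set $\sA_{\rhonot}$ in $\Lip^p(\gc)$ is rich enough — in particular contains all translated negative orthants $\xi+\R^d_-$ with componentwise acceptable $\xi$, which may require that these sets actually lie in $\Lip^p(\gc)$, i.e.\ a compatibility check between the half-space portfolios and the prescribed cone $\gc$ (for $\gc=\R^d_+$ this is automatic; for a genuinely smaller exchange-cone dual $\K'$ one restricts attention to portfolios of the form $X+\K$ and the argument is run with $Z\in\sel^q(\K')$, the componentwise condition $\zeta_i\in\sZ_i$ then being inherited through the embedding $h_{X+\K}(u)=\langle X,u\rangle$ on $\K'$). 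Once this richness is in place, the bipolar argument closes the gap between Theorem~\ref{thr:dual-sets}'s abstract $\sZb$ and the concrete family with $\zeta_i\in\sZ_i$, and the corollary follows.
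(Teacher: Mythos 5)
Your proposal matches the paper's proof: closedness of $\rhonot(\X)$ follows from the Fatou property applied to the constant sequence $\X_n=\X$, the representation as $\rhosel$ comes from Theorem~\ref{thr:dual-sets}, and the localisation $\zeta_i\in\sZ_i$ is obtained by testing the dual elements against acceptable conical portfolios with $\vecrisk(\xi)\le 0$ and invoking the maximality of $\sZ_i$. The paper uses $\xi+\gc'$ directly as the test portfolio, which is precisely the fix you identify for the domain-compatibility issue raised in your last paragraph.
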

\begin{proof}
  Consider the sequence $\X_n=\X$, $n\geq1$. Then the Fatou property
  of $\rhonot$ established in Theorems~\ref{thr:fp-rhos} and
  \ref{thr:fp} implies that the upper limit of $\rhonot(\X)$ (being
  the closure of $\rhonot(\X)$) is a subset of $\rhonot(\X)$, so that
  $\rhonot(\X)$ is closed.

  Consider any $\xi=(\xi_1,\dots,\xi_d)\in L^p(\R^d)$ with
  $\vecrisk(\xi)\leq 0$. Then $\xi+\gc'$ is an acceptable set-valued
  portfolio, so that $0\in\rhosel(\xi+\gc')$. By
  \eqref{eq:dual-rhosel}, this is the case if and only if
  $\E\langle\xi,Z\rangle\geq 0$ for all $Z\in\sZb$. In turn, this is
  equivalent to $\E(\xi_i\zeta_i)\geq 0$ for all $\xi_i$ such that
  $\risk_i(\xi_i)\leq 0$ and $\zeta_i$ being the $i$th component of
  $Z$. Since $\sZ_i$ provides the maximal dual representation of
  $\risk_i$, it necessarily contains $\zeta_i$.
\end{proof}

The following result establishes the Fatou property for the selection
risk measure for $p=\infty$ and portfolios obtained as $\X=X+K$
for a deterministic exchange cone $K$.

\begin{theorem}
  \label{thr:fatou-det-K}
  For $p=\infty$, the selection risk measure $\rhonot$ satisfies the
  Fatou property on the family of portfolios obtained as $X+K$ for an
  essentially bounded random vector $X$ and a fixed deterministic
  exchange cone $K$.
\end{theorem}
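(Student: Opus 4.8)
The plan is to adapt the scheme used for Theorem~\ref{thr:fp}, the new feature being that $X+K$ is not quasi-bounded, so that $\partial^+(X+K)$ is unbounded and one cannot confine the candidate acceptable selections to a fixed compact set as was possible there. By Lemma~\ref{lemma:w-star} and the discussion following it, weak-star convergence $\X_n=X_n+K\to\X=X+K$ in $\Lip^\infty(K')$ means that $X_n\to X$ in probability with $\|X_n\|_\infty$ uniformly bounded (the vectors $X_n,X$ being taken from the intersection of the portfolios with the linear hull of $K'$), and by cash invariance it suffices to show: if $\|Y_n\|_\infty\le c$, $Y_n\to X$ in probability and each $Y_n+K$ is acceptable, then $X+K$ is acceptable. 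A convenient device is to write $X+K=\bigcup_{R>0}(X+K_R)$ with $K_R=(K\cap B_R)+\R^d_-$, so that each $X+K_R$ is quasi-bounded and $\rhonot(X+K)=\bigcup_R\rhonot(X+K_R)$, the pieces $\rhonot(X+K_R)$ being closed by Theorem~\ref{thm:closed}, and $Y_n+K_R\to X+K_R$ weak-star for each fixed $R$. When $K$ contains a line I would first factor out its lineality space $K\cap(-K)$ and argue with the resulting proper cone, exactly as frictionless exchanges were treated earlier; I therefore assume $K$ proper.

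The core of the argument is an a priori bound on the compensating transfers. By Theorem~\ref{thr:dil-mon}, for each $n$ there is a measurable map $g_n\colon\R^d\to K$ with $\vecrisk(Y_n+\eta_n)\le0$ for $\eta_n:=g_n(Y_n)\in K$; set $\xi_n:=Y_n+\eta_n$, which is then essentially bounded for each $n$. Since $K$ is proper and contains $\R^d_-$, the dual cone $K'$ is full-dimensional and lies in $\R^d_+$, so I fix $u^{*}\in K'$ with strictly positive coordinates and, by compactness of $K\cap\{\|\cdot\|=1\}$, a constant $\eps_0>0$ with $\langle v,u^{*}\rangle\le-\eps_0\|v\|$ for every $v\in K$. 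Let $\risk_{*}$ be the coherent risk measure associated with $\sZ_{*}=\bigcap_{i=1}^d\sZ_i$; since $\sZ_{*}$ contains the constant $1$ one has $\risk_{*}(-W)\ge\E W$ for $W\ge0$ and $\risk_{*}\le\risk_i$ for all $i$. Subadditivity and positive homogeneity of $\risk_{*}$ together with $\risk_i((\xi_n)_i)\le0$ give $\risk_{*}(\langle\xi_n,u^{*}\rangle)\le\sum_i u^{*}_i\risk_{*}((\xi_n)_i)\le0$, while $\langle\xi_n,u^{*}\rangle\le\|u^{*}\|\,\|Y_n\|_\infty-\eps_0\|\eta_n\|$; monotonicity of $\risk_{*}$ then yields $\sup_n\E\|\eta_n\|<\infty$, and running the same estimate on $\|\eta_n\|\one_{\{\|\eta_n\|>t\}}$ gives uniform control of the portion of $\|\eta_n\|$ supported on small-probability events.

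With this bound I would proceed as in the proof of Theorem~\ref{thr:fp-rhos}: tightness of $\{(Y_n,\eta_n)\}$ allows one to pass to a subsequence and, by the extended Skorohod coupling \cite[Cor.~6.12]{kalle}, realise everything on one probability space with $Y_n\to X$ and $\eta_n\to\eta$ almost surely; $K$ being closed, $\eta\in K$ a.s. For a truncation level $R$ the radial cut $\tilde\eta_n^R$ of $\eta_n$ at norm $R$ stays in $K$, has norm $\le R$, and $Y_n+\tilde\eta_n^R$ is a selection of $Y_n+K_R$ of norm at most $c+R$; the coordinates of $\tilde\eta_n^R-\eta_n$ being dominated by $\|\eta_n\|\one_{\{\|\eta_n\|>R\}}$, subadditivity and the Fatou property of the components $\risk_i$ give $\vecrisk(Y_n+\tilde\eta_n^R)\le(\delta_R,\dots,\delta_R)$ for a sequence $\delta_R\downarrow0$ not depending on $n$. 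By cash invariance this means $\delta_R\one\in\rhonot(Y_n+K_R)$ for every $n$, so Theorem~\ref{thr:fp} applied to the quasi-bounded portfolios $Y_n+K_R\to X+K_R$ yields $\delta_R\one\in\rhonot(X+K_R)\subset\rhonot(X+K)$ for every $R$.

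At this point one has $\delta_R\one\in\rhonot(X+K)$ with $\delta_R\downarrow0$, hence $0\in\rhos(X+K)=\cl\rhonot(X+K)$; the step I expect to be the main obstacle is to sharpen this to $0\in\rhonot(X+K)$. This amounts to showing that acceptability of $X+K$ can be witnessed by a transfer whose $L^\infty$-norm is controlled by $c$ (equivalently, that the a.s.\ limit $\eta$ is essentially bounded, so that $\delta_R=0$ once $R$ is large enough, or that the increasing union $\bigcup_R\rhonot(X+K_R)$ stabilises), and it requires upgrading the $L^1$/tail bound of the previous step to a genuine uniform $L^\infty$ control on the $\eta_n$: the $L^1$ bound already shows that the truncated-away mass is small, but turning ``small'' into ``risk-irrelevant'' uses the properness of $K$ together with the specific structure of the law-invariant coherent risk measures $\risk_i$ (a Kusuoka-type decomposition into a worst-case part, which by properness forces boundedness of the transfer, and an $\ES_\beta$ part with $\beta$ bounded away from $0$, for which the radial truncation is harmless). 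This is precisely what replaces the quasi-boundedness exploited in Theorem~\ref{thr:fp}, and once it is in place the case of a cone containing frictionless directions follows from the reduction above, the relevant scalarisation being governed by the convolution risk measure $\risk_{*}$, which inherits the Fatou property on $L^\infty$.
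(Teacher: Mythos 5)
Your argument takes a genuinely different route from the paper's, and it contains gaps that you partly acknowledge but do not close. The first problem is the uniform truncation estimate: from $\sup_n\E\|\eta_n\|<\infty$ (and even from a uniform bound on $\E\,\|\eta_n\|\one_{\{\|\eta_n\|>R\}}$) you cannot conclude $\vecrisk(Y_n+\tilde\eta_n^R)\le(\delta_R,\dots,\delta_R)$ with $\delta_R\downarrow 0$ independently of $n$. A law-invariant coherent risk measure need not be continuous with respect to $L^1$-smallness: for $\risk_i=-\essinf$ one has $\risk_i\bigl(-\|\eta_n\|\one_{\{\|\eta_n\|>R\}}\bigr)=\esssup\bigl(\|\eta_n\|\one_{\{\|\eta_n\|>R\}}\bigr)$, which your $L^1$/tail bound does not control at all. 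So the step that feeds Theorem~\ref{thr:fp} already needs the Kusuoka-type structural input you only gesture at in the last paragraph. The second problem is the one you flag yourself: even granting everything up to that point, you obtain $\delta_R\one\in\rhonot(X+K_R)$ for all $R$ and hence only $0\in\rhos(X+K)=\cl\rhonot(X+K)$, whereas the Fatou property of $\rhonot$ (weak-star closedness of its acceptance set) requires $0\in\rhonot(X+K)$ itself. You describe what would be needed to bridge this (a uniform $L^\infty$ bound on the compensating transfers, or stabilisation of $\bigcup_R\rhonot(X+K_R)$) but supply no proof, so the argument is incomplete precisely at its decisive step.

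For comparison, the paper's proof avoids selections and truncations entirely. It scalarises: for the fixed deterministic cone $K$ and each $u\in K'$, the functional $X\mapsto -h_{\rhonot(X+K)}(-u)$ is a law-invariant coherent \emph{numerical} risk measure of the random vector $X$ on $L^\infty(\R^d)$, and therefore automatically satisfies the Fatou property by Theorem~2.6 of \cite{ekel:sch11} (which tolerates the relaxed cash-invariance in the sense of \cite{bur:rues06}). If $x\in\limsup\rhonot(X_n+K)$, comparing $\langle x,u\rangle$ with these support functions for all $u\in K'$ then shows that the upper limit is contained in $\rhonot(X+K)$. This is exactly the place where the deterministic nature of $K$ is used: it makes the scalarisation a function of $X$ alone, so that the deep univariate/vector-argument result of Ekeland and Schachermayer does all the work that your truncation-and-coupling scheme is trying to reproduce by hand.
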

\begin{proof}
  The weak-star convergence of $\X_n=X_n+K$ implies that the Hausdorff
  distance between $\X_n$ and $\X$ converges to zero in probability and $-c\leq
  h_{\X_n}(u)\leq c$ for a constant $c$ and all $u\in K'$. Since there
  exists $a\in K'$ such that $\langle a,u\rangle \geq \eps>0$ for all
  $u\in K'$, we have that $x+K\subset \X_n\subset M+K$ for some
  $x\in\R^d$ and a compact set $M$. If $K$ does not contain any line
  and $\X_n=X_n+K$, this means that $X_n$ is uniformly bounded and
  converges to $X$ in probability, where $\X=X+K$. If $K$ contains a
  line, then $X_n$ and $X$ can be chosen to satisfy this requirement,
  say by letting $X_n$ be the intersection of the largest affine space
  contained in $X_n+K$ with the linear hull of $K'$.

  For each $u\in K'$, the
  numerical risk measure $-h_{\rhonot(X+K)}(-u)$ defined as a function
  of a random vector $X$ is law invariant and coherent,
  so that it satisfies the Fatou property by
  \cite[Th.~2.6]{ekel:sch11} which also applies if the cash invariance
  property is relaxed as in \cite{bur:rues06}. Thus, if $X_n$
  converges to $X$ in probability and $\|X_n\|\leq 1$ a.s. for all
  $n$, then
  \begin{displaymath}
    h_{\rhonot(X+K)}(u)\geq \limsup h_{\rhonot(X_n+K)}(u)\,.
  \end{displaymath}
  If $x\in\limsup \rhonot(X_n+K)$, then $x=\lim x_{n_k}$ for a certain
  sequence $x_{n_k}\in\rhonot(X_{n_k}+K)$. Therefore,
  \begin{align*}
    \langle x,u\rangle =\lim \langle x_{n_k},u\rangle
    \leq \lim h_{\rhonot(X_{n_k}+K)}(u)\leq \limsup h_{\rhonot(X_n+K)}(u)\,.
  \end{align*}
  Thus,
  \begin{displaymath}
    h_{\rhonot(X+K)}(u)\geq h_{\limsup \rhonot(X_n+K)}(u)\,,\quad u\in K'\,,
  \end{displaymath}
  so that $\rhonot(X+K)$ contains the upper limit of $\rhonot(X_n+K)$.
\end{proof}

\begin{corollary}
  \label{cor:kul-rhos}
  If $p=\infty$, the selection risk measure $\rhonot(\X)$ for
  set-valued portfolios $\X=X+K$ with a fixed deterministic exchange
  cone $K$ has closed values and is equal to $\rhosel$ for a certain
  family of integrable random vectors $\sZb$ such that each
  $Z=(\zeta_1,\dots,\zeta_d)\in\sZb$ satisfies $\zeta_i\in\sZ_i$ for all
  $i=1,\dots,d$.
\end{corollary}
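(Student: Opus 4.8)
The plan is to mirror the proof of Corollary~\ref{cor:rhos-cases}, with the Fatou property used there (Theorems~\ref{thr:fp-rhos} and~\ref{thr:fp}) replaced by Theorem~\ref{thr:fatou-det-K}. First I would establish closedness of the values by applying Theorem~\ref{thr:fatou-det-K} to the constant sequence $\X_n\equiv\X=X+K$: the weak-star convergence $\X_n\to\X$ is immediate, since $X_n\equiv X$ is essentially bounded and converges in probability, so the Fatou property gives $\rhonot(\X)\supset\limsup\rhonot(\X_n)=\cl\rhonot(\X)$. Hence $\rhonot(\X)$ is closed and on the family $\{X+K:\,X\in L^\infty(\R^d)\}$ coincides with its closed variant $\rhos$.

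On this family $\rhos$ is, by Theorem~\ref{thr:rhos}, a set-valued coherent risk measure with closed convex upper values, and by Theorem~\ref{thr:fatou-det-K} it has the Fatou property; moreover $\{X+K\}$ is a weak-star closed linear subspace of $\Lip^\infty(K')$, stable under $\bplus$, positive scaling and translations. Applying Theorem~\ref{thr:dual-sets} exactly as in the discussion following it, with the deterministic cone $\gc=K'$, yields $\rhos(X+K)=\rhosel(X+K)$ for some family $\sZb\subset\sel^1(K')$, and every $Z=(\zeta_1,\dots,\zeta_d)\in\sZb$ has non-negative components because $K'\subset\R_+^d$. To see that in addition $\zeta_i\in\sZ_i$, fix $i$ and pick an arbitrary $\xi_i\in L^\infty(\R)$ with $\risk_i(\xi_i)\leq0$; setting $\xi=(0,\dots,\xi_i,\dots,0)$ gives $\vecrisk(\xi)\leq0$, so $\xi$ is an acceptable selection of the portfolio $\xi+K$ (again of the form $X+K$) and $0\in\rhonot(\xi+K)=\rhosel(\xi+K)$. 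Evaluating the dual representation \eqref{eq:dual-rhosel} at $x=0$ and $u=(1,\dots,1)$, and using $h_{\xi+K}(Z)=\langle\xi,Z\rangle$ because $Z\in K'$ a.s.\ and $h_K$ vanishes on $K'$, we obtain $\E(\xi_i\zeta_i)=\E\langle\xi,Z\rangle\geq0$ for every $Z\in\sZb$. Since $\xi_i$ with $\risk_i(\xi_i)\leq0$ was arbitrary and $\sZ_i$ is the maximal family in the dual representation \eqref{eq:dual-ri} of $\risk_i$, this forces $\zeta_i\in\sZ_i$, completing the argument.

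The step that I expect to need the most care is the invocation of Theorem~\ref{thr:dual-sets}: that theorem is phrased for risk measures defined on all of $\Lip^p(\gc)$, whereas here the Fatou property of $\rhos$ is available only on the subclass $\{X+K\}$. One therefore has to check that this subclass is indeed a weak-star closed linear subspace of $\Lip^\infty(K')$ on which the bipolar argument behind Theorem~\ref{thr:dual-sets} goes through verbatim, so that restricting the acceptance set to it does not alter the resulting dual family $\sZb\subset\sel^1(K')$ --- this is precisely the situation already used in the discussion following Theorem~\ref{thr:dual-sets}. Everything else is routine and parallels Corollary~\ref{cor:rhos-cases}.
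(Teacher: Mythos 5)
Your proof is correct and follows essentially the same route as the paper: closedness from the Fatou property of Theorem~\ref{thr:fatou-det-K} applied to a constant sequence, a dual representation of the form $\rhosel$ for the resulting Fatou risk measure, and then the argument of Corollary~\ref{cor:rhos-cases} (testing with $\xi+K$ for acceptable $\xi$ supported on one coordinate) to force $\zeta_i\in\sZ_i$. The only difference is cosmetic: the paper obtains the dual representation by citing the $p=\infty$ results of Hamel--Heyde--Rudloff and Kulikov, whereas you invoke the paper's own Theorem~\ref{thr:dual-sets} restricted to the conical subclass --- an identification the paper itself makes in the example following that theorem, so your flagged concern about restricting to $\{X+K\}$ is resolved exactly as you anticipate.
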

\begin{proof}
  It is shown in \cite{ham:hey:rud11} and \cite{kul08} that a
  set-valued risk measure with $p=\infty$ satisfying the Fatou
  property has the dual representation as $\rhosel$, and so $\rhos$
  admits exactly the same dual representation. By repeating the
  argument from the proof of Corollary~\ref{cor:rhos-cases}, it is
  seen that $\zeta_i\in\sZ_i$.
\end{proof}

It should be noted that not all risk measures $\rhosel$ are selection
risk measures, and so the acceptability of $\X$ under $\rhosel$ does
not immediately imply the existence of an acceptable selection and so
does not guarantee the existence of a trading strategy that eliminates
the risk.  While the calculation of the risk measure $\rhosel$ may be
rather complicated, its primal representation \eqref{eq:union} as the
selection risk measure opens a possibility for an approximation of its
values from above by exploring selections of $X+K$.

\section{Computation and approximation of risk measures}
\label{sec:comp-appr-risk}

The evaluation of $\rhos(\X)$ involves calculation of $\vecrisk(\xi)$
for all selections $\xi\in\sel^p(\X)$. The family of such selections
is immense, and in application only several possible selections can be
considered. A wider choice of selections is in the interest of the
agent in order to better approximate $\rhos(\X)$ from
above and so reduce the required capital reserves.  For lower bounds,
one can use the risk measure $\rhosel(\X)$ or its superset obtained by
restricting the family $\sZb$, e.g. $\rhoselnot(\X)$ if $\vecrisk$ has
all identical components.

In view of \eqref{eq:union} and the convexity of its values,
$\rhos(\X)$ contains the convex hull of the union of
$\vecrisk(\xi)+\R_+^d$ for any collection of selections
$\xi\in\sel^p(\X)$. It should be noted that this convex hull is not
subadditive in general (unless the family of selections builds a cone)
and so itself cannot be used as a risk measure, while providing a
reasonable approximation for it. It is possible to start with some
``natural'' selections $\xi'$ and $\xi''$ and consider all their
convex combinations or combine them as $\xi' 1_A+\xi'' 1_{A^c}$ for
events $A$.

For the deterministic exchange cone model $\X=X+K$, it is sensible to
consider selections $X+t\eta$ for all $t\geq 0$ and a selection $\eta$
taking values from the boundary of $K$. By Theorem~\ref{thr:dil-mon},
it suffices to work with selections of $K$ which are functions of $X$,
and write them as $\eta(X)$. Since the aim is to minimise the risk, it
is natural to choose $\eta$ which is a sort of ``countermonotonic''
with respect to $X$, while choosing comonotonic $X$ and $\eta$ does
not yield any gain in risk for their sum.

Assume that the components of $\vecrisk$ are expected shortfalls at
level $\alpha$.  Then in order to approximate $\rhos$, it is possible
to use a ``favourable'' selection $\eta_*\in K$ constructed by
projecting $X$ onto $K$ following the two-step procedure. First, $X$
is translated by subtracting the vector of univariate
$\alpha$-quantiles in order to obtain random vector $Y$ whose
univariate $\alpha$-quantiles are zero. Then $Y$ is projected onto the
boundary of the solvency cone $\check{K}$ and $\eta_*(X)$ is defined
as the opposite of such projection. If $Y$ belongs to the centrally
symmetric cone to $K'$, it is mapped to the origin and no compensation
will be applied in this case. Consider all selections of the form
$X+t\eta_*(X)$ for $t>0$. With this choice, the components of $X$
assuming small values are partially compensated by the remaining
components.

An alternative procedure is to modify the projection rule, so that
only some part of the boundary of the solvency cone is used for
compensation. In dimension $d=2$, $\eta_{1}$ and $\eta_{2}$ are
defined by projecting $Y$ onto either one of the two half-lines that
form the boundary of $\check{K}$.

\begin{example}
  \label{ex:non-margin}
  Consider the random vector $X$ taking values $(-2,4)$ and $(4,-2)$
  with equal probabilities. Let $K$ be the cone with the points
  $(-5,1)$ and $(1,-5)$ on its boundary, so the corresponding bid-ask
  matrix has entries $\pi^{(12)}=\pi^{(21)}=5$, see
  Example~\ref{ex:bm-cone}.  Assume that $\vecrisk$ consists of two
  identical components being the expected shortfall ${\rm ES}_\alpha$
  at level $\alpha=3/4$. Observe that ${\rm ES}_\alpha(X)=(0,0)$ and
  for
  \begin{displaymath}
    \eta_1=
    \begin{cases}
      (1.2,-6)&\textrm{ if }X=(-2,4)\,, \\
      (0,0)&\textrm{ if }X=(4,-2)\,,
    \end{cases}
    \quad\quad
    \eta_2=
    \begin{cases}
      (0,0)&\textrm{ if }X=(-2,4)\,, \\
      (-6,1.2)&\textrm{ if }X=(4,-2)\,,
    \end{cases}
  \end{displaymath}
  we obtain ${\rm ES}_\alpha(X+\eta_1)=(-0.8,2)$ and ${\rm
    ES}_\alpha(X+\eta_2)=(2,-0.8)$.

  By Theorem~\ref{thr:dil-mon}, the boundary of $\rhos(X+K)$ is given
  by $\vecrisk(X+\eta(X))$ with $\eta(X)$ belonging to the boundary of
  $K$. In order to compensate the risks of $X$ it is natural to choose
  $\eta$ as function of $X$ such that $\eta(-2,4)=t(-5,1)$ or
  $\eta(-2,4)=t(1,-5)$ and $\eta(4,-2)=s(-5,1)$ or
  $\eta(-2,4)=s(1,-5)$ for $t,s>0$. The minimisation problem over $t$
  and $s$ can be easily solved analytically (or numerically) and
  yields the boundary of $\rhos(X+K)$.  In Figure~\ref{fig:srisk}, the
  boundary of $\rhoselnot(X+K)$ is shown as dashed line, the boundary
  of $\vecrisk(X)+\check{K}$ is dotted line, while $\rhos(X+K)$ is the
  shaded region.

  \begin{figure}[h!]
    \begin{center}
      \includegraphics[scale=.45,angle=-90]{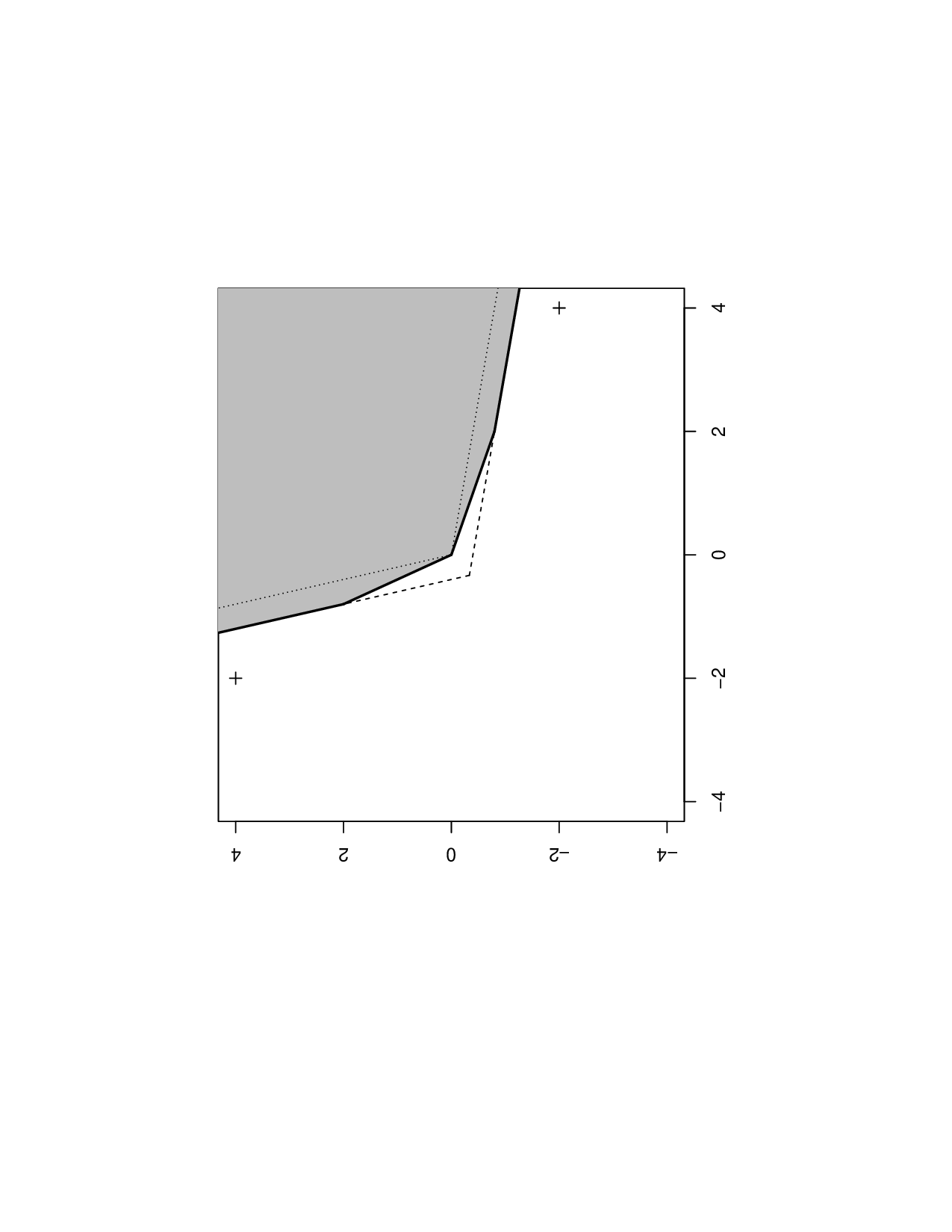}
      \caption{Two values for $X$, selection risk measure (shaded
        region) and its bounds from below and from above for
        Example~\ref{ex:non-margin}.\label{fig:srisk}}
    \end{center}
  \end{figure}
\end{example}

\begin{example}
  \label{ex:norm}
  Consider the cone $K$ with the points $(-1.5,1)$ and $(1,-1.5)$ on
  its boundary, so the corresponding bid-ask matrix has entries
  $\pi^{(12)}=\pi^{(21)}=1.5$. Let $\vecrisk=({\rm ES}_{0.05},{\rm
    ES}_{0.05})$ consist of two identical components. Let $X^{(n)}$
  follow the empirical distribution for a sample of $n$ observations
  from the bivariate normal distribution with i.i.d. components of
  mean $0.5$ and variance $1$.

  In order to approximate $\rhos(X^{(n)}+K)$ from above
  (i.e. construct a subset of $\rhos(X^{(n)}+K)$), we first determine
  the set $A$ consisting of $\vecrisk(\xi)$ for the selections $\xi$
  of $X+K$ obtained as $X^{(n)}+t\eta_*(X^{(n)})$,
  $X^{(n)}+t\eta_{1}(X^{(n)})$, and $X^{(n)}+t\eta_{2}(X^{(n)})$, for
  $t>0$ and $\eta_*$, $\eta_{1}$, $\eta_{2}$ described above using
  projection on $\check{K}$ and the two half-lines from its
  boundary. Then we determine the convex hull of $A$ and the points
  $x_1$ and $x_2$ described in the proof to
  Proposition~\ref{prop:meetk} and finally add $\check{K}$ to this
  convex hull.

  Figure~\ref{fig:srisknorm}(a) shows a sample of $n=1000$
  observations of a standard bivariate normal distribution and the
  approximation to the true value of $\rhos(X^{(n)}+K)$ described above,
  on the right panel a detail of the same plot is presented. The
  constructed approximation to $\rhos(X^{(n)}+K)$ is the shaded region,
  the boundary of $\rhoselnot(X^{(n)}+K)$ obtained as described in
  Proposition~\ref{prop:ident-comp-not} is plotted as dashed line, and
  the boundary of $\vecrisk(X^{(n)})+\check{K}$ is plotted as dotted line.

  \begin{figure}[h!]
    \begin{center}
      \begin{tabular}{cc}
        \subfigure[]{\includegraphics[scale=.45,angle=-90]{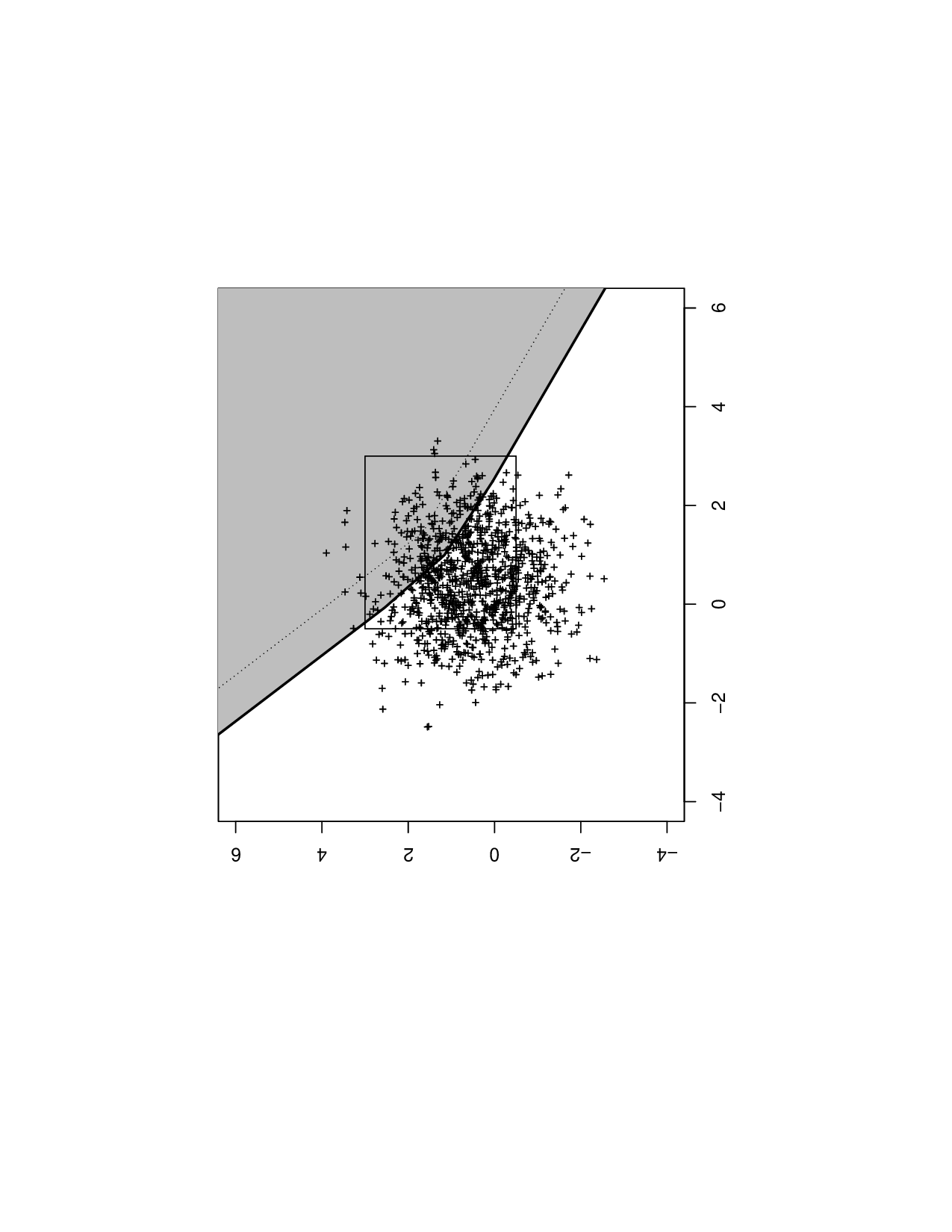}} &
        \subfigure[]{\includegraphics[scale=.45,angle=-90]{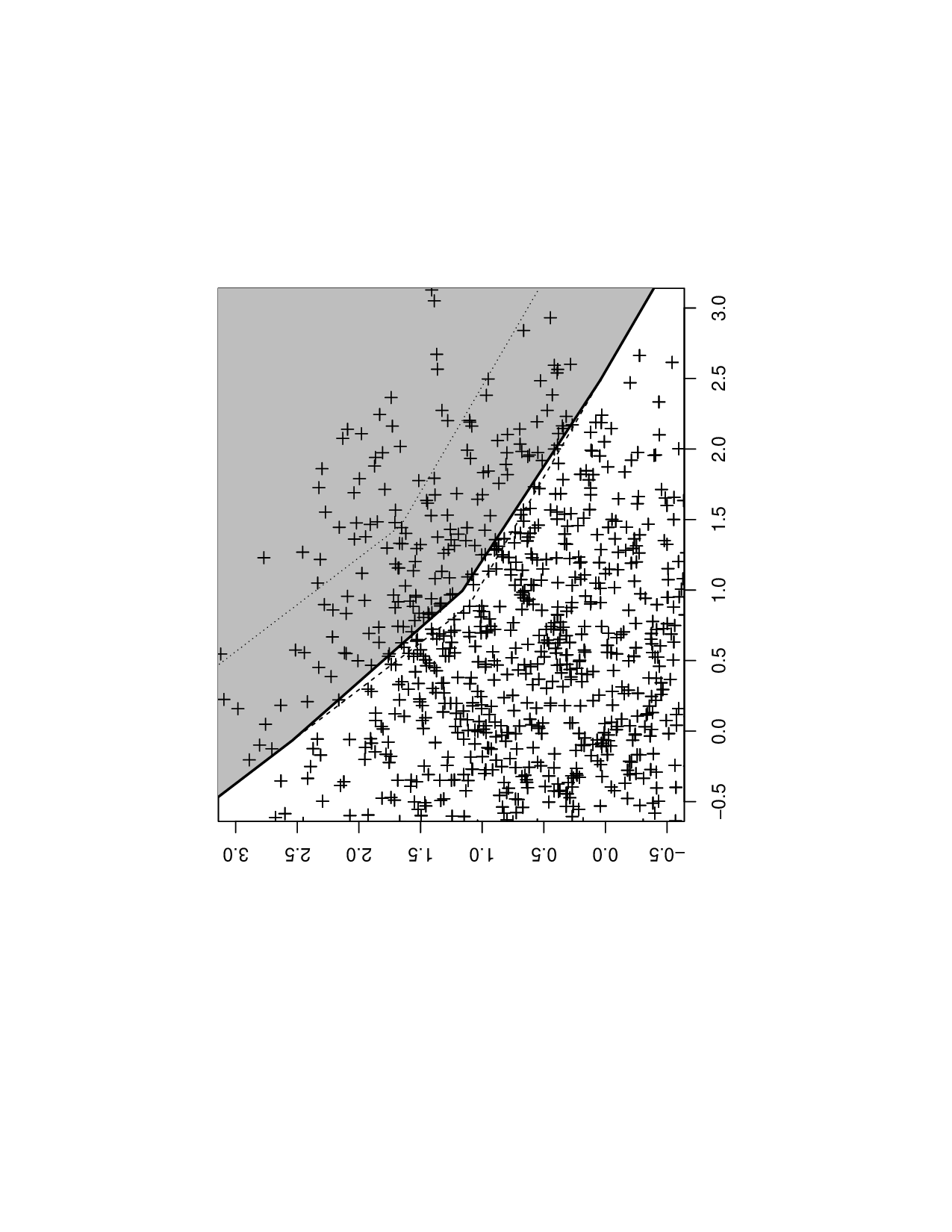}}
      \end{tabular}
      \caption{(a) An approximation to the selection risk measure and
        its bounds for a sample of normally distributed gains and a
        deterministic exchange cone; (b) enlarged part of the
        plot.\label{fig:srisknorm}}
    \end{center}
  \end{figure}
\end{example}

\begin{example}
  \label{ex:two-currencies}
  Consider the random frictionless exchange of two currencies
  described in Examples~\ref{ex:intro}
  and~\ref{ex:half-random}. Transactions that diminish the risk of
  $X=(X_1,X_2)$ can be constructed by projecting $X$ onto the boundary
  line to half-plane $\K$ with normal $(\pi,1)$ and then subtracting
  the scaled projection from $X$. This leads to the family of
  selections $X+t\eta_*$ for $t\geq 0$ and $\eta_*\in\K$ given by
  \begin{equation}
    \label{eq:ex_frictionless}
    \eta_*=-\left(\frac{X_1-\pi X_2}{1+\pi^2},\frac{\pi^2 X_2-\pi
      X_1}{1+\pi^2}\right)\,.
  \end{equation}
  Assume that the exchange rate $\pi=\pi^{(21)}$ is log-normally
  distributed with mean $\pi_0$ being the initial exchange rate and
  volatility $\sigma$. We approximate the risk of $X+\K$ with $X$
  having a bivariate standard normal distribution and independent of
  $\pi$ (equivalently independent of $\K$) for $\vecrisk$ with two
  identical components being ${\rm ES}_{\alpha}$. Observe that ${\rm
    ES}_{\alpha}(\pi)=-\alpha^{-1}\Phi(\Phi^{-1}(\alpha)-\sigma)\pi_0$
  and ${\rm ES}_{\alpha}(1/\pi)=-\alpha^{-1}
  e^{\sigma^2}(1-\Phi(\Phi^{-1}(1-\alpha)+\sigma))\pi_0^{-1}$, where
  $\Phi$ is the cumulative distribution function of a standard normal
  random variable and $\Phi^{-1}$ its quantile function.

  Fix $\alpha=0.05$, $\pi_0=1.5$ and the volatility $\sigma=0.4$. The
  bounds on $\rhos(\K)$ are given by two cones $C_1$ and $C_2$ whose
  boundaries are determined in Example~\ref{ex:half-random}, see
  Figure~\ref{fig:pure-half}. Namely $C_1$ is bounded by half-lines
  with slopes ${\rm ES}_{0.05}(\pi)=-0.613$ and $1/{\rm
    ES}_{0.05}(1/\pi)=-3.123$, while the half-lines determining $C_2$
  have slopes ${\rm VaR}_{0.05}(\pi)=-0.717$ and ${\rm
    VaR}_{0.95}(\pi)=-2.674$, see Example~\ref{ex:half-random}. 

  \begin{figure}
    \centering
    \includegraphics[scale=.45,angle=-90]{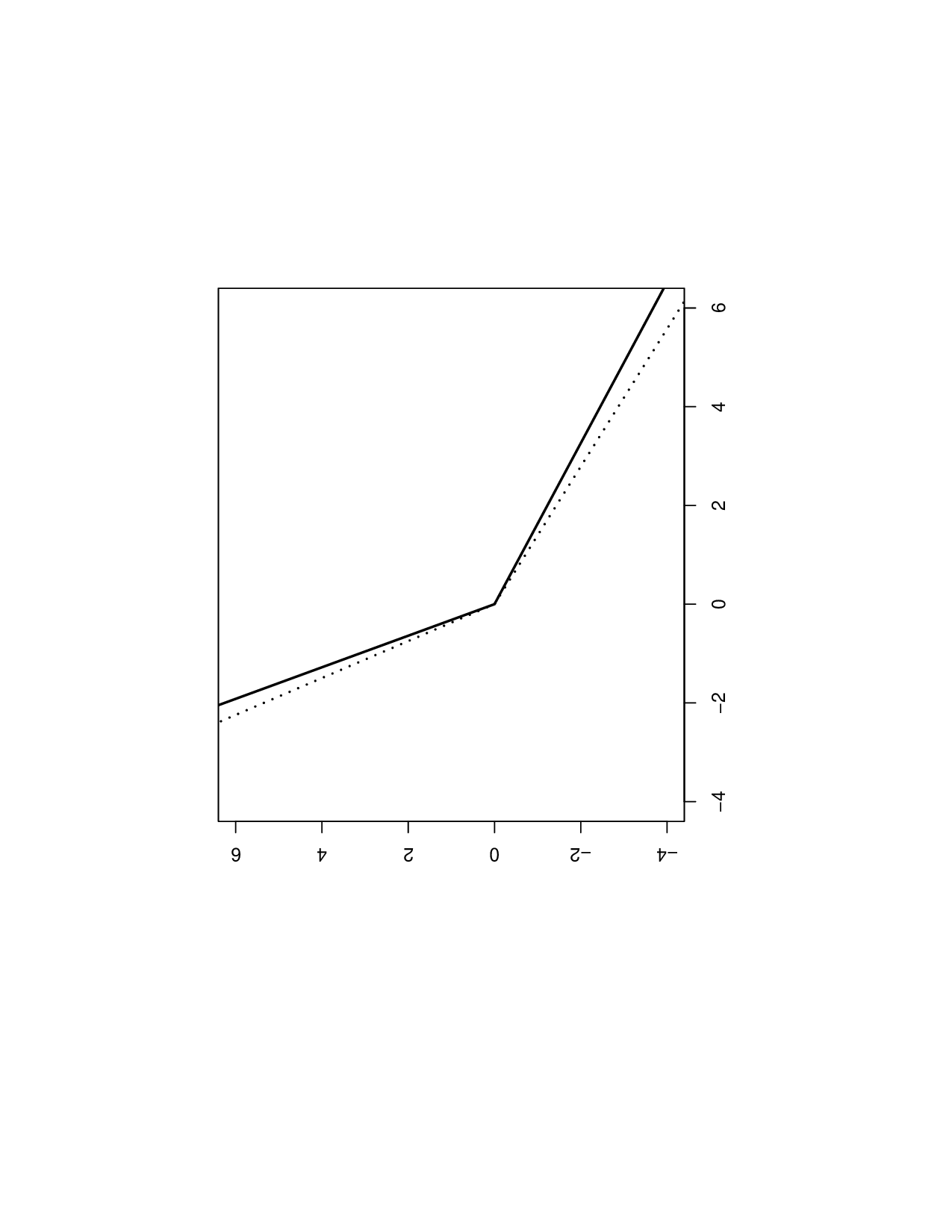}
    \caption{The boundaries of two cones $C_1$ and $C_2$ such that
      $C_1\subset\rhos(\K)\subset C_2$.
      \label{fig:pure-half}}
  \end{figure}

  In order to approximate the risk of $X+\K$, we take a sample of
  $n$ observations of $(X,\pi)$. Denote by $(X^{(n)},\pi^{(n)})$ a random
  vector whose distribution is the empirical distribution of the
  sample of $(X,\pi)$ and by $\K^{(n)}$ a random half-space with normal
  $(\pi^{(n)},1)$. Note that the empirical distribution of the exchange
  rate is bounded away from the origin and infinity.

  Figure~\ref{fig:srisknormhalf}(a) shows a sample of $n=1000$
  observations of a standard bivariate normal distribution and an
  approximation to the true value of $\rhos(X^{(n)}+\K^{(n)})$
  obtained as the sum of the convex hull of the risks of the
  selections of $X^{(n)}+\K^{(n)}$ described in
  (\ref{eq:ex_frictionless}) and $C_1$ (solid line). The boundary of
  $\vecrisk(X^{(n)})+C_2$ is shown as dotted line. The tip of the
  solid cone is approximately at
  $(\mathrm{ES}_{0.05}(\xi_1),\mathrm{ES}_{0.05}(\xi_2))=(1.1085,0.829)$,
  where $\xi=X+\eta_*$ is the selection from \eqref{eq:sel-p}. Since
  the tip of the cone is the tangent point to
  $\rhos(X^{(n)}+\K^{(n)})$ in direction $-(0,\pi_0)$, it yields the
  minimal initial capital requirement of $1.661$ units of the first
  currency, as mentioned in Example~\ref{ex:intro}.

  \begin{figure}[h!]
    \begin{center}
      \begin{tabular}{cc}
        \subfigure[]{\includegraphics[scale=.45,angle=-90]{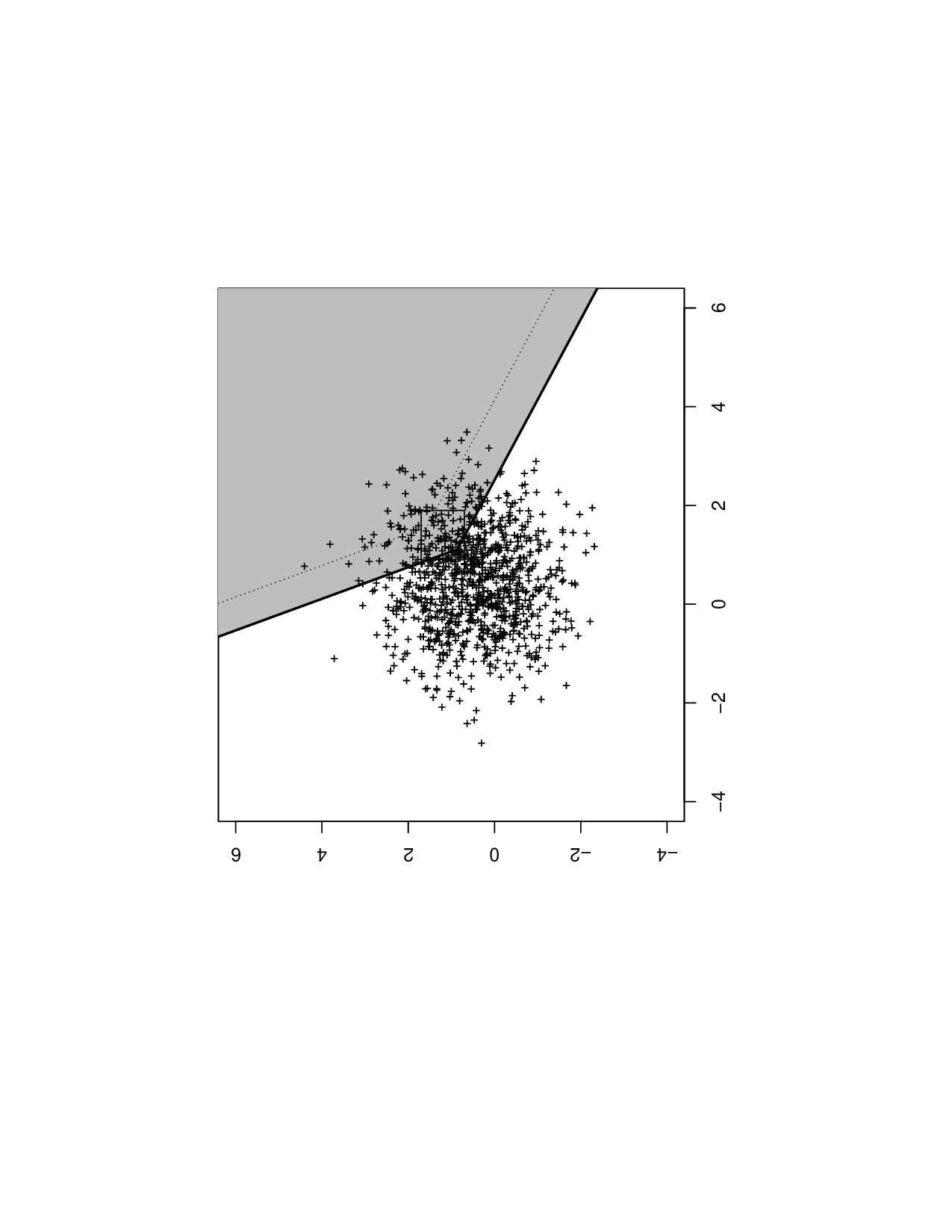}}
        &
        \subfigure[]{\includegraphics[scale=.45,angle=-90]{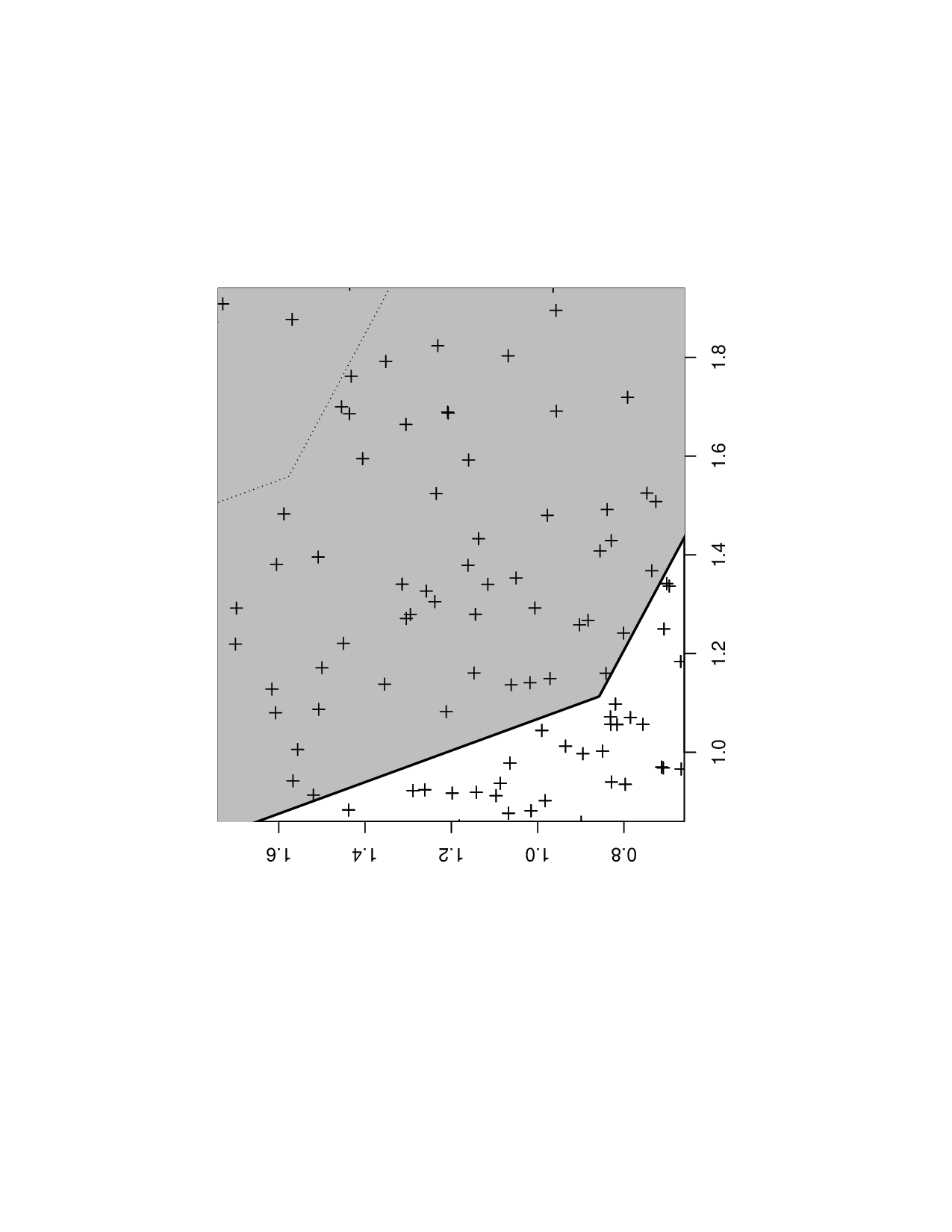}}
      \end{tabular}
      \caption{(a) An approximation to the selection risk measure and
        its upper bound for a normal sample in a random frictionless
        exchange case; (b) enlarged part of the
        plot.\label{fig:srisknormhalf}}
    \end{center}
  \end{figure}
\end{example}

\begin{example}
  \label{ex:restricted-liquidity}
  Consider the restricted liquidity situation from
  Example~\ref{ex:illiquid}. Let $\X=X+(\K\cap ((1,1)+\R_-^2))$, where
  $X$ follows a bivariate standard normal distribution and $\K$ is the
  half-plane as in Examples~\ref{ex:intro}
  and~\ref{ex:two-currencies}. Assume that $\vecrisk=({\rm
    ES}_{0.05},{\rm ES}_{0.05})$. Transactions that diminish the risk
  of $\X$ can be constructed by projecting $X=(X_1,X_2)$ onto the
  boundary line to half-plane $\K$ with normal $(\pi,1)$ and then
  subtracting the projection from $X$. If the obtained point lies out
  of the line segment with end-points $X+(1,-\pi)$ and $X+(-1/\pi,1)$,
  we take the nearest of the two end-points. This leads to a selection
  $\eta$ of $\X$ given by \eqref{eq:eta-intro}. 
  Other relevant selections of $\X$ are $X+(1,-\pi)$ and
  $X+(-1/\pi,1)$.

  In order to approximate the risk of $\X$, we take a sample of
  $n=1000$ observations of $(X,\pi)$. Denote by $(X^{(n)},\pi^{(n)})$
  a random vector whose distribution is the empirical distribution of
  the sample, by $\K^{(n)}$ a random half-space with normal
  $(\pi^{(n)},1)$, and let $\X^{(n)}=X^{(n)}+(\K^{(n)}\cap ((1,1)+\R_-^2))$.

  Figure~\ref{fig:srisknormhalflim}(a) shows a sample of $n=1000$
  observations of $X$ and an approximation to the true value of
  $\rhos(\X^{(n)})$ obtained by calculating risks of all convex
  combinations of the selection $\eta$ of $\X_n$ defined by
  \eqref{eq:eta-intro} and the selections $X^{(n)}+(1,-\pi^{(n)})$,
  $X^{(n)}+(-1/\pi^{(n)},1)$. The shaded region with boundary plotted
  as a dotted line is $\E\check{\X}^{(n)}$, the reflected selection
  expectation of $\X^{(n)}$. The boundary of $\rhoselnot(\X^{(n)})$ is
  plotted as a dashed line, and the boundary of
  $\vecrisk(X^{(n)})+\rhos(\K^{(n)}\cap(\R_-^2+(1,1)))$ as a dash-dot
  line.  The point $({\rm ES}_{0.05}(\eta_1),{\rm
    ES}_{0.05}(\eta_2))=(1.125,0.915)$ belongs to the solid line and
  yields the capital reserves of $1.735$ mentioned in
  Example~\ref{ex:intro}. 

  \begin{figure}[h!]
    \begin{center}
      \begin{tabular}{cc}
        \subfigure[]{\includegraphics[scale=.45,angle=-90]{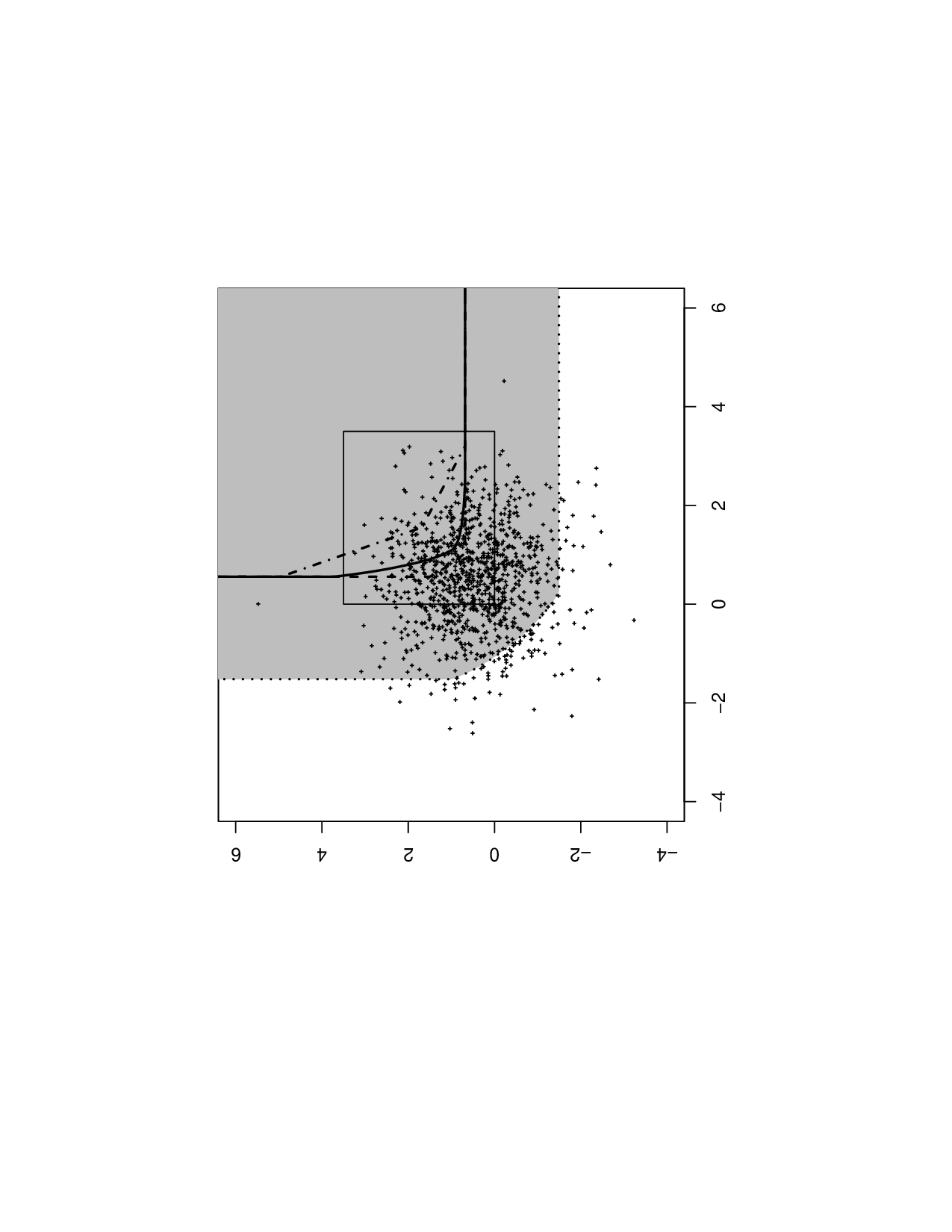}}
        &
        \subfigure[]{\includegraphics[scale=.45,angle=-90]{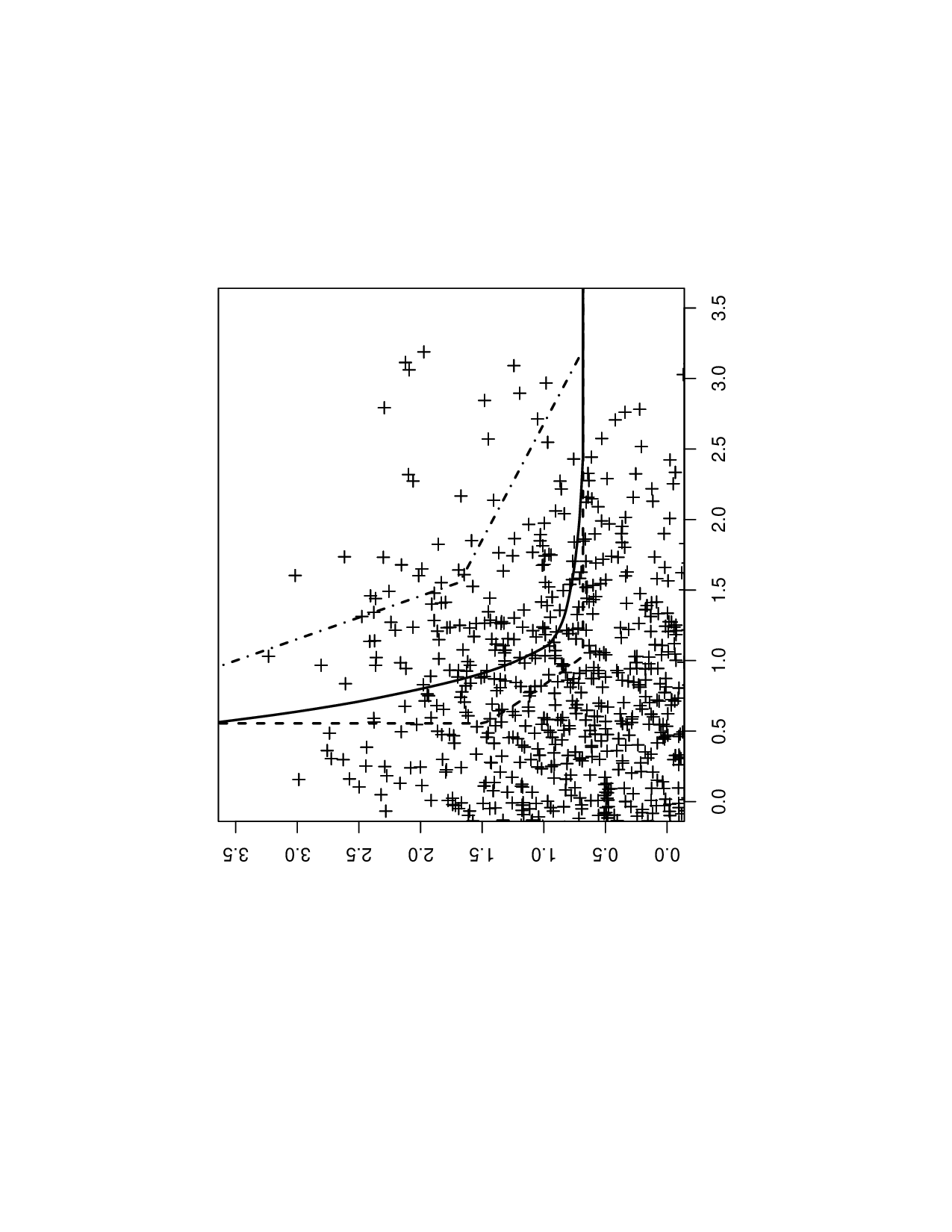}}
      \end{tabular}
      \caption{(a) An approximation to the selection risk measure and
        bounds for a normal sample in a random frictionless exchange
        case with restricted liquidity; (b) enlarged part of the
        plot.\label{fig:srisknormhalflim}}
    \end{center}
  \end{figure}
\end{example}


\newcommand{\noopsort}[1]{} \newcommand{\printfirst}[2]{#1}
  \newcommand{\singleletter}[1]{#1} \newcommand{\switchargs}[2]{#2#1}

\end{document}